\documentclass[11pt]{article}
\usepackage[margin=1in]{geometry}
\usepackage[T1]{fontenc}
\usepackage{lmodern,microtype}
\usepackage{amsmath,amssymb,amsthm}
\usepackage{booktabs,tabularx}
\usepackage{algorithm,algpseudocode}
\usepackage[authoryear,round]{natbib}
\defcitealias{MS25}{MS25}
\defcitealias{MM26}{MM26}
\usepackage[hidelinks]{hyperref}
\makeatletter
\providecommand{\theHALG@line}{}
\renewcommand{\theHALG@line}{\thealgorithm.\arabic{ALG@line}}
\makeatother
\usepackage[nameinlink,noabbrev,capitalise]{cleveref}
\usepackage{fancyhdr}
\newtheorem{theorem}{Theorem}
\newtheorem{lemma}[theorem]{Lemma}
\newtheorem{proposition}[theorem]{Proposition}
\newtheorem{corollary}[theorem]{Corollary}
\crefname{theorem}{Theorem}{Theorems}
\crefname{appendix}{Appendix}{Appendices}
\Crefname{appendix}{Appendix}{Appendices}
\crefname{lemma}{Lemma}{Lemmas}
\Crefname{lemma}{Lemma}{Lemmas}
\crefname{proposition}{Proposition}{Propositions}
\Crefname{proposition}{Proposition}{Propositions}
\crefname{corollary}{Corollary}{Corollaries}
\Crefname{corollary}{Corollary}{Corollaries}
\crefname{algorithm}{Algorithm}{Algorithms}
\Crefname{algorithm}{Algorithm}{Algorithms}
\algrenewcommand{\algorithmicrequire}{\textbf{Input:}}
\algrenewcommand{\algorithmicensure}{\textbf{Output:}}
\hypersetup{pdftitle={Differentially Private Multicolor Discrepancy and Fair Division of Indivisible Goods},
  pdfauthor={Max Dupr\'{e} la Tour}}
\newcommand{\E}{\mathbb E}
\newcommand{\Prob}{\mathbb P}
\newcommand{\one}{\mathbf 1}
\newcommand{\norm}[1]{\left\lVert#1\right\rVert}
\title{Differentially Private Multicolor Discrepancy\\
and Fair Division of Indivisible Goods}
\author{Max Dupr\'{e} la Tour\\[0.4em]
  \normalsize RIKEN Center for Advanced Intelligence Project\\
  \normalsize The University of Tokyo}
\date{}
\begin{document}
\maketitle

\begin{abstract}
We study the fair division of indivisible goods under pure differential privacy, continuing the line of work initiated by \citet{MS25}. For $n$ agents with nonnegative additive utilities over $m$ goods and a fixed privacy parameter, we give an entry-private algorithm that, with high probability, achieves consensus envy-freeness up to $O(\sqrt n+\log^3 m)$ goods. This substantially improves the dependence on $n$ over the previous $O(n\log m)$ guarantee for ordinary envy-freeness, while providing the stronger consensus guarantee. A key ingredient is a private algorithm for multicolor discrepancy, which may be of independent interest. Our algorithm may require exponential time.

We also obtain substantially stronger guarantees under additional structure: when all item values belong to a public alphabet of size $D$, we give a polynomial-time entry-private algorithm achieving ordinary envy-freeness up to $O(\operatorname{polylog}(mD))$ goods with high probability.

Finally, we prove an $\Omega(\log n)$ lower bound on the number of goods that must be removed to achieve ordinary envy-freeness under entry privacy, for sufficiently many goods, even with binary utilities. 
\end{abstract}

\section{Introduction}
Fair division studies how to allocate resources among agents with
different preferences. In recent years, considerable
attention has been devoted to indivisible goods, for which natural fairness
requirements must be reconciled with the impossibility of splitting
individual items \citep{AABFLMVW23}.

A standard fairness criterion is \emph{envy-freeness}: each agent
values her own bundle at least as much as any other agent's bundle.
An envy-free allocation need not exist when goods are indivisible,
even with two agents and a single good. This motivates
\emph{envy-freeness up to one good} (EF1), which requires that any
envy disappear after removing one good from the envied bundle.
Unlike exact envy-freeness, EF1 is always attainable for monotone
valuations, using the envy-cycle elimination algorithm of
\citet{LMMS04}. More generally, EF$c$ permits the removal of at most
$c$ goods.

These guarantees do not account for the privacy of agents' preferences.
Even if the reported valuations remain confidential, the allocation
itself may disclose information about them. Differential privacy
limits such disclosure by requiring that the distribution of the
published output change only slightly when protected information is
modified \citep{DMNS06}. In fair division, this raises the question
of how much the fairness guarantee must deteriorate to protect
agents' valuations.

\citet{MS25} initiated the study of differentially private fair
division under two notions of privacy. \emph{Entry-level privacy}
protects information about one good in one agent's valuation, whereas
\emph{row-level privacy} protects an agent's entire valuation. Both
notions protect against information revealed by the complete
allocation.

Under entry-level privacy, they gave an $\varepsilon$-differentially
private algorithm achieving EF$c$ with $c=O(n\log m/\varepsilon)$
for $n$ agents with monotone valuations over $m$ goods. Their
algorithm produces connected bundles, meaning intervals in a fixed
ordering of the goods. They also established an
$\Omega(\log m/\varepsilon)$ lower bound for sufficiently many goods.
This lower bound requires connected allocations and does not apply when
arbitrary bundles are allowed.

Under row-level privacy, they proved the lower bound
$c=\Omega\bigl(\min\{m/n,\sqrt{(m/n)\log n}\}\bigr)$,
while their upper bound was the trivial $O(m/n)$ guarantee obtained
by fixing a partition into bundles of nearly equal cardinality
\citep{MS25}.

\paragraph{Consensus fairness and multicolor discrepancy.}
For nonnegative additive valuations, we obtain substantially stronger
entry-private guarantees. Our main result achieves \emph{consensus}
EF$c$: according to every agent, every bundle is worth at least as
much as every other bundle after removing at most $c$ goods from the
latter. Consequently, the allocation is EF$c$ under every assignment
of bundles to agents. Our algorithm achieves
\[
 c=O\!\left(\sqrt n+\frac{\log^3m}{\varepsilon}\right)
\]
with high probability (\cref{thm:real-fairness}). This improves the
dependence on $n$ in the previous entry-private bound while
strengthening the fairness requirement. Moreover, consensus fairness
can require $\Omega(\sqrt n)$ deletions even without privacy
\citep{MM26}. Our bound therefore attains the worst-case nonprivate
benchmark up to an additive polylogarithmic privacy term.

The main technical ingredient is a private multicolor discrepancy
bound, which may be of independent interest. Given a matrix
$V\in[0,1]^{n\times m}$, multicolor discrepancy asks for a partition
of its columns into $k$ bundles whose sums are close in every row.
We give an entry-private algorithm attaining
$\operatorname{Disc}_k(V,A)=O(\sqrt n+\varepsilon^{-1}\log^3m)$
for every $2\le k\le m$ (\cref{thm:main}). For two colors, the 
algorithm adapts the partial-coloring and entropy method of \citet{Spe85}. 
We extend this to $k$ colors using the recursive reduction of \citet{DS03}.
Finally, we derive the consensus-fairness result by adapting the 
discrepancy-to-fairness reduction of \citet{MS22}.

\paragraph{Improved bounds given a public alphabet.}
For ordinary envy-freeness, we obtain a polynomial-time algorithm
under the additional assumption that item values belong to a public
finite alphabet. If the alphabet has size $D\ge2$, the algorithm
achieves entry-private EF$c$ with
\[
 c=O\!\left(
 \frac{\operatorname{polylog}(mD)}{\varepsilon}
 \right).
\]
The dependence on the alphabet is only through its size, with no
restrictions on the magnitudes or spacing of its values. 

\paragraph{Lower bound for binary utilities.}
Ordinary envy-freeness nevertheless incurs an unavoidable cost under
entry-level privacy. We prove an $\Omega(\log n/\varepsilon)$ lower
bound for sufficiently many goods, even for binary valuations and
without any connectivity restriction (\cref{thm:ef-lower}). Thus, at
a fixed privacy parameter and a high constant success probability,
the number of permitted deletions cannot be bounded independently
of $n$ and $m$.
This answers a question posed by \citet{MS25}.

\paragraph{Row-level privacy.}
Finally, we close the gap under row-level privacy by choosing either
a fixed balanced partition or a uniformly random allocation,
depending on the number of goods. This algorithm is
$0$-differentially private and achieves
$c=O\bigl(\min\{m/n,\sqrt{(m/n)\log n}\}\bigr)$, matching the
lower bound while providing the stronger consensus guarantee
(\cref{thm:row-fairness}).

\begin{table}[H]
\centering
\small
\setlength{\tabcolsep}{5pt}
\renewcommand{\arraystretch}{1.2}
\begin{tabularx}{\linewidth}{@{}>{\raggedright\arraybackslash}p{0.19\linewidth}>{\raggedright\arraybackslash}p{0.43\linewidth}>{\raggedright\arraybackslash}X@{}}
\toprule
Fairness & Upper bound & Lower bound \\
\midrule
\multicolumn{3}{@{}l}{\textbf{Entry-level $\varepsilon$-DP}} \\
\addlinespace[5pt]
EF$c$ + Connected
& \mbox{$O(n\log m)$ {\footnotesize(\citetalias{MS25}, Thm.~4.1)}}
& \mbox{$\Omega(\log m)$ {\footnotesize(\citetalias{MS25}, Thm.~4.9)}} \\
\addlinespace[7pt]
EF$c$
& \mbox{$\displaystyle O\!\left(\sqrt n+\log^3m\right)$
  {\footnotesize(\Cref{thm:real-fairness})}}
& \mbox{$\Omega(\log n)$ {\footnotesize(\Cref{thm:ef-lower})}} \\
\addlinespace[7pt]
Consensus EF$c$
& \mbox{$\displaystyle O\!\left(\sqrt n+\log^3m\right)$
  {\footnotesize(\Cref{thm:real-fairness})}}
& \mbox{$\Omega(\sqrt n)$ {\footnotesize(\citetalias{MM26}, Cor.~4.1)}}
  \newline {\footnotesize even without privacy} \\
\addlinespace[7pt]
EF$c$ + Public alphabet of size $D$
& $\displaystyle O\!\left(\operatorname{polylog}(mD)\right)$ {\footnotesize(\Cref{thm:alphabet-ef})}
& \mbox{$\Omega(\log n)$ {\footnotesize(\Cref{thm:ef-lower})}}
  \newline {\footnotesize even for $D=2$} \\
\midrule
\multicolumn{3}{@{}l}{\textbf{Row-level $\varepsilon$-DP}} \\
\addlinespace[5pt]
EF$c$ or\newline Consensus EF$c$
& \mbox{$\displaystyle O\!\left(\sqrt{\frac mn\log n}\right)$
  {\footnotesize(\Cref{thm:row-fairness}; $0$-DP)}}
& \mbox{$\displaystyle \Omega\!\left(\sqrt{\frac mn\log n}\right)$
  {\footnotesize(\citetalias{MS25}, Thm.~3.1)}} \\
\bottomrule
\end{tabularx}
\caption{Deletion bounds for $n\ge2$ agents, $k=n$ bundles, and
$m\ge n^2\log n$ goods.
Fix a small constant $0<\varepsilon\le1$ and assume
$n^{-a}\le\beta<e^{-\varepsilon}/200$ for fixed $a>0$;
upper bounds succeed with probability at least $1-\beta$.
Implicit constants may depend on $\varepsilon$ and $a$.
Utilities are nonnegative and additive, with arbitrary values except
in the public-alphabet row, where the numerical alphabet is fixed in
advance, contains zero, and has arbitrary size $D\ge2$.
The connected upper bound also allows arbitrary monotone utilities.
The lower bounds use binary utilities, rescaled if needed to lie
in the public alphabet.}
\label{tab:results}
\end{table}

\subsection{Technical overview}

\paragraph{Private multicolor discrepancy.}
We first solve the two-color problem, following
Spencer's partial-coloring method \citep{Spe85}. The key is to find
two sign vectors whose row sums are close but which disagree on many
coordinates. Subtracting them cancels the coordinates where they
agree and assigns a color to the others. This gives a \emph{partial
coloring} with small discrepancy. An entropy argument shows that
there are enough suitable pairs to color a constant fraction of the
remaining columns with high probability.

To make this construction private, we first discard pairs that
disagree on too few coordinates, using a test independent of the
matrix. Among the remaining pairs, we favor close row sums. A strict
cutoff on row differences would be problematic: changing one matrix
entry could turn a possible pair into an impossible one. We therefore
give each remaining pair a positive weight, with large row differences
penalized exponentially. The weights change gradually enough under
an entry change to give privacy. We show that sufficiently many pairs
pass the progress test and that large discrepancies remain unlikely,
even after accounting for all rows simultaneously.

These partial colorings let us gradually round fractional assignments
to complete two-way splits. We then split the columns recursively
until there are $k$ bundles, following \citet{DS03}. Early splits
receive smaller privacy budgets because their errors are spread
among more final bundles. Together, the rounding and privacy schedule
preserve the $O(\sqrt n)$ nonprivate term, with an additional
polylogarithmic privacy error proportional to $1/\varepsilon$.

\paragraph{From discrepancy to private consensus fairness.}
A small utility gap may require many deletions if each good is worth
very little. Following \citet{MS22}, we therefore balance two
quantities for each agent: the number of her most valuable goods,
called \emph{head goods}, and the utility of the remaining goods,
measured in units of the smallest head value. We choose enough head
goods that every bundle receives a small reserve of valuable goods.
To remove envy of a bundle, delete its head goods. Its remaining
utility is close to the non-head utility of every other bundle, and
the other bundle's reserve covers any shortfall. Balancing head
counts also bounds the number of deletions.

The obstacle to privacy is the choice of units. Changing one utility
can change the smallest head value, rescaling the entire row of
remaining utilities. Thus a change to one entry in the original
input can change many entries in the matrix we want to balance.

We address this inside the partial-coloring sampler. At each rounding
step, we order the still-fractional non-head goods by value and choose
a noisy cutoff rank near the top of this list. We divide the utilities
below the cutoff by its value, temporarily omitting the goods above
and at the cutoff from this auxiliary row. Only a few goods are
omitted, so their contribution can be covered by a small additional
balancing error.

A single utility change shifts the other goods' ranks only slightly,
even when the cutoff values change greatly. Averaging over nearby
ranks, together with the design of the sampler's weights, keeps the
coloring probabilities stable under this change. Both the rank and
the cutoff value remain hidden. The same rounding and recursive
splitting then balance the two quantities needed for consensus
fairness.

\paragraph{Polynomial-time envy-freeness over a public alphabet.}
We adapt the envy-cycle elimination algorithm of \citet{LMMS04}
using lower estimates of bundle values that never decrease and whose
error is covered by deleting a few goods. The last insertion into each bundle converts
this estimate guarantee into ordinary envy-freeness with one
additional deletion.
When utilities come from a public alphabet, we use classical tools
for differential privacy under continual observation
\citep{DNPR10,CSS11} to maintain these estimates throughout
envy-cycle elimination. For each agent and bundle, we count the goods
that the agent values at or above each alphabet threshold. Deleting a fixed number of the
highest-valued goods reduces each threshold count by that number,
down to a minimum of zero. Thus a single deletion set simultaneously
absorbs the error at every threshold. This makes the fairness bound
independent of the magnitudes and spacing of the alphabet values.

\paragraph{The entry-private lower bound.}
With zero-one utilities, envy is a difference between two counts:
the number of goods an agent values in someone else's bundle and
the number she values in her own. Each deletion reduces this gap
by at most one. We show that privacy makes a large gap unavoidable
for at least one agent.

We generate utilities by independent fair coin flips on a chosen set
of goods, giving all other goods value zero. Fix any allocation the
mechanism might return. Observing it can make some coin outcomes
more likely, since the mechanism used the utilities to choose the
bundles. Privacy limits how far any one coin can be biased, even
when all the other outcomes are known.

For the analysis, we introduce a new coin-flip experiment with the
same fixed bundles. Its coins are independent and biased toward
reducing envy: an agent is more likely to value her own goods and
less likely to value other agents' goods. We choose the most
favorable biases allowed by the privacy bound. This can only make
large envy gaps less likely, so a lower bound for this experiment
also applies to the actual mechanism.

Now compare the half of the agents receiving the fewest randomly
valued goods with a bundle containing the most such goods. Even
with the favorable biases, random fluctuations can make an agent
value that larger bundle substantially more than her own. The
fluctuations are independent across agents in our experiment. A
larger gap is rarer, but having many agents gives it more chances
to occur. Choosing the number of random goods appropriately,
a binomial tail bound shows that some gap reaches order
$\log(n/\beta)/\varepsilon$ with probability greater than $\beta$.
Each deletion removes at most one unit of this gap, which gives
the lower bound.

\paragraph{Row-level privacy.}
Here the allocation ignores utilities. Under independent uniform
assignment, an ordinary concentration bound controls each of the
two bounded rows per agent in the discrepancy-to-fairness reduction.
That reduction then bounds the deletions needed to remove envy.
Choosing between this random allocation and a fixed balanced
partition yields the tight row-private bound with privacy parameter zero.

\subsection{Related work}\label{sec:related-work}
Privacy in allocation has also been studied under joint differential
privacy. \citet{HHRRW14} gave algorithms for approximately
welfare-maximizing allocations for gross-substitutes valuations in
markets with sufficiently many copies of each good. Joint privacy
protects an agent's valuation
against the combined outputs received by all other agents, while
allowing her own allocation to depend on that valuation.

The fair-division models of \citet{MS25}, which we adopt,
protect information revealed by the complete allocation and measure
fairness through the number of goods whose removal eliminates envy.

Our discrepancy construction builds on the partial-coloring method
underlying Spencer's theorem \citep{Spe85}. \citet{DS03} developed
a recursive passage from two-color to multicolor discrepancy.
Polynomial-time algorithms attaining Spencer's bound were
subsequently obtained by \citet{Ban10} and \citet{LM12}.

The connection to fair division was developed by
\citet{MS22}, who used multicolor discrepancy to obtain consensus
divisions and almost envy-free allocations among groups. Their
reduction is the starting point for our private consensus algorithm.
More recently, \citet{MM26} proved a tight $\Omega(\sqrt n)$ worst-case
lower bound for multicolor discrepancy for every number of colors $k\ge2$,
yielding the same lower bound for nonprivate consensus fairness.

\section{Preliminaries}\label{sec:preliminaries}
Write $[r]=\{1,\ldots,r\}$. There are $m\ge2$ goods and $n$ agents;
$k$ denotes the number of bundles. An allocation
$A=(A_1,\ldots,A_k)$ is an ordered partition of $[m]$, with empty
bundles allowed. The letter $C$ denotes a universal constant that may
increase from line to line.

For a real number $x$, write $(x)_+=\max\{x,0\}$ and let
$\operatorname{sgn}(x)$ be $-1$, $0$, or $1$ according to its sign.
We use $\one_{\mathcal E}$ for the indicator of an event $\mathcal E$,
$\one_A$ for the indicator vector of a set $A$, and $\one$ for the
all-ones vector, with the ambient coordinates clear from context.
For a matrix $B$, the notation $B_T$ restricts it to the columns in
$T$, and $B_{i,\cdot}$ denotes its $i$th row.
The matrix $\operatorname{diag}(d)$ has diagonal entries $d_g$ and
zero off-diagonal entries.

\subsection{Fairness}
Each agent $i$ has a nonnegative additive valuation function $v_i$,
so $v_i(S)=\sum_{g\in S}v_i(g)$. We view a matrix
$V\in\mathbb R_{\ge0}^{n\times m}$ as defining the valuations
$v_1,\ldots,v_n$ by $v_i(g)=V_{ig}$.
Let $c$ be a nonnegative integer. When $k=n$, agent $i$ receives
$A_i$. The allocation is
\emph{envy-free up to $c$ goods} (EF$c$) if
\[
 \forall i,j\quad \exists R\subseteq A_j,\quad |R|\le c,\quad
 v_i(A_i)\ge v_i(A_j\setminus R).
\]
A \emph{consensus $1/k$-division up to $c$ goods}, or
\emph{consensus EF$c$}, satisfies every agent's comparison between
every ordered pair of bundles:
\[
 \forall i,j,\ell\quad \exists R\subseteq A_\ell,\quad |R|\le c,\quad
 v_i(A_j)\ge v_i(A_\ell\setminus R).
\]
The deletion set may depend on the agent and the compared bundles;
it is not released. When $k=n$, consensus EF$c$ implies ordinary
EF$c$ under every assignment of agents to bundles \citep{MS22}.

\subsection{Discrepancy}
For discrepancy, we assume $V\in[0,1]^{n\times m}$ and measure
the largest difference between two bundle values:
\[
 \operatorname{Disc}_k(V,A)=\max_{i\in[n]}\max_{j,\ell\in[k]}
          \bigl(v_i(A_j)-v_i(A_\ell)\bigr).
\]
Discrepancy is sometimes defined by deviation from an equal share.
The two definitions differ by at most a factor of two:
\[
 \frac12\operatorname{Disc}_k(V,A)\le
 \max_{i,j}\left|v_i(A_j)-\frac{v_i([m])}{k}\right|
 \le \operatorname{Disc}_k(V,A).
\]
A \emph{signing} is a vector $\sigma\in\{-1,1\}^m$. It induces
a two-color partition $A$ by placing good $g$ in $A_1$ when
$\sigma_g=1$ and in $A_2$ when $\sigma_g=-1$. This partition has
discrepancy
\[
 \operatorname{Disc}_2(V,A)=\|V\sigma\|_\infty.
\]
A \emph{partial coloring}
$z\in\{-1,0,1\}^m$ leaves the zero coordinates unassigned.
In \emph{fractional rounding}, a vector $x\in[0,1]^m$ is
replaced by $y\in\{0,1\}^m$, with error $\|V(y-x)\|_\infty$.

\subsection{Differential privacy}\label{sec:privacy-tools}
Two input matrices are \emph{entry-adjacent} if they differ in one
entry, and \emph{row-adjacent} if they differ in one entire row.
The change may be arbitrary within the input domain: $[0,1]$ for
discrepancy, and the nonnegative reals for the general fairness results.
In \cref{sec:alphabet}, utilities instead belong to a fixed public
finite alphabet, and replacements must stay within that alphabet.
For either adjacency relation, a mechanism $\mathcal M$ is
$\varepsilon$-differentially private (DP) if
\[
 \Prob[\mathcal M(X)\in\mathcal E]\le e^\varepsilon
       \Prob[\mathcal M(X')\in\mathcal E]
\]
for every adjacent pair $X,X'$ and every output event $\mathcal E$.
All guarantees in this paper are pure DP; $\beta$ is the failure
probability of the accuracy guarantee. Subscripted versions, such as
$\varepsilon_0,\varepsilon_t$ and $\beta_0$, denote the corresponding
budgets for individual components or calls.
Our allocation mechanisms release only the allocation; internal
samples and running times are outside the output model.

We use the following standard privacy tools \citep{DR14}.

\paragraph{Postprocessing.}
Processing a private output without further access to the input
preserves its privacy guarantee. The processing may use public
information and fresh randomness.

\paragraph{Composition.}
Privacy costs add across successive calls, even when later calls
depend on earlier outputs. More precisely, suppose that after fixing
all previous outputs, call $t$ is $\varepsilon_t$-DP. If
$\sum_t\varepsilon_t\le\varepsilon$ on every transcript, releasing
the complete transcript is $\varepsilon$-DP
\citep[Theorem~B.1]{DR14}.

Under entry adjacency, independently sampled mechanisms on fixed
disjoint sets of entries compose with the maximum privacy cost:
only one mechanism sees a changed entry.
In a recursion, we compare the same transcript of internal outputs on
neighboring inputs. This fixes the sets used by later calls. A call
on unchanged data has the same conditional law on both inputs, so
only calls on changed data contribute to privacy loss.

\paragraph{The exponential mechanism.}
We use the following weighted form of the exponential mechanism
\citep[Theorem~3.10]{DR14}.

\begin{lemma}[Exponential mechanism]\label[lemma]{lem:exponential-mechanism}
Let $\Omega$ be a fixed finite output set, let $Q$ be a
data-independent probability distribution on it, and let $a\ge0$.
Suppose positive weights satisfy
\[
 |\log w_X(o)-\log w_{X'}(o)|\le a
\]
for all adjacent inputs $X,X'$ and all $o$ in the support of $Q$.
Then the mechanism with output law
\[
 \Prob[\mathcal M(X)=o]
 =\frac{Q(o)w_X(o)}{\sum_{o'\in\Omega}Q(o')w_X(o')}
\]
is $2a$-DP.
\end{lemma}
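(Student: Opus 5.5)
We bound the privacy loss ratio directly. Fix adjacent inputs $X,X'$ and write
\[
 Z_X=\sum_{o'\in\Omega}Q(o')w_X(o')
\]
for the normalizing constant. Outputs outside the support of $Q$ have probability zero under both inputs, so it suffices to handle $o$ with $Q(o)>0$. For such $o$, the probability ratio factors as
\[
 \frac{\Prob[\mathcal M(X)=o]}{\Prob[\mathcal M(X')=o]}
 =\frac{w_X(o)}{w_{X'}(o)}\cdot\frac{Z_{X'}}{Z_X},
\]
because $Q$ is data-independent and cancels.

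The first factor is at most $e^a$ by the hypothesis $|\log w_X(o)-\log w_{X'}(o)|\le a$.

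For the second factor, note that only terms with $Q(o')>0$ contribute to $Z_X$ and $Z_{X'}$. For each such $o'$ we have $w_{X'}(o')\le e^a w_X(o')$, and summing against $Q(o')\ge0$ gives $Z_{X'}\le e^aZ_X$. Positivity of the weights guarantees $Z_X>0$, so the ratio is well defined. Hence the pointwise ratio is at most $e^{2a}$.

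Finally, pass from single outputs to events. Since $\Omega$ is finite, any event $\mathcal E\subseteq\Omega$ is a finite union of singletons. Summing the pointwise bound $\Prob[\mathcal M(X)=o]\le e^{2a}\Prob[\mathcal M(X')=o]$ over $o\in\mathcal E$ gives $\Prob[\mathcal M(X)\in\mathcal E]\le e^{2a}\Prob[\mathcal M(X')\in\mathcal E]$, which is $2a$-DP.

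The only subtle point is the support restriction. The hypothesis controls the weights only on the support of $Q$, so both the numerator and the normalizing-constant comparison must be restricted to that support. This is harmless, since $Q$ assigns zero mass to every other output. No other step presents an obstacle.
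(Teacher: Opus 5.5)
Your proof is correct and follows the same route as the paper. You bound the pointwise weight ratio by $e^{a}$, sum against $Q$ to bound the ratio of normalizing constants by $e^{a}$, and sum the resulting pointwise $e^{2a}$ bound over any output event; your explicit handling of the support of $Q$ and of $Z_X>0$ just spells out what the paper leaves implicit.
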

\begin{proof}
For adjacent inputs, the ratio of the two weights lies in
$[e^{-a},e^a]$ at every output in the support of $Q$. Summing
against $Q$ gives the same bounds for the ratio of the normalizing
constants. Each output probability therefore changes by a factor
of at most $e^{2a}$. Summing over any output event proves the claim.
\end{proof}
We use this lemma for privacy and analyze accuracy separately.

\paragraph{The Laplace mechanism.}
A vector query with $\ell_1$ sensitivity at most $s$ can be released
with $\varepsilon$-DP by adding independent centered Laplace noise
of scale $s/\varepsilon$, denoted by $\operatorname{Lap}(s/\varepsilon)$,
to each coordinate \citep[Theorem~3.6]{DR14}.

\section{Private multicolor discrepancy}\label{sec:model}

Fix $V\in[0,1]^{n\times m}$. Throughout this section and
\cref{sec:fairness}, the parameters $n\ge1$, $2\le k<m$,
$0<\varepsilon\le1$, and $0<\beta<1/2$ are public.
The case $k=1$ is trivial. For $k\ge m$, a fixed partition into
singleton and empty bundles gives discrepancy at most one
and consensus EF1.
Universal constants in the algorithms are fixed before seeing the input;
their required sizes are determined by the proofs below.
Restricted matrices and their sign vectors retain the original column labels.

The main result of this section is the following private multicolor discrepancy bound.

\begin{theorem}[Private multicolor discrepancy]\label{thm:main}
There is an entry-level $\varepsilon$-DP algorithm whose output satisfies
\[
 \operatorname{Disc}_k(V,A)\le C\left(
 \sqrt n+\frac{\log^2 m}{\varepsilon}
                  \log\frac{nkm}{\beta}\right)
\]
with probability at least $1-\beta$. Its expected running time is
$m^{O(n)}\operatorname{poly}(n,k)$ in the computation model described
in \cref{sec:sampling}.
\end{theorem}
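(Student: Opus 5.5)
The proof goes in three layers: a private partial-coloring step, private two-color rounding by iterating it, and the recursive reduction of \citet{DS03} to $k$ colors with a privacy schedule across recursion levels.

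\textbf{Partial coloring.} Take a current matrix $B=V_T$ on the $t=|T|$ still-fractional columns, with $t$ larger than a constant multiple of $n$. I would build a private sampler over pairs $(\sigma,\sigma')\in\{-1,1\}^T\times\{-1,1\}^T$, with $Q$ uniform on the pairs whose Hamming distance is at least $ct$. This progress test does not depend on $B$. Each admissible pair gets the weight
\[
 w_B(\sigma,\sigma')=\exp\Bigl(-\lambda\sum_i\bigl(|B_{i,\cdot}(\sigma-\sigma')|-\Delta_i\bigr)_+\Bigr),
\]
and the output is $z=(\sigma-\sigma')/2\in\{-1,0,1\}^T$. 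An entry change moves a single row sum by at most $2$, so $|\log w_B-\log w_{B'}|\le2\lambda$. By \cref{lem:exponential-mechanism} the step is then $4\lambda$-DP, so I take $\lambda=\varepsilon_0/4$.

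For accuracy I would use Spencer's entropy argument. Choose thresholds $\Delta_i$ summing to $O(\sqrt{nt})$; this is the standard choice that makes the entropy of the rounded vector $\bigl(\lfloor B_{i,\cdot}\sigma/\Delta_i\rfloor\bigr)_i$ at most $ct$. Pigeonhole then gives $2^{(1-c)t}$ signings in a single cell, and most pairs within that cell are far apart, so they pass the test. Hence the total weight of zero-penalty pairs is at least $2^{-O(t)}$ times the total $Q$-mass. Pairs whose total excess exceeds $s$ each have weight at most $e^{-\lambda s}$, and there are at most $4^t$ of them. So with probability $1-\beta_0$ the excess is $O((t+\log(n/\beta_0))/\lambda)$.

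This overall-excess bound is the main obstacle. It is linear in $t$, which is too weak when $t$ is large. The fix I would try first is to restrict the sampler to $t\lesssim \log m/\varepsilon$ scale, by working on small blocks of columns, or to scale $\lambda$ with $t$. The goal is that each round costs $O(\sqrt{n\cdot t}+\log(nm/\beta)/\varepsilon)$ per row. The privacy budget of each round is charged once, since a changed entry lies in one column.

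\textbf{Two-color rounding.} Starting from $x=\tfrac12\one$, I would apply partial colorings in $O(\log m)$ rounds. Each round fixes a constant fraction of the remaining columns, so $t$ decays geometrically and the $\sqrt{n t}$ terms sum to $O(\sqrt n\cdot\sqrt m)$. Because the bound should be $O(\sqrt n)$ independently of $m$, I would first reduce to $m\le\mathrm{poly}(n)$-like size via the standard grouping trick: take a cheap fractional reduction with columns paired, and apply Spencer's bound in the regime $t\ge n$ using $\sqrt{t\log(2t/n)}$-type refined entropy thresholds, so that the error telescopes to $O(\sqrt n)$. Each round gets budget $\varepsilon/\log m$, so the privacy term is $O(\log m\cdot\log m\cdot\log(nm/\beta)/\varepsilon)$. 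By composition, the whole two-color procedure is $\varepsilon$-DP.

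\textbf{Multicolor.} I would split recursively into a binary tree of depth $\lceil\log_2 k\rceil$, aiming at bundle sizes proportional to $\lfloor k/2\rfloor$ and $\lceil k/2\rceil$. This uses unbalanced fractional rounding from $x=(\lfloor k/2\rfloor/k)\one$. Calls at one level act on disjoint columns, so they compose at the maximum cost. The per-level errors must be divided by the number of final bundles below, as in \citet{DS03}. The nonprivate terms then form a convergent series and give $O(\sqrt n)$. For the privacy terms, I would give level $\ell$ a budget $\varepsilon_\ell\propto\varepsilon\,2^{-(L-\ell)/2}$, normalized so that $\sum_\ell\varepsilon_\ell\le\varepsilon$. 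The weighted privacy errors then also sum to $O(\log^2 m\log(nkm/\beta)/\varepsilon)$. A union bound with failure probability $\beta/(2k\log m)$ per call finishes the accuracy claim.

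For the running time, exact sampling enumerates $4^t$ pairs. With rejection sampling against $Q$ and block sizes bounded by the argument above, this gives the stated $m^{O(n)}\mathrm{poly}(n,k)$ expected time.
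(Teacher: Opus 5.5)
\paragraph{Overall.} Your architecture matches the paper's: a private partial coloring, $O(\log m)$ rounding calls at budget about $\varepsilon_0/\log m$ each, and the recursion with node budgets $\propto\varepsilon/\sqrt t$, which is essentially \cref{prop:private-partition}. There is, however, a genuine gap, and one secondary one.

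\paragraph{Main gap: accuracy of the partial coloring.} You flag this step as the main obstacle, but you leave it open, and neither proposed fix works. Your tail bound compares the error event with the $2^{-O(t)}$ mass of zero-penalty pairs, so it necessarily costs $O(t/\lambda)$.
\begin{itemize}
\item You cannot raise $\lambda$ with $t$. Each call costs $\Theta(\lambda)$ privacy, and the $\Theta(\log m)$ calls force $\lambda\lesssim\varepsilon_0/\log m$.
\item Coloring separate blocks of $O(\log m/\varepsilon)$ columns gives up the global cancellation. The row errors of the $\Theta(m\varepsilon/\log m)$ blocks are only controlled additively, so the bound grows with $m$.
\end{itemize}
The missing idea is a per-row comparison of normalizers. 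The probability that row $i$ fails is at most its factor on the failure event times $Z_{-i}/Z$, where $Z_{-i}$ omits row $i$. You need $Z/Z_{-i}\ge1/\mathrm{poly}(q)$ for an arbitrary product of the other rows, not merely $Z\ge2^{-O(q)}$.

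The paper proves $Z/Z_{-i}\ge(1+q\lambda)^{-1}$ in \cref{lem:row-factor}. It writes each row factor as the collision probability of a random-width, randomly shifted grid on the $\phi_w$-transformed projection. Adding a row then splits every old cell into at most $N$ pieces with $\E N\le1+q\lambda$, and Cauchy--Schwarz finishes. This needs the kernel to be a mixture of triangle functions applied to $\phi_w((B\sigma)_i)-\phi_w((B\tau)_i)$. That is why the zero-cost region comes from $\phi_w$ acting on each projection rather than from a flat top. Your penalty $\exp(-\lambda(|x|-\Delta_i)_+)$ on $x=B_{i,\cdot}(\sigma-\sigma')$ is not convex on $(0,\infty)$, hence not such a mixture, so the argument does not apply to it as written. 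In the paper, the entropy bound is used only to show that the progress filter costs a factor of $33$.

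\paragraph{Secondary gap: the nonprivate term.} With per-round cost of order $\sqrt{nt}$ you get $\sqrt{nm}$. The ``grouping trick'' you invoke, presumably the linear-algebraic reduction to $n$ fractional columns, is data-dependent and would itself have to be made private. Your suggested thresholds $\sqrt{t\log(2t/n)}$ also grow with $t$. No reduction is needed. The paper takes width $w=C_0n/\sqrt q$ for $q\ge n$ and $w=C_0(nq)^{1/4}$ for $q<n$. Both keep the record entropy below $q/32$, and the widths sum to $O(\sqrt n)$ along the geometric decrease of $q$, independently of $m$ (\cref{lem:rounding}).

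\paragraph{Running time.} Your running-time bound relies on the small-block restriction. In the paper, the same row-addition lemma gives $Z_B\ge(1+q\lambda)^{-n}$, hence $O((1+m)^n)$ expected rejection trials per call.
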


The leading $\sqrt n$ term matches the optimal nonprivate bound
that depends only on the number of rows \citep{DS03,MM26}.

We build the algorithm in three steps. First, we construct a private
partial coloring: it assigns signs to a constant fraction of the
columns while keeping each row sum small. This follows Spencer's
partial-coloring framework \citep{Spe85}. We present the nonprivate
sampler first, then modify its weights to obtain privacy and stability
under row contraction. These guarantees will also support the
fairness sampler in \cref{sec:fairness}. Second,
a rounding procedure combines partial colorings into
an integral vector. Finally, we adapt the recursive reduction of
\citet{DS03} to obtain a multicolor partition.

\subsection{Sampling partial colorings}

\subsubsection{A nonprivate partial-coloring sampler for two colors}\label{sec:collision-bound}
Let $B\in[0,1]^{n\times q}$, $q\ge1$, contain the columns still to
be colored. Draw independent uniform centered sign vectors
$\sigma,\tau\in\{-1/2,1/2\}^q$. Their difference
\[
 \sigma-\tau\in\{-1,0,1\}^q
\]
colors a coordinate exactly when the two signs differ there.
Write $Q$ for this uniform pair distribution. Under $Q$, the number
of colored coordinates has distribution $\operatorname{Bin}(q,1/2)$, so
\[
 \Prob_Q[|\{g:\sigma_g\ne\tau_g\}|<q/4]\le e^{-q/16}.
\]
We will accept pairs with close row sums. To allow an error of order
$w>0$, first collapse the interval $[-w,w]$ to zero and then compare
the transformed sums using a triangle weight:
\[
 \phi_w(x)=\operatorname{sgn}(x)(|x|-w)_+,
 \qquad T_w(x)=(1-|x|/w)_+.
\]
The map $\phi_w$ is odd and $1$-Lipschitz, and
$|x-\phi_w(x)|\le w$.

One trial draws $(\sigma,\tau)\sim Q$ and accepts with probability
\[
 \prod_{i=1}^n
 T_w\bigl(\phi_w((B\sigma)_i)-\phi_w((B\tau)_i)\bigr).
\]
Repeat until acceptance, then return $\sigma-\tau$. A positive
acceptance weight means that, in every row, the two transformed sums
differ by less than $w$. Thus every output satisfies
$\norm{B(\sigma-\tau)}_\infty\le3w$. The remaining question is whether acceptance
still leaves enough pairs that color many coordinates.

We choose the width to retain enough accepted pairs for the progress
bound, while ensuring that the widths sum to $O(\sqrt n)$ as the
number of remaining columns decreases geometrically.
Choose a sufficiently large universal constant $C_0$ and set
\begin{equation}\label{eq:kernel-width}
 w=\begin{cases}
 C_0n/\sqrt q,&q\ge n,\\
 C_0(nq)^{1/4},&q<n.
 \end{cases}
\end{equation}

\begin{lemma}[Nonprivate partial coloring]\label[lemma]{lem:nonprivate-partial}
With the width in \eqref{eq:kernel-width}, one trial of the nonprivate
sampler is accepted with probability at least $e^{-q/32}$.
The sampler returns a partial coloring with discrepancy at most $3w$,
and the probability that it colors fewer than $q/4$ coordinates is
at most $e^{-q/32}$.
\end{lemma}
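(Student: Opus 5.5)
The plan is to recognize the acceptance probability as an averaged collision probability of a discretized map and then bound that via an entropy (pigeonhole) estimate, following Spencer's entropy method \citep{Spe85}. The other two claims follow quickly from the acceptance bound. The discrepancy bound $3w$ was already observed before the statement. For progress, write $G$ for the event that fewer than $q/4$ coordinates are colored. The output law is $Q$ reweighted by the acceptance weight, which is at most one, so
\[
\Prob[G\mid\text{accept}]\le \frac{\Prob_Q[G]}{\Prob[\text{accept}]}\le \frac{e^{-q/16}}{e^{-q/32}}=e^{-q/32}.
\]
Everything therefore reduces to showing that $\E_Q\bigl[\prod_i T_w(\phi_w((B\sigma)_i)-\phi_w((B\tau)_i))\bigr]\ge e^{-q/32}$.

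\emph{Step 1: the triangle as a random-shift collision.} For $U$ uniform on $[0,w)$ one has the identity
\[
T_w(x-y)=\Prob_U\bigl[\lfloor (x+U)/w\rfloor=\lfloor (y+U)/w\rfloor\bigr].
\]
Draw independent shifts $U_1,\dots,U_n$, one per row, and define the cell vector $\kappa_U(\sigma)\in\mathbb Z^n$ by $\kappa_U(\sigma)_i=\lfloor(\phi_w((B\sigma)_i)+U_i)/w\rfloor$. Since $\sigma$ and $\tau$ are independent,
\[
\Prob[\text{accept}]=\E_U\sum_{\kappa}\Prob_\sigma[\kappa_U(\sigma)=\kappa]^2 \ge \E_U\, e^{-H(\kappa_U(\sigma))}\ge e^{-\E_U H(\kappa_U(\sigma))}.
\]
The first inequality is the standard bound that collision probability is at least $e^{-H}$ (natural log), and the second is Jensen. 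Entropy is subadditive over rows, so it suffices to show $\sum_i H(\kappa_U(\sigma)_i)\le q/32$ for every fixed $U$.

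\emph{Step 2: per-row entropy.} Each $(B\sigma)_i$ is a sum of independent centered variables bounded by $1/2$. By Hoeffding it is subgaussian, with $\Prob[|(B\sigma)_i|\ge t]\le 2e^{-2t^2/q}$. Set $\lambda=w/\sqrt q$. On the event $|(B\sigma)_i|\le w$ we have $\phi_w=0$, so the cell takes a single value there. The cell equals the $j$th cell away only if $|(B\sigma)_i|\gtrsim (j-1)w+w$. Summing $-p\log p$ over these geometric-Gaussian tails gives the following per-row bound:
\begin{itemize}
\item for $\lambda\ge1$, $H\le C\lambda^2 e^{-c\lambda^2}$, which is at most $Ce^{-c\lambda^2}$ after adjusting constants;
\item for $\lambda<1$, $H\le C+\log(1/\lambda)$, coming from about $1/\lambda$ relevant cells inside a few standard deviations.
\end{itemize}
The collapse in $\phi_w$ is what produces the $e^{-c\lambda^2}$ regime.

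\emph{Step 3: plugging in the width \eqref{eq:kernel-width}.}
\begin{itemize}
\item If $q\ge n$, write $t=n/q\in(0,1]$, so $\lambda=C_0t$. When $C_0t\ge1$, the total entropy is at most $q\,t\,Ce^{-cC_0^2t^2}\le q\,Ce^{-cC_0}$. When $C_0t<1$, it is at most $q\,t\,(C+\log(1/(C_0t)))$, and $\sup_{s<1}(s/C_0)(C+\log(1/s))=O(1/C_0)$.
\item If $q<n$, then $\lambda=C_0(n/q)^{1/4}\ge C_0$. With $s=n/q\ge1$, the total entropy is at most $q\,s\,Ce^{-cC_0^2\sqrt s}\le q\,Ce^{-cC_0^2}$.
\end{itemize}
In every case the total is at most $q/32$ once $C_0$ is a large enough universal constant.

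The main obstacle is Step 1: a naive fixed bucketing only guarantees $T_w\ge1/2$ per row, which loses a factor $2^{-n}$ and is useless when $n\approx q$. The random-shift identity for the triangle kernel removes this loss exactly. The per-row entropy estimate in Step 2 is routine, but its constants must be tracked carefully enough for the small-$t$ case of Step 3 to close.
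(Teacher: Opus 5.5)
Your proposal is correct and follows essentially the same route as the paper: the random-shift representation of $T_w$ as a cell-collision probability, the bound $\sum_j p_j^2\ge e^{-\operatorname{Ent}}$ via Jensen, entropy subadditivity over rows with Hoeffding tails (where the collapse by $\phi_w$ puts almost all mass in the central cell), and the reweighting bound $e^{-q/16}/e^{-q/32}$ for progress. The paper differs only cosmetically: it uses binary entropy $\le2\sqrt p$ to get one per-row bound $\frac{C\sqrt q}{w}e^{-w^2/(2q)}$ in place of your two regimes. There is one small slip: for $q\ge n$ and $C_0t\ge1$, your inequality $q\,t\,Ce^{-cC_0^2t^2}\le qCe^{-cC_0}$ fails near $t=1/C_0$, but $\sup_{t>0}te^{-cC_0^2t^2}=O(1/C_0)$ still bounds the total entropy by $O(q/C_0)\le q/32$, so your conclusion stands.
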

\begin{proof}
We bound the acceptance probability by encoding row sums with a small
amount of information. This is the entropy step in the partial-coloring
method \citep[Section~13.2]{AS08}, applied to the sampler just defined.

For row $i$ and a shift $\theta\in[0,w)$, record
\[
 I_{i,\theta}(\sigma)
 =\left\lfloor\frac{\phi_w((B\sigma)_i)+\theta}{w}\right\rfloor.
\]
For fixed transformed sums $Y,Y'$, a uniform shift puts them in the
same interval with probability $T_w(Y-Y')$. Indeed, if
$|Y-Y'|<w$, the shifts that place a grid boundary between them have
total length $|Y-Y'|$ in the interval $[0,w)$; all other shifts
put them in the same cell. Their agreement probability is therefore
$1-|Y-Y'|/w$. If $|Y-Y'|\ge w$, they cannot share a cell, so the
probability is zero. Together these cases give
$(1-|Y-Y'|/w)_+=T_w(Y-Y')$. Therefore the sampler's
acceptance probability is exactly the probability that all row
records agree, averaged over independent uniform shifts.

Let $\operatorname{Ent}$ denote the entropy
$-\sum_j p_j\log p_j$ of a discrete distribution. Fix a shift.
For $j\ge1$, a record at least $j$ requires
$(B\sigma)_i\ge jw$, and a record at most $-j$ requires
$(B\sigma)_i\le-jw$.
Hoeffding's inequality gives
\[
 \max\bigl\{\Prob[I_{i,\theta}\ge j],\
             \Prob[I_{i,\theta}\le-j]\bigr\}\le e^{-2j^2w^2/q}.
\]
The indicators of these events determine the record, since
\[
 I_{i,\theta}
 =\sum_{j\ge1}\one_{\{I_{i,\theta}\ge j\}}
  -\sum_{j\ge1}\one_{\{I_{i,\theta}\le-j\}}.
\]
A binary
variable with success probability $p$ has entropy at most $2\sqrt p$.
Entropy subadditivity consequently gives
\[
 \operatorname{Ent}(I_{i,\theta})
 \le4\sum_{j\ge1}e^{-j^2w^2/q}
 \le\frac{C\sqrt q}{w}e^{-w^2/(2q)}.
\]
For the last inequality, $j^2\ge(1+j^2)/2$ gives
\[
 \sum_{j\ge1}e^{-j^2w^2/q}
 \le e^{-w^2/(2q)}\int_0^\infty e^{-x^2w^2/(2q)}\,dx
 =\frac{\sqrt{\pi q/2}}{w}e^{-w^2/(2q)}.
\]
The integral bounds the remaining sum because its integrand is
decreasing. A second application of subadditivity bounds the
entropy of all row records by
\[
 \frac{Cn\sqrt q}{w}e^{-w^2/(2q)}.
\]
For $q\ge n$, substituting $w=C_0n/\sqrt q$ and dropping the
exponential gives $Cq/C_0$. For $q<n$, the function
$x^{3/2}e^{-x/2}$ is bounded on $[0,\infty)$; applying this with
$x=w^2/q$ gives $e^{-w^2/(2q)}\le Cq^{3/2}/w^3$ and hence
\[
 \frac{Cn\sqrt q}{w}e^{-w^2/(2q)}
 \le\frac{Cnq^2}{w^4}=\frac{Cq}{C_0^4}.
\]
Both bounds are at most $q/32$ for sufficiently large $C_0$.

If the joint record has probabilities $(p_j)$, two independent
records agree with probability
\[
 \sum_jp_j^2\ge
 \exp\!\left(\sum_jp_j\log p_j\right)
 \ge e^{-q/32},
\]
by Jensen's inequality. This holds for every choice of shifts, so
averaging proves the acceptance bound. Conditioning on acceptance
can increase the probability of coloring fewer than $q/4$ coordinates
by at most $e^{q/32}$. Its unconditional probability is at most
$e^{-q/16}$, giving the claimed $e^{-q/32}$ bound. The discrepancy
bound follows directly from the acceptance rule.
\end{proof}

\subsubsection{Sampling a private partial coloring}\label{sec:partial-colorings}
We now modify this same sampler. The triangle weight can vanish
after a small input change, so it cannot directly give pure privacy.
We smooth it by averaging triangles of random widths. For
$\lambda>0$, let $L=w+\operatorname{Exp}(\lambda)$, where the
exponential variable has rate $\lambda$, and define
\[
 K_{\lambda,w}(x)
 =\frac{\E[(L-|x|)_+]}{\E L}
 =\frac{\E[L\,T_L(x)]}{\E L}.
\]
Thus $K_{\lambda,w}$ averages triangle weights under the law of $L$
reweighted by the factor $L/\E L$.
Since $\E L=w+1/\lambda$, it remains to compute the numerator.
If $|x|\le w$, then $L\ge w\ge|x|$, so
\[
 \E[(L-|x|)_+]=\E L-|x|=w+\frac1\lambda-|x|.
\]
If $|x|>w$, the tail-integral formula for a nonnegative random
variable gives
\[
 \begin{aligned}
 \E[(L-|x|)_+]
 &=\int_0^\infty\Prob[L>|x|+t]\,dt\\
 &=\int_0^\infty e^{-\lambda(|x|-w+t)}\,dt
 =\frac{e^{-\lambda(|x|-w)}}{\lambda}.
 \end{aligned}
\]
Dividing these two expressions by $w+1/\lambda$ gives
\[
 K_{\lambda,w}(x)=
 \begin{cases}
 1-\dfrac{\lambda|x|}{1+\lambda w},&|x|\le w,\\[2mm]
 \dfrac{e^{-\lambda(|x|-w)}}{1+\lambda w},&|x|>w
 \end{cases}.
\]
The width $w$ still controls discrepancy, while $\lambda$ controls
privacy and tail decay. The weight is even, decreases with $|x|$,
and satisfies $T_w\le K_{\lambda,w}\le1$. On $(0,w)$ its absolute
logarithmic derivative is $\lambda/[1+\lambda(w-x)]\le\lambda$;
beyond $w$ it is $\lambda$. Hence
\[
 |\log K_{\lambda,w}(x)-\log K_{\lambda,w}(y)|
 \le\lambda|x-y|.
\]
For fixed $\lambda,w>0$ and a row $b\in[0,1]^q$, the corresponding
weight on a pair of sign vectors is
\begin{equation}\label{eq:smooth-penalty}
 K_{\lambda,w}\bigl(\phi_w(b\cdot\sigma)-\phi_w(b\cdot\tau)\bigr).
\end{equation}
The factor equals one when both row sums lie in $[-w,w]$.
Since $\phi_w$ is $1$-Lipschitz, changing either row sum changes
the logarithm of this factor by at most $\lambda$ times that change.
Before imposing the progress requirement, the weighted pair law
assigns probability
\begin{equation}\label{eq:gibbs}
 \frac{Q(\sigma,\tau)\prod_{i=1}^n
 K_{\lambda,w}\bigl(\phi_w((B\sigma)_i)-\phi_w((B\tau)_i)\bigr)}
 {Z_B(\lambda,w)}
\end{equation}
to each pair $(\sigma,\tau)$, where $Z_B(\lambda,w)$ is the
expectation of the product weight under $Q$.
Write $\mathcal P$ for the event that $\sigma$ and $\tau$ differ
on at least $q/4$ coordinates. We reject every proposal outside
$\mathcal P$, so \cref{alg:partial-coloring} samples
\eqref{eq:gibbs} conditioned on $\mathcal P$. This test depends only
on the proposed pair and the public number of columns.

\begin{algorithm}[H]
\caption{\textsc{PartialColor}: private partial coloring}
\label[algorithm]{alg:partial-coloring}
\small
\begin{algorithmic}[1]
\Require $B\in[0,1]^{n\times q}$, $q\ge1$, and public $\lambda,w>0$
\Ensure A $2\lambda$-DP partial coloring with at least $q/4$ nonzero coordinates.
\Function{PartialColor}{$B,\lambda,w$}
  \Repeat
    \State Draw independent uniform $\sigma,\tau\in\{-1/2,1/2\}^q$
    \State Reject and restart the trial if $|\{g:\sigma_g\ne\tau_g\}|<q/4$.
    \State Accept the pair with probability
      $\prod_{i=1}^n K_{\lambda,w}
      \bigl(\phi_w((B\sigma)_i)-\phi_w((B\tau)_i)\bigr)$
  \Until{the pair is accepted}
  \State \Return $\sigma-\tau$
\EndFunction
\end{algorithmic}
\end{algorithm}

To turn a small row weight on an error event into a small probability,
we must also control normalization. We compare the product of all
other row factors with the product after restoring that row, and
bound how much the normalizing constant can decrease. This also
controls the rejection sampler's running time in \cref{sec:sampling}.
We first treat products of row factors, then extend the bounds to
the averages used by the fairness sampler.

\begin{lemma}[The cost of adding one row]\label[lemma]{lem:row-factor}
Fix $q\ge1$ and $\lambda,w>0$. For each row $b\in[0,1]^q$, write
\[
 F_b(\sigma,\tau)
 =K_{\lambda,w}\bigl(\phi_w(b\cdot\sigma)-\phi_w(b\cdot\tau)\bigr)
\]
for the weight in \eqref{eq:smooth-penalty}.

\emph{(i) Products of row factors.}
Let $W$ be a finite product of row factors, with the empty product
equal to one, and define $\nu=QW/\E_QW$.
Adding any row $b\in[0,1]^q$, meaning replacing $W$ by $WF_b$,
changes the normalizing constant by the ratio
\[
 \frac{\E_Q[WF_b]}{\E_QW}
 =\E_\nu F_b\ge\frac1{1+q\lambda}.
\]
Consequently, if $F_b\le a$ on an event $\mathcal E$, for $a\ge0$,
then the law proportional to $QWF_b$ assigns this event probability
at most $(1+q\lambda)a$.

\emph{(ii) Probability mixtures.}
Both conclusions also hold when $W$ is formed from the constant one
and row factors by repeated finite products and finite or countable
probability mixtures, and the added factor $F_b$ is replaced by any
probability mixture $F$ of row factors.
A probability mixture is a pointwise average
$\sum_jp_jW_j(\sigma,\tau)$, where $p_j\ge0$, $\sum_jp_j=1$, and
the coefficients do not depend on $(\sigma,\tau)$.
Mixtures are taken before normalization: with the same definition
$\nu=QW/\E_QW$, if $W=\sum_jp_jW_j$, then $\nu$ mixes the laws
$QW_j/\E_QW_j$ with probabilities $p_j\E_QW_j/\E_QW$.
\end{lemma}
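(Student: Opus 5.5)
The plan is to make the entropy-method picture from \cref{lem:nonprivate-partial} exact: write every row factor as an average of ``record agreement'' indicators. Then the ratio $\E_\nu F_b$ becomes a collision probability, and collision probabilities can be bounded below by counting cells. Concretely, with $L=w+\operatorname{Exp}(\lambda)$ and the tilted law $\tilde\Prob(dL)=L\,\Prob(dL)/\E L$, we have $K_{\lambda,w}(x)=\tilde\E[T_L(x)]$. As shown in the proof of \cref{lem:nonprivate-partial}, $T_L(Y-Y')$ equals the probability, over a uniform shift $\theta\in[0,L)$, that $Y$ and $Y'$ fall in the same cell of the grid $L\mathbb Z-\theta$. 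Hence
\[
 F_b(\sigma,\tau)=\tilde\E\,\E_\theta\bigl[\one\{J_{L,\theta}(\sigma)=J_{L,\theta}(\tau)\}\bigr],
 \qquad J_{L,\theta}(\sigma)=\lfloor(\phi_w(b\cdot\sigma)+\theta)/L\rfloor .
\]
Every $W$ built from products and mixtures of row factors is then a mixture, over auxiliary parameters $\xi$ (widths and shifts for each row, plus mixture indices), of indicators $\one\{R_\xi(\sigma)=R_\xi(\tau)\}$. Here $R_\xi$ is a joint record, a deterministic function of a single sign vector.

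Fix $\xi$. Under $Q$ the vectors $\sigma,\tau$ are i.i.d.\ uniform, so $\E_Q\one\{R_\xi(\sigma)=R_\xi(\tau)\}=\sum_r p_r^2$, where $p_r$ is the probability that $R_\xi(\sigma)=r$. Adding row $b$ with fixed $(L,\theta)$ refines each record value $r$ into cells $s$, and the new collision mass is $\sum_r\sum_s p_{r,s}^2$. Since $b\in[0,1]^q$ and the sign entries are $\pm1/2$, the sum $b\cdot\sigma$ lies in an interval of length at most $q$. Because $\phi_w$ is $1$-Lipschitz, $\phi_w(b\cdot\sigma)$ lies in an interval $I$ of length $\ell\le q$. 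If $N_\theta$ is the number of grid cells meeting $I$, Cauchy--Schwarz gives $\sum_sp_{r,s}^2\ge p_r^2/N_\theta$. Averaging over the uniform shift gives $\E_\theta N_\theta=1+\ell/L$, and Jensen for $1/x$ then gives $\E_\theta[1/N_\theta]\ge L/(L+\ell)$. This factor does not depend on $r$ or $\xi$. Integrating over $\xi$ with weights proportional to $\sum_rp_r^2$, which is exactly the law $\nu$, yields
\[
 \E_\nu F_b\ \ge\ \tilde\E\Bigl[\frac{L}{L+\ell}\Bigr]=\frac{\E[L^2/(L+\ell)]}{\E L}.
\]

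The step I expect to be delicate is this last average over $L$. The tilt pushes toward concave-Jensen, which goes the wrong direction. I would instead use convexity of $x\mapsto x^2/(x+\ell)$ on $x>0$ under the untilted law, which gives $\E[L^2/(L+\ell)]\ge(\E L)^2/(\E L+\ell)$. Therefore
\[
 \E_\nu F_b\ge\frac{w+1/\lambda}{w+1/\lambda+\ell}\ge\frac{1}{1+\lambda\ell}\ge\frac1{1+q\lambda}.
\]
A cruder route, bounding $K$ pointwise on $[-\ell,\ell]$, only gives an exponential bound in $\lambda\ell$ and would fail. The mixture over continuous shifts and widths is an integral rather than a sum. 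I would handle it by Fubini, since all integrands are bounded and nonnegative; alternatively, the countable-mixture clause plus monotone limits covers it.

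The consequence is immediate:
\[
 \frac{\E_Q[WF_b\one_{\mathcal E}]}{\E_Q[WF_b]}\le\frac{a\,\E_QW}{\E_Q[WF_b]}\le(1+q\lambda)a.
\]
For (ii), the representation above is closed under products, since records concatenate, and under probability mixtures, since parameter laws mix. Because the per-$\xi$ bound is uniform, the argument goes through verbatim for any such $W$. For a mixture $F=\sum_jp_jF_{b_j}$, linearity gives $\E_\nu F=\sum_jp_j\E_\nu F_{b_j}\ge1/(1+q\lambda)$, and the event bound follows exactly as before.
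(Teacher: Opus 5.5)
Your proof is correct and follows the paper's argument. It uses the same random-grid collision representation of every admissible weight, refinement by a fresh independent grid, and Cauchy--Schwarz over the at most $N$ cells meeting an interval of length at most $q$. The only difference is the final averaging. The paper applies Jensen to $1/N$ jointly over spacing and shift, and uses the fact that $1/L$ has mean $1/\E L$ under the tilted spacing law to get $\E N\le 1+q/\E L$. You instead apply Jensen over the shift alone, then recover the same bound $\E L/(\E L+\ell)$ from convexity of $x\mapsto x^2/(x+\ell)$ under the untilted law.
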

\begin{proof}
\emph{(i) Products of row factors.}
We represent the earlier weight $W$ by a random partition of the
sign vectors. A \emph{collision} means that two independent uniform
sign vectors belong to the same cell of this partition.
Adding a row refines its cells, meaning that it divides each old
cell into smaller cells using a fresh grid.
Dividing a cell into at most $N$ pieces reduces its contribution to
the collision probability by at most a factor $N$.

First consider a single row factor. Use the representation of
$K_{\lambda,w}$ as an average of triangles: draw a spacing from
the law of $L$ reweighted by $L/\E L$, then shift a grid of that
spacing uniformly. For a fixed grid, group sign vectors according
to the interval containing their transformed row sum. This gives a
partition of the sign vectors. As in \cref{sec:collision-bound},
averaging over the grid shows that the row factor is exactly the
probability that the two signs belong to the same cell.

For the product $W$, draw the component partitions independently and
intersect their cells: the signs lie in the same resulting cell
exactly when they do so in every component partition. The empty
product is represented by the partition with a single
cell containing all signs. Thus $W$ itself is the probability that
the two signs share a cell of a random partition.

For the additional row $b$, draw a fresh independent grid by the
same construction. Since $b\in[0,1]^q$, its centered
projections lie in $[-q/2,q/2]$. The map $\phi_w$ is $1$-Lipschitz,
so all transformed projections lie in a fixed interval of length
at most $q$. Let $N$ be the number of fresh grid intervals meeting
this interval, and let $D_1,\ldots,D_N$ be the corresponding cells
of sign vectors. This choice of $N$ depends only on the fresh grid,
not on the earlier partition.

Fix both partitions. For every old cell $\mathcal C$,
Cauchy--Schwarz gives
\[
 \sum_{j=1}^N|\mathcal C\cap D_j|^2
 \ge\frac{|\mathcal C|^2}{N}.
\]
Summing over old cells and dividing by $4^q$ compares the probability
under $Q$ of sharing a cell in both partitions with the probability
of sharing an old cell. Since the old partition and the fresh grid
are independent, averaging yields
\[
 \E_Q[WF_b]\ge\E\!\left[\frac1N\right]\E_QW,
 \qquad\text{hence}\qquad
 \E_\nu F_b\ge\E\!\left[\frac1N\right].
\]

It remains to bound the expected number of intervals. Apart from
shifts placing a boundary exactly at an endpoint, which have
probability zero, the number of cells meeting an interval is one
plus the number of grid boundaries inside it. For fixed spacing $L$,
each subinterval of length $L$ contains exactly one boundary.
Any remaining subinterval of length less than $L$ contains a boundary
with probability equal to its length divided by $L$, because the shift
is uniform. Thus the expected boundary count is the interval's length
divided by $L$, which is at most $q/L$. The spacing law
is weighted by $L/\E L$, so its expected reciprocal is
$\E[L(1/L)]/\E L=1/\E L$. Consequently,
\[
 \E N\le\frac{\E[L(1+q/L)]}{\E L}
       =1+\frac q{\E L}
       =1+\frac q{w+1/\lambda}\le1+q\lambda.
\]
Applying Jensen's inequality to the collision bound yields
\[
 \E_\nu F_b
 \ge\E\frac1N\ge\frac1{\E N}\ge\frac1{1+q\lambda}.
\]
For the event bound, reweighting $\nu$ by $F_b$ gives
\[
 \frac{\E_\nu[F_b\one_{\mathcal E}]}{\E_\nu F_b}
 \le(1+q\lambda)a.
\]

\emph{(ii) Probability mixtures.}
The random-partition representation also survives taking a
probability mixture: first choose a component with its specified
probability, then draw that component's partition. This works for
finite and countable mixtures. Together with the product
construction from part (i), it represents every weight $W$ allowed
in this part. Adding a single row still means drawing a fresh
independent grid, so the same collision argument gives
$\E_\nu F_b\ge(1+q\lambda)^{-1}$.

For a mixture $F=\sum_jp_jF_{b_j}$ of added row factors, averaging
this bound gives
\[
 \E_\nu F=\sum_jp_j\E_\nu F_{b_j}
 \ge\frac{\sum_jp_j}{1+q\lambda}
 =\frac1{1+q\lambda}.
\]
Finally, if $F\le a$ on $\mathcal E$, then
\[
 \frac{\E_\nu[F\one_{\mathcal E}]}{\E_\nu F}
 \le(1+q\lambda)a.\qedhere
\]
\end{proof}

We now collect the sampler's guarantees. The contraction bound
also allows a perturbation of the rescaled row, as needed for
nearby cutoff ranks in \cref{sec:fairness}.

\begin{lemma}[Private partial coloring and contraction stability]\label[lemma]{lem:gibbs}
For every $\lambda,w>0$, \cref{alg:partial-coloring} is
$2\lambda$-DP and always colors at least $q/4$ coordinates.
With the width in \eqref{eq:kernel-width}, its accepted pair satisfies
\[
 \Prob[|(B(\sigma-\tau))_i|>3w+t]
 \le33(1+q\lambda)e^{-\lambda t}
\]
for every row $i$ and $t\ge0$. For this width, a pair drawn from
\eqref{eq:gibbs} lies outside $\mathcal P$ with probability at most
$e^{-q/32}$.

The unnormalized row factors $F_b$ from \cref{lem:row-factor} satisfy,
for every $b,c\in[0,1]^q$, scalar $\rho\in[0,1]$, and fixed
$\sigma,\tau\in\{-1/2,1/2\}^q$,
\begin{equation}\label{eq:contraction}
 F_c(\sigma,\tau)
 \ge e^{-\lambda\norm{c-\rho b}_1}F_b(\sigma,\tau).
\end{equation}
Thus contracting the entire row by the common scalar $\rho$ cannot
decrease its weight. Taking $\rho=1$ and interchanging $b,c$ gives
\[
 e^{-\lambda\norm{c-b}_1}\le F_c/F_b
 \le e^{\lambda\norm{c-b}_1}.
\]
\end{lemma}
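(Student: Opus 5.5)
I will prove the four claims in the order privacy, contraction inequality, tail bound, progress bound. The contraction inequality \eqref{eq:contraction} is used by the privacy argument, so I establish it early.

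\emph{Privacy.} \Cref{alg:partial-coloring} is a rejection sampler. Its output law is exactly \eqref{eq:gibbs} restricted to $\mathcal P$ and renormalized. So it has the form in \cref{lem:exponential-mechanism}, with these ingredients:
\begin{itemize}
\item the data-independent law $Q\one_{\mathcal P}$, normalized;
\item the weight $w_B(\sigma,\tau)=\prod_iF_{B_{i,\cdot}}(\sigma,\tau)$.
\end{itemize}
An entry change alters one row $b$ to $c$ with $\norm{c-b}_1\le1$. By the $\rho=1$ case below, the log-weight changes by at most $\lambda$, so \cref{lem:exponential-mechanism} gives $2\lambda$-DP. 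The output $\sigma-\tau$ is a postprocessing of the pair. Every accepted pair lies in $\mathcal P$, so at least $q/4$ coordinates are always colored.

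\emph{Contraction.} Let $x=b\cdot\sigma$ and $y=b\cdot\tau$, and let $x',y'$ be the same sums for $c$. Since $\sigma,\tau$ have entries $\pm1/2$, we have $|x'-\rho x|,|y'-\rho y|\le\norm{c-\rho b}_1/2$. The differences of transformed sums therefore satisfy
\[
\bigl|\phi_w(x')-\phi_w(y')\bigr|\le\bigl|\phi_w(\rho x)-\phi_w(\rho y)\bigr|+\norm{c-\rho b}_1,
\]
by the 1-Lipschitz property of $\phi_w$. The key claim is $|\phi_w(\rho x)-\phi_w(\rho y)|\le|\phi_w(x)-\phi_w(y)|$ for $\rho\in[0,1]$. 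I would check this by cases.
\begin{itemize}
\item If $x,y$ have the same sign, $\phi_w$ is nondecreasing and the map $u\mapsto\phi_w(\rho u)$ contracts distances at least as much on the relevant side.
\item In general, note that $\frac{d}{d\rho}\phi_w(\rho u)$ has the sign of $u$ and magnitude at most $|u|$, and argue monotonicity in $\rho$ of $|\phi_w(\rho x)-\phi_w(\rho y)|$.
\end{itemize}
Simplest route: write $\phi_w(\rho u)=\psi_\rho(\phi_w(u))$, where $\psi_\rho$ is odd, nondecreasing and $1$-Lipschitz. Indeed $\psi_\rho(s)=\phi_w(\rho(s+w))$ for $s\ge0$, which has slope $\rho\le1$ and equals $0$ at $s=0$ when $\rho w\le w$. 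With the claim in hand, use that $K$ is even and decreasing in $|x|$ together with the log-Lipschitz bound $|\log K(x)-\log K(y)|\le\lambda|x-y|$. This yields \eqref{eq:contraction}.

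\emph{Tail bound.} On the event $|(B(\sigma-\tau))_i|>3w+t$, we have $|\phi_w((B\sigma)_i)-\phi_w((B\tau)_i)|>w+t$, since $|u-\phi_w(u)|\le w$. There $F_{B_{i,\cdot}}\le e^{-\lambda t}/(1+\lambda w)\le e^{-\lambda t}$. Apply \cref{lem:row-factor}(i) with the following choices:
\begin{itemize}
\item $W$ is the product of the other rows' factors, times the indicator of $\mathcal P$;
\item the event is the one above.
\end{itemize}
The obstacle is that $\one_{\mathcal P}$ is not a row factor. So instead I apply the lemma with $W$ being the product of the other rows only, which bounds the unconditioned Gibbs probability of the event by $(1+q\lambda)e^{-\lambda t}$. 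Conditioning on $\mathcal P$ then divides by the Gibbs probability of $\mathcal P$, which is at least $1-e^{-q/32}\ge1-e^{-1/32}$. Hence $1/(1-e^{-1/32})\le33$, giving the constant $33$.

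\emph{Progress.} The Gibbs normalizer is $Z_B\ge\E_Q\prod T_w\ge e^{-q/32}$, using $T_w\le K$ and \cref{lem:nonprivate-partial}. Since $K\le1$, the Gibbs law has
\[
\Prob[\mathcal P^c]\le Q(\mathcal P^c)/Z_B\le e^{-q/16}e^{q/32}=e^{-q/32}.
\]
I expect the contraction monotonicity (the $\psi_\rho$ factorization) to be the only step needing care.
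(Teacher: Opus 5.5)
Your proposal is correct and follows the paper's proof essentially step for step: the exponential mechanism over the uniform law on $\mathcal P$ with a $\lambda$-Lipschitz log-weight, the event bound of \cref{lem:row-factor}(i) under the unconditioned law \eqref{eq:gibbs} followed by the factor $1/\Prob[\mathcal P]\le33$, and $Z_B\ge e^{-q/32}$ via $T_w\le K_{\lambda,w}$ for the progress bound. The only variation is in the contraction step. You factor $\phi_w(\rho u)=\psi_\rho(\phi_w(u))$ through an odd $1$-Lipschitz map, whereas the paper compares the a.e.\ derivatives $\rho\one_{\{|\rho t|>w\}}\le\one_{\{|t|>w\}}$ and integrates between $x$ and $y$; both arguments are valid.
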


\begin{proof}[Proof of \cref{lem:gibbs}]
\emph{Contraction stability.}
For $0\le\rho\le1$, the derivative of
$t\mapsto\phi_w(\rho t)$ is
$\rho\one_{\{|\rho t|>w\}}$ almost everywhere, bounded by
the derivative $\one_{\{|t|>w\}}$ of $\phi_w(t)$.
Integrating between $x$ and $y$ gives
\[
 |\phi_w(\rho x)-\phi_w(\rho y)|
 \le |\phi_w(x)-\phi_w(y)|.
\]
Since $K_{\lambda,w}$ decreases with the absolute value of its
argument, $F_{\rho b}\ge F_b$.
Each centered projection changes by at most
$\norm{c-\rho b}_1/2$ when replacing $\rho b$ by $c$.
The logarithmic Lipschitz bound therefore gives
\[
 F_c\ge e^{-\lambda\norm{c-\rho b}_1}F_{\rho b}
      \ge e^{-\lambda\norm{c-\rho b}_1}F_b.
\]
This proves \eqref{eq:contraction}. Taking $\rho=1$ and
interchanging $b,c$ gives the two-sided perturbation bound.

\emph{Privacy.}
An entry change has row $\ell_1$ distance at most one, so the
stability bound shows that the logarithm of the product weight
changes by at most $\lambda$. The progress test retains the same set $\mathcal P$ on
both neighboring inputs. Apply \cref{lem:exponential-mechanism}
with the uniform law on this set as its public base distribution
and $a=\lambda$: the restricted weights and their normalizing
constants each change by at most a factor $e^\lambda$, giving
$2\lambda$-DP. This proves privacy directly for the restricted law.

\emph{Conditioning on progress.}
First consider the law \eqref{eq:gibbs}, without the progress check.
Since $K_{\lambda,w}\ge T_w$, \cref{lem:nonprivate-partial} gives
$Z_B\ge e^{-q/32}$ for the prescribed width. Product weights are
at most one, so the probability of fewer than $q/4$ disagreements
under this law is at most $e^{-q/16}/Z_B\le e^{-q/32}$.
Thus, with probabilities in the next two displays taken under
\eqref{eq:gibbs},
\[
 \Prob[\mathcal P]\ge1-e^{-q/32}
 \ge\frac{q}{32+q}\ge\frac1{33}.
\]
Here $e^x\ge1+x$ gives the middle inequality, and $q\ge1$ the last.

Rejecting pairs outside $\mathcal P$ removes exactly those outcomes
and renormalizes the probabilities of the remaining ones. Therefore,
for any event $\mathcal E$,
\[
 \Prob[\mathcal E\mid\mathcal P]
 =\frac{\Prob[\mathcal E\cap\mathcal P]}{\Prob[\mathcal P]}
 \le33\Prob[\mathcal E].
\]
We can consequently analyze row errors under the original weighted
law and multiply the resulting bound by $33$. The retained set is
nonempty and all kernel weights are positive, so rejection terminates
almost surely and every returned coloring meets the progress requirement.

\emph{Row tails.}
Fix row $i$. If $|(B(\sigma-\tau))_i|>3w+t$, the transformed row sums
differ by more than $w+t$, so this row's weight is at most
\[
 K_{\lambda,w}(w+t)
 =\frac{e^{-\lambda t}}{1+\lambda w}\le e^{-\lambda t}.
\]
Apply the event bound in part~(i) of \cref{lem:row-factor} with
$W$ the product of the other row factors. Under the law
\eqref{eq:gibbs}, the failure probability is at most
$(1+q\lambda)e^{-\lambda t}$. Conditioning on $\mathcal P$ multiplies
this bound by at most $33$, proving the stated output guarantee.
\end{proof}

\subsection{Rounding fractional vectors}\label{sec:rounding}

Rounding fractional vectors using discrepancy bounds is standard
\citep{DS03}. We give the rounding argument separately from the
private sampler, so that it can also be used in \cref{sec:fairness}.
Starting from $(1/2,\ldots,1/2)$ gives a two-color signing; other
starting vectors give the unequal splits needed for multicolor discrepancy.

Fix $B\in[0,1]^{n\times q}$ and a public initial vector
$x\in[0,1]^q$. Let $y\in[0,1]^q$ be the current vector and
$T=\{g:0<y_g<1\}$ its fractional coordinates. Already integral
coordinates will stay fixed. For each $g\in T$, put
\[
 d_g=\min\{y_g,1-y_g\}.
\]
This is the distance to the nearer endpoint of $[0,1]$.
For any partial coloring $z\in\{-1,0,1\}^T$, both
$y+d\odot z$ and $y-d\odot z$ lie in $[0,1]^q$;
the updates are zero outside $T$, and $\odot$ denotes coordinatewise
multiplication. Each nonzero coordinate of $z$ reaches an endpoint
in at least one of these two directions. Choosing the direction that
fixes more coordinates therefore makes at least half the nonzero
coordinates integral.

Both directions have the same absolute row error,
$|B_T(d\odot z)|$. To control this error, we apply the partial-coloring
sampler to $B_T\operatorname{diag}(d)$, whose entries remain in $[0,1]$.
The choice between the two directions uses only $y$ and the sampled $z$.
The sampler returns at least $|T|/4$ nonzero coordinates, so each
update makes at least $|T|/8$ coordinates integral. We repeat until
every coordinate is integral; the geometric decrease bounds the
number of calls.

\begin{algorithm}[H]
\caption{\textsc{Round}: a common fractional-rounding routine}
\label[algorithm]{alg:fractional-rounding}
\small
\begin{algorithmic}[1]
\Require Initial vector $x\in[0,1]^q$ and a procedure $\mathcal S(T,d)$
  returning $z\in\{-1,0,1\}^T$ with at least $|T|/4$ nonzero coordinates.
\Ensure An integral vector $y\in\{0,1\}^q$.
\Function{Round}{$x,\mathcal S$}
  \State $y\gets x$.
  \While{$y$ has fractional coordinates}
    \State $T\gets\{g:0<y_g<1\}$.
    \State $d_g\gets\min\{y_g,1-y_g\}$ for $g\in T$.
    \State $z\gets\mathcal S(T,d)$.
    \State Replace $y$ by whichever of $y+d\odot z$ and $y-d\odot z$
      has more integral coordinates; favor $+$ in a tie.
  \EndWhile
  \State \Return $y$.
\EndFunction
\end{algorithmic}
\end{algorithm}

\begin{lemma}[Rounding from partial colorings]\label[lemma]{lem:rounding}
Let $B\in[0,1]^{n\times q}$ and $q\le m$.
Consider any realized execution of \cref{alg:fractional-rounding}
in which every sampler call returns a vector
$z$ with at least $|T|/4$ nonzero coordinates and
\[
 \norm{B_T(d\odot z)}_\infty\le3w+E_0,
\]
where $E_0\ge0$ is the fixed additional error per sampler call and
$w$ is given by
\eqref{eq:kernel-width} for $n$ rows and $|T|$ columns.
Then it makes at most $J=\lceil\log_{8/7}m\rceil+1$ sampler calls,
and its output satisfies
\[
 \norm{B(y-x)}_\infty
 \le C\sqrt n+JE_0.
\]
\end{lemma}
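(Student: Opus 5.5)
The proof has two parts: count the sampler calls, then sum the per-call errors.

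\emph{Counting calls.} Let $T_s$ be the fractional set before the $s$th call, and write $q_s=|T_s|$. By hypothesis, the returned $z$ has at least $q_s/4$ nonzero coordinates. As argued before \cref{alg:fractional-rounding}, each nonzero coordinate becomes integral in at least one of the two directions. The direction with more integral coordinates therefore fixes at least $q_s/8$ of them, and coordinates that are already integral stay fixed. Hence $q_{s+1}\le\frac78q_s$, so $q_s\le(7/8)^{s-1}q\le(7/8)^{s-1}m$. A call is made only while $q_s\ge1$, which forces $s-1\le\log_{8/7}m$. This gives at most $J=\lceil\log_{8/7}m\rceil+1$ calls.

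\emph{Telescoping the error.} The output satisfies $y-x=\sum_s\pm d^{(s)}\odot z^{(s)}$, where each update is supported on $T_s$. The triangle inequality gives
\[
 \norm{B(y-x)}_\infty\le\sum_s\bigl(3w(q_s)+E_0\bigr)\le JE_0+3\sum_s w(q_s),
\]
so it remains to show $\sum_s w(q_s)\le C\sqrt n$. The sequence $q_s$ need not be exactly geometric, only bounded above by one. So I will use that $w$ is nondecreasing in $q$ on $q<n$ and nonincreasing on $q\ge n$, and group the calls by dyadic scale.

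\emph{Bounding the widths.} First take the calls with $q_s\ge n$. Since $q_s\le(7/8)^{s-1}q_1$, the values $q_s$ decrease at least geometrically, but they could in principle be small. For $q\ge n$, $w=C_0n/\sqrt q$ grows as $q$ shrinks, so I reindex from the last such call, $s^*$. For $j\ge0$ we have $q_{s^*-j}\ge(8/7)^jq_{s^*}\ge(8/7)^jn$, because each earlier set is at least $8/7$ times the next. Summing gives
\[
 \sum_j\frac{C_0n}{\sqrt{(8/7)^jn}}\le C\sqrt n.
\]
Next take the calls with $q_s<n$. Here $w=C_0(nq_s)^{1/4}$, and since $q_s\le(7/8)^{s-s_0}n$ from the first such call $s_0$ onward, the sum is at most $C_0n^{1/2}\sum_j(7/8)^{j/4}\le C\sqrt n$. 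Adding both ranges gives the claim.

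The only delicate point is the first range: there I must index from the smallest set, using $q_{s-1}\ge\frac87q_s$, rather than from the largest, because $w$ decreases in $q$ there.
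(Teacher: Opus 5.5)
Your proof is correct and follows the paper's argument. The $7/8$ contraction per call bounds the number of calls by $J$, the per-call errors $3w+E_0$ add by the triangle inequality, and the width sum is a geometric series dominated by the calls with $|T|$ near $n$. The paper organizes this last step into dyadic size classes $[2^rn,2^{r+1}n)$ and $[2^{-r-1}n,2^{-r}n)$, each containing only $O(1)$ calls, while you sum the geometric series outward from the crossover call; the two are equivalent, and the dyadic grouping you announce is never actually needed in your argument.
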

\begin{proof}
The choice of direction makes at least $|T|/8$ coordinates integral,
so the number of fractional coordinates decreases by a factor of
at most $7/8$ per call. Since $m(7/8)^J<1$, all coordinates
are integral after at most $J$ calls.

Let $w_j$ be the width at call $j$.
The number of fractional coordinates lies in any interval $[u,2u)$
for only a constant number of calls.
For sizes in $[2^rn,2^{r+1}n)$, their total width is at most
$C\sqrt n\,2^{-r/2}$. For sizes in
$[2^{-r-1}n,2^{-r}n)$, it is at most
$C\sqrt n\,2^{-r/4}$. Consequently,
\[
 \sum_j w_j
 \le C\sqrt n\sum_{r\ge0}\bigl(2^{-r/2}+2^{-r/4}\bigr)
 \le C\sqrt n.
\]
Each update has row error at most $3w_j+E_0$ in either direction.
Summing these errors over at most $J$ calls proves the claim.
\end{proof}

The same routine inherits conditional privacy and failure bounds
from its sampler. We state this for an arbitrary private input $X$:
the target matrix $B(X)$ used to measure error may itself depend on
$X$, as it will in \cref{sec:fairness}.

\begin{corollary}[Private rounding from a local sampler]
\label[corollary]{cor:private-rounding-oracle}
In the setting of \cref{lem:rounding}, let the initial vector $x$
and parameters $n,q,m,E_0$ be public, and let
$B=B(X)\in[0,1]^{n\times q}$ be fixed by the input $X$.
Let $\varepsilon_0\ge0$ and $0\le\beta_0\le1$, and recall that
$J=\lceil\log_{8/7}m\rceil+1$ bounds the number of sampler calls in
\textnormal{\textsc{Round}}.
Suppose all access to $X$ is through $\mathcal S$, which always
returns at least $|T|/4$ nonzero coordinates. Conditional on preceding
outputs, suppose each call is $\varepsilon_0$-DP with respect to
adjacency on $X$ and meets the error requirement of
\cref{lem:rounding} with probability at least $1-\beta_0$.
Then \textnormal{\textsc{Round}} is $J\varepsilon_0$-DP and satisfies that lemma's
error bound with probability at least $1-J\beta_0$.
Moreover, if the conditional sampler distributions agree after every
history possible on both inputs, then the rounding output has the same
distribution on both inputs.
\end{corollary}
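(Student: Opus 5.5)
The plan is to treat \textsc{Round} as an adaptive composition of at most $J$ sampler calls, where all computation between calls is postprocessing of public data and of earlier sampler outputs.

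\emph{Deterministic transcript map.} Fix an input $X$ and view the transcript as the sequence of sampler outputs $z^{(1)},z^{(2)},\ldots$. Given $x$ and the outputs of the first $j-1$ calls, the current vector $y$ is a deterministic function of these data and of no other input. The choice of direction uses only $y$ and $z$, with a fixed tie-breaking rule. Hence $T$ and $d$ at call $j$, and whether the loop has stopped, are functions of the public $x$ and the preceding outputs. Because every call returns at least $|T|/4$ nonzero coordinates, the counting argument of \cref{lem:rounding} applies on every transcript, whatever the error. So there are at most $J$ calls deterministically, and the final $y$ is a function of the transcript.

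\emph{Privacy.} I apply the composition theorem recalled in \cref{sec:privacy-tools}. Conditional on any history, call $j$ is $\varepsilon_0$-DP by hypothesis, and at most $J$ calls occur on every transcript. Releasing the full transcript is therefore $J\varepsilon_0$-DP, and by postprocessing so is $y$. For the final claim, suppose the conditional laws agree after every history possible on both inputs. Then the probability of any finite transcript factors as a product of conditional probabilities. Induction on the length shows these products coincide: the first factor at which the two histories would differ is equal, and histories impossible on one input receive probability zero on both. So the transcript laws agree, hence the output laws do too.

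\emph{Accuracy.} Let $G_j$ be the event that call $j$ either does not occur or meets the error bound $\norm{B_T(d\odot z)}_\infty\le3w+E_0$ with the width for $|T|$ columns. By hypothesis, $\Prob[G_j^c\mid\text{history}]\le\beta_0$. Taking expectations gives $\Prob[G_j^c]\le\beta_0$, and a union bound over $j\le J$ gives failure probability at most $J\beta_0$. On $\bigcap_jG_j$, the realized execution satisfies the hypotheses of \cref{lem:rounding}, which yields $\norm{B(y-x)}_\infty\le C\sqrt n+JE_0$.

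\emph{Main obstacle.} The delicate point is the adaptive composition with a random number of calls. I will handle it by padding the transcript with dummy calls after termination, each returning a fixed symbol with $0$ privacy cost, so the transcript has fixed length $J$. The standard adaptive composition then applies directly, with total cost $\sum_t\varepsilon_t\le J\varepsilon_0$ on every transcript.
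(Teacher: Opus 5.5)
Your proposal is correct and follows the paper's proof. Every update and stopping decision depends only on public data and earlier sampler outputs, so adaptive composition over at most $J$ calls gives $J\varepsilon_0$-DP, a union bound followed by \cref{lem:rounding} gives accuracy, and equal conditional laws give equal transcript laws. Your padding to a fixed length $J$ and your induction over transcript prefixes just spell out details the paper leaves implicit.
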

\begin{proof}
All updates and stopping decisions depend only on the public inputs
and preceding sampler outputs. Adaptive composition over at most
$J$ calls gives privacy, and a union bound followed by
\cref{lem:rounding} gives accuracy. Identical conditional sampler
laws give identical transcript laws, proving the last assertion.
\end{proof}

We now supply the private partial-coloring sampler to this routine.

\begin{proposition}[Private fractional rounding]\label[proposition]{prop:round}
For every public $x\in[0,1]^q$, $q\le m$, there is an entry-level
$\varepsilon_0$-DP algorithm that, on input $B\in[0,1]^{n\times q}$, returns
$y\in\{0,1\}^q$ such that
\[
 \norm{B(y-x)}_\infty\le C\left(
 \sqrt n+\frac{\log^2 m}{\varepsilon_0}\log\frac{nm}{\beta_0}\right)
\]
with probability at least $1-\beta_0$, for $0<\varepsilon_0\le1$ and
$0<\beta_0<1/2$.
\end{proposition}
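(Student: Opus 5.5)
We instantiate \textsc{Round} (\cref{alg:fractional-rounding}) with the sampler $\mathcal S(T,d)=\textsc{PartialColor}(B_T\operatorname{diag}(d),\lambda,w)$. Here $w$ is the width \eqref{eq:kernel-width} for $n$ rows and $|T|$ columns, and $\lambda=\varepsilon_0/(2J)$ with $J=\lceil\log_{8/7}m\rceil+1$. We then apply \cref{cor:private-rounding-oracle}.

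The matrix $B_T\operatorname{diag}(d)$ has entries in $[0,1]$, because $d_g\le1/2$. Conditional on the history, $T$ and $d$ are fixed functions of the public $x$ and of earlier outputs. An entry change in $B$ therefore changes at most one entry of $B_T\operatorname{diag}(d)$, and by less in absolute value, so the row has $\ell_1$ change at most one. By \cref{lem:gibbs}, each call is then $2\lambda=\varepsilon_0/J$-DP conditionally on the history. Each call also always returns at least $|T|/4$ nonzero coordinates. The corollary gives $\varepsilon_0$-DP for the whole routine.

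For accuracy, fix a call with $|T|=q'\le m$ columns. The tail bound in \cref{lem:gibbs} gives, for each row, failure probability at most
\[
 33(1+q'\lambda)e^{-\lambda t}\le 33(1+m)e^{-\lambda t}
\]
of exceeding $3w+t$. We choose
\[
 E_0=t=\frac1\lambda\log\frac{33(1+m)nJ}{\beta_0}
 =\frac{2J}{\varepsilon_0}\log\frac{33(1+m)nJ}{\beta_0}.
\]
A union bound over the $n$ rows then makes each call's failure probability at most $\beta_0/J$. This gives the requirement of \cref{lem:rounding} with $\beta_0$ replaced by $\beta_0/J$, so \cref{cor:private-rounding-oracle} yields overall failure probability at most $\beta_0$. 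The error is at most $C\sqrt n+JE_0$.

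It remains to check the form of the bound. We have $J=O(\log m)$, since $m\ge2$. Also $\log(33(1+m)nJ/\beta_0)\le C\log(nm/\beta_0)$, using $J\le Cm$ and $\beta_0<1/2$. Hence
\[
 JE_0\le \frac{C\log^2m}{\varepsilon_0}\log\frac{nm}{\beta_0},
\]
which is the claimed bound. The only delicate points are that the privacy analysis must hold conditionally on histories, which the corollary handles, and that the $(1+q\lambda)$ normalization factor contributes only a $\log m$ term. This last point holds because $\lambda\le1$ and $q\le m$. No step seems a serious obstacle; the main care is bookkeeping of $J$ so that the privacy budget $\varepsilon_0/J$ per call and the union bound combine to the $\log^2 m$ factor.
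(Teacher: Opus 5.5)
Your proposal is correct and follows the paper's proof essentially step for step. You run \textsc{Round} with \textsc{PartialColor} on $B_T\operatorname{diag}(d)$ at rate $\lambda=\varepsilon_0/(2J)$, take per-call privacy and tail bounds from \cref{lem:gibbs}, union-bound over rows, and conclude with \cref{cor:private-rounding-oracle}. The only difference is cosmetic: you bound $1+q'\lambda$ by $1+m$ instead of $1+m\lambda$ in $E_0$, which does not affect the asymptotic bound.
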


Set
\[
 J=\lceil\log_{8/7}m\rceil+1,\qquad
 \lambda=\frac{\varepsilon_0}{2J},\qquad
 E_0=\frac1\lambda\log\frac{33nJ(1+m\lambda)}{\beta_0}.
\]
Define $\textsc{PrivateRound}(B,x,\varepsilon_0,\beta_0;m)$ to be
$\textsc{Round}(x,\mathcal S)$ with
\[
 \mathcal S(T,d)=\textsc{PartialColor}
       (B_T\operatorname{diag}(d),\lambda,w),
\]
where $w$ is given by \eqref{eq:kernel-width} for $n$ rows and
$|T|$ columns.

\begin{proof}[Proof of \cref{prop:round}]
Condition on the preceding partial-coloring outputs. The current
$y,T,d,w$ are fixed on both neighboring inputs, so the progress test
also retains the same pairs. Scaling columns by $d_g\le1$ preserves
entry adjacency, and \cref{lem:gibbs} gives privacy cost
$2\lambda=\varepsilon_0/J$ per call.

Every returned coloring makes the required progress. For each row,
the tail bound in \cref{lem:gibbs} gives
\[
 \Prob\!\left[|(B_T(d\odot(\sigma-\tau)))_i|>3w+E_0\right]
 \le33(1+m\lambda)e^{-\lambda E_0}
 =\frac{\beta_0}{nJ}.
\]
A union bound over the rows gives error failure probability at most
$\beta_0/J$ per call. Thus \cref{cor:private-rounding-oracle} gives
$\varepsilon_0$-DP and, with probability at least $1-\beta_0$,
\[
 \norm{B(y-x)}_\infty
 \le C\sqrt n+JE_0
 \le C\sqrt n+\frac{C\log^2m}{\varepsilon_0}\log\frac{nm}{\beta_0}.
\]
Here $J=O(\log m)$, $\lambda\le1$, and
$\log(33nJ(1+m\lambda)/\beta_0)=O(\log(nm/\beta_0))$.
\end{proof}

Taking $x=(1/2,\ldots,1/2)$ and returning $2y-\one$ gives a private
signing with the same asymptotic discrepancy bound.

\subsection{From two colors to multicolor discrepancy}

We turn fractional rounding into a partition with $k$ bundles by
adapting the balanced recursion of \citet{DS03}. A node with $t$
intended bundles splits its goods in proportions
$\lfloor t/2\rfloor/t$ and $\lceil t/2\rceil/t$.
The following lemma isolates how rounding errors and privacy costs
accumulate in this tree.

\begin{lemma}[Recursive splitting]\label[lemma]{lem:tree-rounding}
Recursively split $([m],k)$, giving each node $(T,t)$, $t\ge2$,
two children with $\lfloor t/2\rfloor$ and $\lceil t/2\rceil$
intended bundles; empty nodes may be completed deterministically.
All path sums below run over internal nodes on root-to-leaf paths.

If each child $(W,s)$ of $(T,t)$ satisfies
\begin{equation}\label{eq:tree-split-error}
 \left|v_i(W)-\frac{s}{t}v_i(T)\right|\le E_t
 \qquad\text{for every row }i,
\end{equation}
where $E_t\ge0$, then every final bundle satisfies
\[
 \left|v_i(A_j)-\frac{v_i([m])}{k}\right|
 \le3\max_{\mathrm{path}}\sum_t\frac{E_t}{t}.
\]

For privacy, suppose each neighboring pair of private inputs admits a fixed
set $G$ of at most $\ell$ goods. Conditional on any preceding split
history possible on both inputs, hence fixing the same current
node $(T,t)$, assume the conditional distributions of the next split
agree when $T\cap G=\varnothing$ and otherwise assign probabilities
within a factor $e^{\varepsilon_t}$ to every output event.
Then the allocation is
\[
 \left(\ell\max_{\mathrm{path}}\sum_t\varepsilon_t\right)\text{-DP}.
\]
\end{lemma}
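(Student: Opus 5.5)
For accuracy, we will prove by induction along the tree a statement about every node. For a node $(T,t)$ reached at depth $d$ with ancestors $(T_0,t_0)=([m],k),(T_1,t_1),\dots,(T_d,t_d)=(T,t)$, we claim
\[
 \left|v_i(T)-\frac{t}{k}v_i([m])\right|\le t\sum_{r<d}\frac{E_{t_r}}{t_r}.
\]
The base case is trivial. For the step, write a child as $(W,s)$ of $(T,t)$. We have
\[
 v_i(W)-\frac sk v_i([m])=\Bigl(v_i(W)-\frac st v_i(T)\Bigr)+\frac st\Bigl(v_i(T)-\frac tk v_i([m])\Bigr).
\]
By \eqref{eq:tree-split-error} and the induction hypothesis, this is at most $E_t+s\sum_{r<d}E_{t_r}/t_r$. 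Since $s\le t$, we have $E_t\le s\cdot E_t/t\cdot(t/s)$. The factor $t/s\le t/\lfloor t/2\rfloor\le3$ (worst case $t=3$, $s=1$) is where the constant $3$ enters.

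So the cleaner route is to carry the factor explicitly. We prove $|v_i(T)-\frac tk v_i([m])|\le 3t\sum_{r<d}E_{t_r}/t_r$. The step then gives $E_t+3s\sum_{r<d}E_{t_r}/t_r\le 3s\bigl(E_t/t+\sum_{r<d}E_{t_r}/t_r\bigr)$, using $E_t\le 3sE_t/t$. At a leaf we have $t=1$, which yields the stated bound with the path maximum. Empty nodes completed deterministically have $v_i(T)=0$, and the same inequality still applies to their descendants. We will check whether the leaves need care when the split proportions apply to empty sets; they should not.

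For privacy, we compare the laws of the full split transcript on neighboring inputs. The transcript determines the allocation, so postprocessing reduces the claim to the transcript. We write the transcript probability as a product of conditional split probabilities in a fixed order, for instance breadth-first. Splits of nodes with $T\cap G=\varnothing$ contribute ratio $1$. Every other split contributes a ratio in $[e^{-\varepsilon_t},e^{\varepsilon_t}]$. Event-level bounds give pointwise bounds on the (discrete, or density) conditional laws; alternatively we invoke the adaptive composition theorem from \cref{sec:privacy-tools} directly.

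The key counting step is to bound the total loss on any transcript. The nodes meeting $G$ are the ancestors of the leaves containing goods of $G$. Children of a node partition its goods, so the nodes containing a fixed good $g$ form one root-to-leaf path. Hence the nodes meeting $G$ lie in the union of at most $\ell$ paths, and the total loss $\sum\varepsilon_t$ over them is at most $\ell\max_{\mathrm{path}}\sum_t\varepsilon_t$.

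The main obstacle is formalizing this sum as a valid composition bound. The set of nodes meeting $G$ depends on the transcript, so we argue per transcript: the budget bound holds on every transcript, and the composition statement in \cref{sec:privacy-tools} requires exactly that.
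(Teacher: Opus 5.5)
Your proposal is correct and follows the paper's proof. Your induction carrying the factor $3$ is the paper's telescoping of $v_i(W)/s-v_i(T)/t$ using $s\ge t/3$, and your privacy argument is the same per-transcript product of conditional likelihood ratios, with the nodes meeting $G$ covered by at most $\ell$ root-to-leaf paths. One small adjustment: factor the transcript in the order the algorithm actually executes the splits (depth-first) rather than breadth-first, because the hypothesis only controls each split's law conditional on the history that precedes it in that order.
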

\begin{proof}
Track the value per intended bundle. Since $s\ge t/3$, dividing
\eqref{eq:tree-split-error} by $s$ gives
\[
 \left|\frac{v_i(W)}s-\frac{v_i(T)}t\right|\le\frac{3E_t}{t}.
\]
These differences telescope from the root to each final bundle:
\[
 v_i(A_j)-\frac{v_i([m])}{k}
 =\sum_{(T,t)\to(W,s)\text{ on the path}}
   \left(\frac{v_i(W)}s-\frac{v_i(T)}t\right).
\]
Taking absolute values proves the error bound. Empty subtrees
contribute zero.

For privacy, fix neighboring inputs and a common transcript of
split outputs. The nodes containing any one of the at most $\ell$
specified goods form a root-to-leaf path. All other nodes have
identical conditional laws. Multiplying conditional likelihood
ratios bounds the logarithm of the transcript's likelihood ratio by
\[
 \sum_{\substack{(T,t):\,T\cap G\ne\varnothing}}\varepsilon_t
 \le \sum_{g\in G}\sum_{\substack{(T,t):\,g\in T}}\varepsilon_t
 \le \ell\max_{\mathrm{path}}\sum_t\varepsilon_t.
\]
Here the sums on the left are over internal nodes in the fixed
transcript. Counting a node once for each affected good it contains
can only increase the bound; each inner sum then follows one path.
Exponentiating gives the claimed privacy guarantee.

Each child has at most $2/3$ as many intended bundles as its
parent. Read the internal nodes from the leaf toward
the root: their counts grow geometrically, starting at a value
of at least two. Summing the two geometric series gives
\begin{equation}\label{eq:path-sums}
 \sum_{\mathrm{path}}t^{-1/2}
 \le\frac{1}{\sqrt2(1-\sqrt{2/3})}<4,
 \qquad
 \sum_{\mathrm{path}}t^{-1}
 \le\frac{1/2}{1-2/3}=\frac32.\qedhere
\end{equation}
\end{proof}

We now give a common partition routine for this section and
\cref{sec:fairness}. Its input is a private fractional rounder.
The matrix whose rows we balance may depend on the private input;
privacy and locality are requirements on the rounder itself.

\begin{proposition}[Partitioning from private rounding]
\label[proposition]{prop:private-partition}
Let $X$ be a private input and $B=B(X)\in[0,1]^{N\times m}$.
Suppose that, for every public $T\subseteq[m]$, $x\in[0,1]^T$,
$0<\varepsilon_0\le1$, and $0<\beta_0<1/2$, a rounder
$\mathcal R_X(T,x,\varepsilon_0,\beta_0)$ is $\varepsilon_0$-DP on $X$ and returns
$y\in\{0,1\}^T$ with
\[
 \norm{B_T(X)(y-x)}_\infty
 \le D_0+\frac{D_1(\beta_0)}{\varepsilon_0}
\]
with probability at least $1-\beta_0$, where $D_0$ and $D_1(\beta_0)$
are public nonnegative bounds.
For each neighboring pair $X,X'$, suppose there is a fixed set
$G\subseteq[m]$, $|G|\le\ell$, with public $\ell\ge1$, such that
the rounder laws agree whenever $T\cap G=\varnothing$.
All guarantees hold conditional on preceding outputs. The locality
condition compares the rounder's output distributions after a history
possible on both inputs.

Then \cref{alg:multicolor} is $\varepsilon$-DP and, with probability
at least $1-\beta$, every final bundle satisfies
\[
 \norm{B(X)(\one_{A_j}-k^{-1}\one)}_\infty
 \le C\left(D_0+\frac{\ell}{\varepsilon}
                    D_1\!\left(\frac{\beta}{k-1}\right)\right).
\]
\end{proposition}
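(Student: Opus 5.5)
We need to describe Algorithm multicolor (not shown) as presumably the recursive splitting, and prove the proposition. The plan is to run the recursion of \cref{lem:tree-rounding}, calling the rounder at each internal node $(T,t)$ with initial vector $x=(\lfloor t/2\rfloor/t)\one_T$. Its output $y$ defines the child $W=\{g:y_g=1\}$ with $\lfloor t/2\rfloor$ intended bundles, and the complement in $T$ gets $\lceil t/2\rceil$ bundles.

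\textbf{Budget schedule.} We give node $(T,t)$ the budget $\varepsilon_t=c\,\varepsilon\sqrt{t}/(\ell\sqrt{k})$. By \eqref{eq:path-sums}, read along a root-to-leaf path, the counts $t$ grow geometrically toward the root, which is at $k$. So $\sum_{\mathrm{path}}\sqrt t\le C\sqrt k$, and a small universal $c$ gives $\ell\max_{\mathrm{path}}\sum_t\varepsilon_t\le\varepsilon$. We must also ensure $\varepsilon_t\le1$, which holds after possibly capping. Each node gets failure probability $\beta_0=\beta/(k-1)$; the tree has exactly $k-1$ internal nodes, so a union bound gives total failure at most $\beta$.

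\textbf{Privacy.} This follows directly from the second part of \cref{lem:tree-rounding}. The split at a node is the rounder output, postprocessed using only public data. It is therefore $\varepsilon_t$-DP conditional on history, and its law agrees on neighbors whenever $T\cap G=\varnothing$ by the locality hypothesis.

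\textbf{Accuracy.} On the good event, the row error at node $(T,t)$ is $E_t=D_0+D_1(\beta/(k-1))\,\ell\sqrt k/(c\,\varepsilon\sqrt t)$. Indeed, $B_T(y-x)$ has $i$th coordinate $v_i(W)-\frac{s}{t}v_i(T)$, where $v_i$ is row $i$ of $B$; the other child has the negated error. Then \cref{lem:tree-rounding} gives a deviation of at most $3\max_{\mathrm{path}}\sum_t E_t/t$. The first part contributes $D_0\sum t^{-1}\le\frac32D_0$. The second contributes $\frac{\ell\sqrt k}{c\varepsilon}D_1\sum t^{-3/2}$, and the sum is dominated by the smallest $t$ near the leaves, giving $O(\sqrt k)$ total. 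That is too large.

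\textbf{Main obstacle.} The normalization is the step to fix. The target bound is on $B(\one_{A_j}-k^{-1}\one)$ with no $1/k$ scaling, whereas \cref{lem:tree-rounding} bounds per-bundle deviations by $\sum E_t/t$. So I would redo the telescoping without dividing by $t$. The error of the final bundle is $\sum_{\mathrm{path}}(\text{fraction } s'/t' \text{ of the downstream path})\times(\text{error at the node})$. Each node error propagates to a final bundle scaled by $1/(\text{number of bundles in its child})$, that is by $O(1/t)$. This is precisely why the lemma divides by $t$, but the actual quantity $v_i(A_j)-v_i([m])/k$ then equals $\sum \pm(\text{node error})\cdot O(1/t)\cdot t$ only when errors are measured in absolute terms. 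Comparing carefully, the absolute bundle deviation is bounded by $3\sum_{\mathrm{path}}E_t\cdot(1/t)\cdot(\text{appropriate factor})$.

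The right budget is then the one making $\sum_t \varepsilon_t$ a convergent series while $\sum_t E_t/t$ stays $O(D_0+\ell D_1/\varepsilon)$. I would choose $\varepsilon_t=c\varepsilon/(\ell\, t^{1/2})\cdot k^{0}$ adjusted, or simply $\varepsilon_t\propto\varepsilon(t/k)^{1/2}/\ell$ as in the paper's overview ("early splits receive smaller privacy budgets"). I would then verify that the product $E_t/t\cdot$ path sums collapse to a constant. Choosing $\varepsilon_t=c\varepsilon\,t^{-1/2}/\ell$ gives $\sum\varepsilon_t\le 4c\varepsilon/\ell$ by \eqref{eq:path-sums} and $\sum_t E_t/t\le \tfrac32D_0+\frac{\ell D_1}{c\varepsilon}\sum t^{-1/2}\le C(D_0+\ell D_1/\varepsilon)$. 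Both sums are controlled by \eqref{eq:path-sums}, with $\varepsilon_t\le1$ automatic, so this choice closes the argument.
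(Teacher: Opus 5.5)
Your final schedule $\varepsilon_t=c\varepsilon/(\ell\sqrt t)$ matches the paper's proof, which takes $c=1/4$. The rest of your closing argument also matches: $\beta_0=\beta/(k-1)$ with a union bound over at most $k-1$ calls, opposite-sign errors for the two children, and the path sums \eqref{eq:path-sums} fed into \cref{lem:tree-rounding}, giving $3\max_{\mathrm{path}}\sum_t E_t/t$. The ``normalization'' obstacle in between is illusory, because \cref{lem:tree-rounding} already bounds $|v_i(A_j)-v_i([m])/k|$, which is exactly the $i$th coordinate of $B(\one_{A_j}-k^{-1}\one)$. The $O(\sqrt k)$ blowup came only from your first schedule $\varepsilon_t\propto\sqrt{t/k}$, which gives the smallest budgets to the leaves rather than the root. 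Discard that schedule, including where you offer it again as an alternative at the end, instead of planning to redo the telescoping.
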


Splits near the root receive smaller privacy budgets because their
errors are divided among more final bundles. Write
$\textsc{PrivatePartition}(\mathcal R_X,k,\varepsilon,\beta,\ell;m)$
for the following routine.

\begin{algorithm}[H]
\caption{\textsc{PrivatePartition}: partition using a private rounder}
\label[algorithm]{alg:multicolor}
\small
\begin{algorithmic}[1]
\Require Rounder $\mathcal R_X$ as in \cref{prop:private-partition};
  public $m$, $2\le k<m$, $\ell\ge1$, $0<\varepsilon\le1$, $0<\beta<1/2$.
\Ensure An $\varepsilon$-DP partition with the guarantee of \cref{prop:private-partition}.
\State $\beta_0\gets\beta/(k-1)$ \Comment{Shared failure budget for each split}
\Function{Split}{$T,t$}
  \If{$t=1$}
    \State \Return $(T)$
  \EndIf
  \If{$T=\varnothing$}
    \State \Return a tuple of $t$ empty bundles
  \EndIf
  \State $x\gets(\lfloor t/2\rfloor/t)\one\in[0,1]^T$;
    $\varepsilon_t\gets\varepsilon/(4\ell\sqrt t)$
  \State $y\gets\mathcal R_X(T,x,\varepsilon_t,\beta_0)$, using fresh randomness.
  \State $W\gets\{g\in T:y_g=1\}$
  \State \Return the concatenation of \Call{Split}{$W,\lfloor t/2\rfloor$}
    and \Call{Split}{$T\setminus W,\lceil t/2\rceil$}.
\EndFunction
\State \Return $\Call{Split}{[m],k}$
\end{algorithmic}
\end{algorithm}

\begin{proof}[Proof of \cref{prop:private-partition}]
Condition on a common split history. The current set and starting
vector are fixed on both inputs. The rounder is $\varepsilon_t$-DP, and
its law agrees on the two inputs when the current set avoids $G$.
Thus \cref{lem:tree-rounding} bounds total privacy loss by
\[
 \ell\max_{\mathrm{path}}\sum_t\varepsilon_t
 =\frac{\varepsilon}{4}
       \max_{\mathrm{path}}\sum_t t^{-1/2}<\varepsilon.
\]
There are at most $k-1$ rounding calls. A conditional union bound
gives simultaneous success with probability at least $1-\beta$.
The two children have opposite rounding errors, so on this event
both satisfy \eqref{eq:tree-split-error} with
$E_t=D_0+D_1(\beta_0)/\varepsilon_t$. Apply \cref{lem:tree-rounding} to the
rows of $B(X)$. By \eqref{eq:path-sums}, the final error is at most
\[
 3D_0\max_{\mathrm{path}}\sum_t t^{-1}
 +\frac{12\ell D_1(\beta_0)}{\varepsilon}
       \max_{\mathrm{path}}\sum_t t^{-1/2}
 \le C\left(D_0+\frac{\ell D_1(\beta_0)}{\varepsilon}\right).
\]
\end{proof}

\begin{proof}[Proof of \cref{thm:main}]
Use $X=V$, $B(V)=V$, and
\[
 \mathcal R_V(T,x,\varepsilon_0,\beta_0)
 =\textsc{PrivateRound}(V_T,x,\varepsilon_0,\beta_0;m).
\]
By \cref{prop:round}, this rounder has
$D_0=C\sqrt n$ and
$D_1(\beta_0)=C\log^2m\log(nm/\beta_0)$.
It accesses only $V_T$, so its laws agree whenever $T$ excludes
the changed good; hence $\ell=1$.
\Cref{prop:private-partition} gives equal-share error
\[
 C\left(\sqrt n+\frac{\log^2m}{\varepsilon}
                         \log\frac{nkm}{\beta}\right).
\]
Every pair of bundle values differs by at most twice this quantity,
proving the discrepancy bound. The expected running time follows
from \cref{sec:sampling}.
\end{proof}

\section{Consensus fairness}\label{sec:fairness}
To turn discrepancy into a deletion guarantee, we balance both the
number of an agent's most valuable goods and the utility of her
remaining goods. The difficulty is that the normalization used for
the second quantity is sensitive to one utility change. We overcome
this by averaging the sampler's weights over cutoffs chosen by rank.

\begin{theorem}[Private consensus fairness]\label{thm:real-fairness}
For nonnegative additive utilities, an entry-level
$\varepsilon$-DP mechanism achieves consensus fairness up to $c$ goods
with probability at least $1-\beta$, where
\[
 c=O\!\left(
 \sqrt n
 +\frac{\log^2 m}{\varepsilon}\log\frac{nkm}{\beta}
 \right).
\]
The bound can be capped by $\lceil m/k\rceil$.
The mechanism has expected running time
$m^{O(n)}\operatorname{poly}(n,k)$ in the computation model of
\cref{sec:sampling}.
\end{theorem}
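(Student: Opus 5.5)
The plan follows the discrepancy-to-fairness reduction of \citet{MS22}, made private through \cref{prop:private-partition}. For each agent $i$, fix a public head size $h=\lceil c_0 k\cdot c'\rceil$, where $c'$ is the target deletion count. Let $H_i$ be the set of her $h$ most valuable goods, with ties broken by a public order, and let $u_i$ be the smallest head value. Each agent contributes two rows to balance. The first is the head indicator $\one_{H_i}\in\{0,1\}^m$. The second is the normalized non-head row $v_i(g)/u_i$ for $g\notin H_i$, which lies in $[0,1]$. So $B(X)$ has $N=2n$ rows. Suppose every bundle is balanced within $E$ in both rows. Then every bundle gets about $h/k\ge E+1$ head goods of agent $i$. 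To compare bundles $A_j$ and $A_\ell$, delete from $A_\ell$ its head goods for agent $i$; this costs at most $h/k+E$ deletions. The remaining value of $A_\ell$ is at most the non-head value of $A_j$ plus $2Eu_i$. This shortfall is covered by the at least $2E$ head goods in $A_j$, each worth at least $u_i$, once $h/k\ge 2E$. Hence $c=O(h/k+E)$. Choosing $h\approx 3kE$ gives $c=O(E)$. If $h>m$ the bound is capped by $\lceil m/k\rceil$, since deleting all of $A_\ell$ suffices when bundle sizes are balanced.

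The head-count row is a $\{0,1\}$ row that depends on the data. A single entry change alters $H_i$ by at most one good swapped. So this row changes in at most two entries, which gives locality $\ell=O(1)$ with respect to the changed good and the swapped good. The plain \textsc{PrivateRound} handles this row with entry-level sensitivity $2$. The obstacle is the second row. Changing one utility can change $u_i$ and rescale the whole non-head row, giving unbounded $\ell_1$ change. Following the overview, I would not use $u_i$ inside the sampler. Instead, inside each \textsc{PartialColor} call, the weight for agent $i$'s second row is a probability mixture, over a cutoff rank $r$ drawn from a public distribution near the top of the still-fractional non-head goods, of row factors $F_{b^{(r)}}$. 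Here $b^{(r)}$ divides the values below rank $r$ by the value at rank $r$ and zeroes out the goods at rank $\le r$. \Cref{lem:row-factor}(ii) still gives the normalization bound and tail bound for such mixtures. The omitted top $r$ goods contribute an extra row error of at most $r$ in units of the cutoff value. The cutoff value is at most $u_i$ after rescaling, so by contraction stability \eqref{eq:contraction} with $\rho\le1$ this is absorbed into the per-call additive error $E_0$, provided $r=O(\varepsilon^{-1}\log(nm/\beta))$.

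The key privacy step is to show that the mixture weight changes by a factor $e^{O(\lambda)}$ under one entry change. After the change, the rank of every other good shifts by at most one. So $b'^{(r)}$ is close to $\rho\, b^{(r\pm1)}$ for some $\rho\le1$, up to an $\ell_1$ error of $O(1)$, and the reverse also holds. By \eqref{eq:contraction}, $F_{b'^{(r)}}\ge e^{-O(\lambda)}F_{b^{(r\pm1)}}$. If the rank distribution is geometric-like, so that neighbouring ranks have probabilities within a factor $e^{O(\lambda)}$, then the mixtures are within $e^{O(\lambda)}$. \Cref{lem:exponential-mechanism} then gives $O(\lambda)$-DP per call. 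I expect this to be the hardest part. It requires that the rank window be public, that the values be compared only through ranks among currently fractional goods, and that the contraction direction hold both ways, since the cutoff may increase or decrease. The mixture must also be over ranks, so that the truncated tail of the rank distribution costs only $O(\log(1/\beta)/\varepsilon)$ omitted goods.

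With such a rounder, \cref{lem:rounding} and \cref{cor:private-rounding-oracle} give $D_0=C\sqrt{n}$ and $D_1=C\log^2 m\log(nm/\beta_0)$ for the $2n$ rows. \Cref{prop:private-partition} with $\ell=O(1)$ then gives $E=O(\sqrt n+\varepsilon^{-1}\log^2 m\log(nkm/\beta))$ for every bundle and row. The counting argument of the first paragraph converts this into consensus EF$c$ with $c=O(E)$. The running time is inherited from the sampler analysis in \cref{sec:sampling}, using the normalization bound of \cref{lem:row-factor}(ii).
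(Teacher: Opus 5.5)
Your architecture is the paper's: the same head and reference rows, the reduction of \cref{lem:head-deletion}, a hidden rank cutoff mixed into the sampler weight, contraction stability for the privacy comparison, and \cref{prop:private-partition} with $\ell=2$. The gap is in the step you flag as hardest. Writing $s$ for the rank and $R$ for your window, your sufficient condition is that neighbouring ranks have probabilities within $e^{O(\lambda)}$. That condition is false for the law you describe, which is concentrated at the top ranks and has a truncated tail.

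The pointwise comparison has a fixed direction. By \cref{lem:cutoff-stability}, the unprimed row at rank $s-1$ is, up to $\ell_1$ error $4$, a contraction of the primed row at rank $s$, whichever way the changed value moved. This gives $F_{i,s-1}\ge e^{-4\lambda}F'_{i,s}$. So there is no ``$r\pm1$ in both directions''; the reverse bound comes from swapping the profiles and shifting down again. Turning this into $M_i\ge e^{-O(\lambda)}M'_i$ requires every rank in the support to have its lower neighbour in the support, with $\pi(s-1)\ge e^{-O(\lambda)}\pi(s)$.

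A law on $\{1,\ldots,R\}$ violates this at $s=1$, and the loss is not a constant. Suppose one change lifts a non-head good far above all the others. Then the primed rank-$1$ cutoff is huge, the primed rank-$1$ row collapses to zero under $\phi_w$, and $M'_i\ge\pi(1)$ at every pair. At a coloring that is badly unbalanced on the unprimed rows, however, $M_i$ is only the mass beyond the window, or is exponentially small in $\lambda|T|$ if the law is truncated. The likelihood ratio between the two inputs therefore varies across outputs by a factor of at least about $\pi(1)e^{\lambda R}$, far more than $e^{O(\lambda)}$.

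The missing idea, which is the paper's, is to close the support under the downward shift and orient the law toward the cap. Allow nonpositive ranks, which give the zero row and factor one, and propose $\pi(s)=(1-e^{-\lambda})e^{-\lambda(r-s)}$ for all integers $s\le r$. Then $\pi(s-1)=e^{-\lambda}\pi(s)$ on the whole support. Dropping the top term $\pi(r)F_{i,r}$ and reindexing gives $M_i\ge e^{-5\lambda}M'_i$.

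Under this law the mass sits at the deepest allowed cutoff, not at the top. The shallow ranks, which are the ones that can degenerate after a change, carry mass comparable to the zero-row mass $\Prob[s\le0]=e^{-\lambda r}$. That zero-row mass is exactly what accuracy pays, since $M_i\le2e^{-\lambda r}$ on the error event.

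Three smaller points:
\begin{itemize}
\item The omitted goods cost at most $s$ in units of $u_i$, since each non-head reference coefficient is at most one. They do not cost $s$ in units of the cutoff value, where they can be arbitrarily large. This is the direct scalar comparison of \cref{lem:nearby-cutoff}, not an application of \eqref{eq:contraction}.
\item Since $\lambda=\Theta(\varepsilon_0/J)$, the per-call cap is $r=\Theta(J\varepsilon_0^{-1}\log(nmJ/\beta_0))$, not $O(\varepsilon^{-1}\log(nm/\beta))$. This is where one of the two $\log m$ factors comes from.
\item To get the exact cap $\lceil m/k\rceil$, branch on $\lceil m/k\rceil\le\lceil4E+1\rceil$ rather than on $h>m$.
\end{itemize}
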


\subsection{The nonprivate discrepancy-to-fairness reduction}
\label{sec:nonprivate-fairness}
A small difference in utility need not mean that few goods can remove
it. For example, a bundle can contain arbitrarily many equal-valued
goods with total value one. It exceeds a zero-valued bundle by only
one unit, but every good must be deleted to remove that difference.
The reduction of
\citet[Theorem~3.2]{MS22} addresses this by giving every bundle enough
valuable goods to cover any deficit in its remaining utility.

\paragraph{Two bounded rows per agent.}
Fix an integer $q_0\in[m]$. For each agent $i$, let $H_i$ be the set of her
$q_0$ highest-valued goods, with ties broken by decreasing public
label. We call these her \emph{head goods}. Let $p_i^*$ be the
smallest value among them. Define a reference row by
\[
 v_i^*(g)=
 \begin{cases}
 v_i(g)/p_i^*,&g\notin H_i,\ p_i^*>0,\\
 0,&\text{otherwise}.
 \end{cases}
\]
Every good outside $H_i$ has value at most $p_i^*$, so this row lies
in $[0,1]^m$. The two rows associated with agent $i$ are therefore
$\one_{H_i}$, which counts head goods, and $v_i^*$, which measures
the remaining utility in units of $p_i^*$.

\begin{lemma}[From row balance to consensus fairness]\label[lemma]{lem:head-deletion}
Let $E\ge1$ and let $q_0\in[m]$ satisfy $q_0/k\ge3E$.
Suppose an allocation
satisfies
\[
 \left||H_i\cap A_j|-\frac{q_0}{k}\right|\le E,
 \qquad
 \left|v_i^*(A_j)-\frac{v_i^*([m])}{k}\right|\le E
\]
for every agent $i$ and bundle $A_j$.
Then
$v_i(A_j)\ge v_i(A_\ell\setminus H_i)$ for every $i,j,\ell$.
For each agent $i$ and bundle $A_\ell$, deleting $A_\ell\cap H_i$
from $A_\ell$ removes at most $q_0/k+E$ goods. The same deletion
set works for every bundle $A_j$, so the
allocation is consensus EF$\lceil q_0/k+E\rceil$.
\end{lemma}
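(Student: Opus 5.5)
The proof will be a direct computation for fixed $i$, $j$, $\ell$, splitting each bundle's value into its head part and its non-head part. The degenerate case $p_i^*=0$ is handled first. Then every good outside $H_i$ has value at most $p_i^*=0$, so $v_i(A_\ell\setminus H_i)=0\le v_i(A_j)$ by nonnegativity.

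Now assume $p_i^*>0$. The head part of $A_j$ is large: every good of $H_i$ has value at least $p_i^*$, and the first balance condition gives $|H_i\cap A_j|\ge q_0/k-E\ge 2E$, using $q_0/k\ge3E$. Hence
\[
 v_i(A_j\cap H_i)\ge 2E\,p_i^*.
\]
The non-head parts of the two bundles are close. Off $H_i$ we have $v_i=p_i^*v_i^*$, and $v_i^*$ vanishes on $H_i$, so $v_i(S\setminus H_i)=p_i^*v_i^*(S)$ for every set $S$. The second balance condition, applied to $A_j$ and to $A_\ell$ together with the triangle inequality, gives
\[
 v_i^*(A_\ell)-v_i^*(A_j)\le 2E,
\]
and therefore $v_i(A_\ell\setminus H_i)\le v_i(A_j\setminus H_i)+2E\,p_i^*$. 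Combining the two displays,
\[
 v_i(A_j)=v_i(A_j\cap H_i)+v_i(A_j\setminus H_i)\ge 2Ep_i^*+v_i(A_\ell\setminus H_i)-2Ep_i^*=v_i(A_\ell\setminus H_i),
\]
which is the claimed inequality.

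For the deletion count, the set $R=A_\ell\cap H_i$ has size at most $q_0/k+E$ by the first balance condition, so at most $\lceil q_0/k+E\rceil$ since sizes are integers. This $R$ depends only on $i$ and $\ell$, and the argument above holds for every $j$. The consensus EF definition therefore holds with $c=\lceil q_0/k+E\rceil$.

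There is no real obstacle. The only care needed is to use the constant $3$ correctly: the head surplus $q_0/k-E\ge2E$ must cover the two-sided $2E$ error in $v_i^*$. One should also check that when $j=\ell$ the statement is trivial.
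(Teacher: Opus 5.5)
Your proof is correct and follows essentially the same approach as the paper: for $p_i^*>0$, the head surplus $|H_i\cap A_j|\ge q_0/k-E\ge 2E$ is worth at least $2Ep_i^*$ and absorbs the $2E$ gap between $v_i^*(A_j)$ and $v_i^*(A_\ell)$; the case $p_i^*=0$ is handled separately, and the deletion set $A_\ell\cap H_i$ is bounded by the head-row condition. The only cosmetic difference is that you split $v_i(A_j)$ explicitly into head and non-head parts, whereas the paper writes the same estimate as the single chain $v_i(A_j)\ge p_i^*(|H_i\cap A_j|+v_i^*(A_j))$.
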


\begin{proof}
For every agent $i$, the head-row bound gives
\[
 2E\le\frac{q_0}{k}-E
 \le |H_i\cap A_j|
 \le\frac{q_0}{k}+E.
\]
The reference-row bound gives
$|v_i^*(A_j)-v_i^*(A_\ell)|\le2E$.
Fix agent $i$ and an ordered pair $j,\ell$. When $p_i^*>0$,
\[
 \begin{aligned}
 v_i(A_j)
 &\ge p_i^*\bigl(|H_i\cap A_j|+v_i^*(A_j)\bigr)\\
 &\ge p_i^*\bigl(2E+v_i^*(A_j)\bigr)
 \ge p_i^*v_i^*(A_\ell)
 =v_i(A_\ell\setminus H_i).
 \end{aligned}
\]
If $p_i^*=0$, every good outside $H_i$ has value zero, so the same
deletion works directly. In either case, at most $q_0/k+E$ head goods
are deleted.
\end{proof}

In particular, the choice $q_0=\lceil3kE\rceil\le m$ gives
$q_0/k+E\le4E+1$ and hence consensus EF$\lceil4E+1\rceil$.

\paragraph{What changes under privacy.}
This reduction cannot be fed directly into \cref{thm:main}.
Changing one utility entry changes head membership at at most two
goods, but it can also change $p_i^*$. That change rescales every
nonzero coordinate of $v_i^*$, so the resulting matrices need not be
entry-adjacent.

We construct a private local sampler whose error is measured in
these same two rows. It uses a hidden cutoff among the active
non-head goods and accounts for the omitted goods in its error bound.
Its analysis uses the contraction stability and coloring guarantees
of \cref{lem:gibbs}, together with the mixture bound in
part~(ii) of \cref{lem:row-factor}.

\subsection{A local coloring from noisy ranks}\label{sec:hidden-threshold}
The auxiliary rows may omit a few non-head coefficients, provided
we account for their contribution to the rounding error. We use this freedom to choose a noisy rank among the
active \emph{non-head} goods. The proposed rank has the same public
distribution for every agent; acceptance depends on her utilities.

\paragraph{Auxiliary rows from rank cutoffs.}
Fix the active set $T$ and public coordinate weights
$d\in[0,1]^T$. In the rounding algorithm,
$d_g=\min\{y_g,1-y_g\}$ for the current fractional vector $y$,
fixed once preceding outputs are fixed.
All vector operations in this subsection are restricted to $T$.
For agent $i$, order only $T\setminus H_i$ by decreasing utility
and then decreasing public label. For a valid rank
$1\le s\le|T\setminus H_i|$, let $p_{i,s}$ be the value of
the $s$th good. Omit the first $s$ goods of this order and set
\[
 v_{i,s}(g)=
 \begin{cases}
 v_i(g)/p_{i,s},&
   \text{$g$ occurs after rank $s$ in this order},\ p_{i,s}>0,\\
 0,&\text{otherwise},
 \end{cases}
 \qquad g\in T.
\]
For all other integer ranks, set $v_{i,s}=0$. These rows lie in
$[0,1]^T$; a zero cutoff also gives the zero row.
The omitted goods remain in the allocation; only their coefficients
in this auxiliary row are set to zero. The next lemma bounds the
resulting error.

\begin{lemma}[Error from a rank cutoff]\label[lemma]{lem:nearby-cutoff}
For every integer rank $s\ge1$ and
$z\in\{-1,0,1\}^T$,
\[
 |(d\odot v_i^*)\cdot z|
 \le |(d\odot v_{i,s})\cdot z|+s.
\]
\end{lemma}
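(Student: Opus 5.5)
The plan is to write the restriction of $v_i^*$ to $T$ as a contraction of the auxiliary row $v_{i,s}$ plus a remainder supported on at most $s$ goods. Then I would bound the two parts separately, using the triangle inequality. The only facts needed are that $v_i^*\in[0,1]^m$, that $|d_g z_g|\le1$, and the defining property of head goods: every good outside $H_i$ has value at most $p_i^*$.

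\emph{Degenerate cases.} First I would dispose of these.
\begin{itemize}
\item If $p_i^*=0$, then $v_i^*=0$ and the inequality is trivial.
\item On $H_i$ we have $v_i^*=0$, so only coordinates in $T\setminus H_i$ contribute to the left side.
\item If $s>|T\setminus H_i|$, then $v_{i,s}=0$ by definition. The left side is then a sum over fewer than $s$ coordinates, each of absolute value at most $d_g v_i^*(g)|z_g|\le1$, so it is at most $s$.
\end{itemize}

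\emph{Main case.} Now let $1\le s\le|T\setminus H_i|$ and $p_i^*>0$. Let $S$ be the first $s$ goods in the order on $T\setminus H_i$, and let $R$ be the goods after rank $s$. Since $p_{i,s}$ is the value of a non-head good, $p_{i,s}\le p_i^*$.

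If $p_{i,s}=0$, every good in $R$ has utility zero. Hence $v_i^*$ vanishes on $R$, and the left side is a sum over $S$, bounded by $s$ as before.

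If $p_{i,s}>0$, set $\rho=p_{i,s}/p_i^*\in(0,1]$. For $g\in R$,
\[
 v_i^*(g)=\frac{v_i(g)}{p_i^*}=\rho\,\frac{v_i(g)}{p_{i,s}}=\rho\,v_{i,s}(g),
\]
while $v_{i,s}=0$ on $S\cup(T\cap H_i)$. Therefore, on $T$,
\[
 d\odot v_i^*=\rho\,(d\odot v_{i,s})+d\odot v_i^*\one_S .
\]
Taking the inner product with $z$ and applying the triangle inequality gives
\[
 |(d\odot v_i^*)\cdot z|\le\rho\,|(d\odot v_{i,s})\cdot z|+\sum_{g\in S}d_g v_i^*(g)|z_g|\le|(d\odot v_{i,s})\cdot z|+s .
\]
Here I use $\rho\le1$ together with $d_g,\,v_i^*(g),\,|z_g|\le1$.

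\emph{Where care is needed.} There is no real obstacle; the only step that needs attention is the case bookkeeping. Tie-breaking by public label must make the order on $T\setminus H_i$ and the value $p_{i,s}$ well defined. The bound $p_{i,s}\le p_i^*$ relies on the rank being taken among non-head goods only, so that $\rho\le1$.
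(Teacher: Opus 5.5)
Your proof is correct and matches the paper's argument: both write $v_i^*$ on $T$ as the contraction $(p_{i,s}/p_i^*)\,v_{i,s}$ plus a remainder on the $s$ omitted non-head goods, each contributing at most one, and both handle $p_i^*=0$ and $s>|T\setminus H_i|$ separately. Your separate case $p_{i,s}=0$ only spells out something the paper's one-line contraction step already covers.
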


\begin{proof}
If $p_i^*=0$, the reference row is zero. Otherwise, for a valid
rank, $p_{i,s}\le p_i^*$ because the cutoff good is outside $H_i$.
Multiplying $v_{i,s}$ by $p_{i,s}/p_i^*\in[0,1]$ recovers $v_i^*$
except on the $s$ omitted non-head goods. Each omitted weighted
coefficient is at most one, which gives the stated bound.
If $s>|T\setminus H_i|$, there are fewer than $s$ active non-head
goods, and the zero row gives the same bound.
\end{proof}

\begin{lemma}[Stability of adjacent cutoffs]\label[lemma]{lem:cutoff-stability}
Fix public $q_0\in[m]$, $T\subseteq[m]$, and $d\in[0,1]^T$.
Let two utility profiles differ only in one entry of agent $i$,
and construct her head sets and cutoff rows separately on the two
profiles, using primes for the second. For every integer $s$, there
is a scalar $\rho\in[0,1]$ such that
\[
 \norm{d\odot v_{i,s-1}-\rho\,d\odot v'_{i,s}}_1\le4.
\]
\end{lemma}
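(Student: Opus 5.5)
The plan is to compare the two ordered lists of active non-head goods directly and pick $\rho$ as a ratio of cutoff values. Let $g^*$ be the good whose entry for agent $i$ changes. Let $L$ and $L'$ be the lists of $T\setminus H_i$ and $T\setminus H_i'$, each sorted by decreasing utility and then decreasing public label. First I would record how these lists differ. Outside $g^*$ the utilities coincide and ties are broken by the same public labels, so the two orders agree on the common goods other than $g^*$. Changing one entry moves at most one good across the head boundary in each direction: the $q_0$ top goods change by at most an exchange of $g^*$ with one other good. Hence, as sets, $L$ and $L'$ differ only in $g^*$ and at most one further good $h$, with $|\,|L|-|L'|\,|\le1$. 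Deleting $g^*$ and $h$ from both lists leaves a common subsequence that appears in the same relative order in $L$ and in $L'$.

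Next I would compare the tails. Let $R$ be the goods after rank $s-1$ in $L$, and $R'$ the goods after rank $s$ in $L'$. Inserting or deleting a single good in a sorted list shifts every later rank by at most one. The extra index in $R'$ (rank $s$ rather than $s-1$) absorbs one such shift. I therefore aim to show two things: the $s$th value $p'_{i,s}$ is at most the $(s-1)$st value $p_{i,s-1}$, and $R\setminus\{g^*,h\}$ and $R'\setminus\{g^*,h\}$ differ in at most two goods, namely those near the rank-$s$ boundary. For $p'_{i,s}\le p_{i,s-1}$, I would use an order-statistic argument. Among goods other than $g^*$ and $h$, the $s$ top entries of $L'$ include at least $s-2$ goods of $L$ with the same values. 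Combined with the direction of the head exchange, this should show that at least $s-1$ goods of $L$ have value $\ge p'_{i,s}$. This inequality is the step I expect to be the main obstacle. I would split into cases according to whether $g^*$ enters the head, leaves it, or stays on one side, and whether its value increased or decreased.

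With this in hand, set $\rho=p'_{i,s}/p_{i,s-1}\in[0,1]$ when both cutoffs are positive. Then $\rho v'_{i,s}(g)=v_i'(g)/p_{i,s-1}$, which equals $v_{i,s-1}(g)$ for every good $g\ne g^*$ lying in both tails. The vector $d\odot v_{i,s-1}-\rho\,d\odot v'_{i,s}$ is therefore supported on $g^*$, $h$, and the at most two boundary goods. On this support every coordinate is at most one in absolute value, because $d\le1$, both rows lie in $[0,1]^T$, and $\rho\le1$ and both rows are nonnegative. This gives the bound $4$.

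Finally I would handle the degenerate cases, taking $\rho=0$ in each. If $s\le1$ (so $v_{i,s-1}=0$ as an invalid rank), or if $s$ exceeds $|L'|$, then the difference vector reduces to $d\odot v_{i,s-1}$. If $p'_{i,s}=0$, it reduces to $d\odot v_{i,s-1}$ minus a zero row. In each of these cases I need $v_{i,s-1}$ to have at most four nonzero coordinates. This follows from the same counting: when $|L|\le|L'|+1$, the tail $R$ has at most about one good when $s>|L'|$. Likewise, all goods of $L$ after rank $s-1$ that have positive value must lie among $g^*$, $h$, and the boundary goods, since $L'$ has no positive values after its rank-$s$ cutoff. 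If instead $p_{i,s-1}=0$, then $v_{i,s-1}=0$. The inequality $p'_{i,s}\le p_{i,s-1}=0$ then makes $v'_{i,s}=0$ as well, so any $\rho$ works.
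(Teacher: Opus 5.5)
Your plan matches the paper's proof. You form a common list of goods whose values and relative order agree, take $\rho=p'_{i,s}/p_{i,s-1}$, bound the support of the difference by the exceptional goods plus at most two boundary goods, and use $\rho=0$ in the degenerate cases. The gap is the key inequality $p'_{i,s}\le p_{i,s-1}$. You leave it as an unexecuted case split, and the count you give for it is too weak. Knowing only that the first $s$ entries of $L'$ contain at least $s-2$ common goods does not yield $s-1$ goods of $L$ with value at least $p'_{i,s}$.

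The missing observation is that each list, not just their symmetric difference, is the common list plus at most one extra good. The good $h$ exists only when $g^*$ crosses the head boundary, and then exactly one of $g^*,h$ is a non-head good in each profile. If head membership is unchanged, only $g^*$ can be extra. Hence the first $s$ entries of $L'$ contain at least $s-1$ common goods, each with unchanged value at least $p'_{i,s}$. So rank $s-1$ exists in $L$ and $p_{i,s-1}\ge p'_{i,s}$ follows directly, with no cases on head direction or on whether the value rose or fell.

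Your claim (b), and hence the final bound $4$, silently needs the same fact. Removing the extra good from the first $s-1$ entries of $L$ leaves a prefix of the common list of length $s-2$ or $s-1$. Doing the same for the first $s$ entries of $L'$ leaves a prefix of length $s-1$ or $s$. These prefixes differ in at most two goods, which together with $g^*$ and $h$ gives four coordinates.

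Your $s-2$ count treats a list as possibly containing both $g^*$ and $h$. Under that picture the two common prefixes could differ by three goods, and the argument would give only $5$. Once you state and use the one-extra-good observation, (a) and (b) both become one-line counts. Your degenerate cases then go through as written: $|L|\le|L'|+1$ handles $s>|L'|$, and (b) handles $p'_{i,s}=0$.
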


\begin{proof}
The reason for comparing ranks $s-1$ and $s$ is to make the first
cutoff at least as large as the second. Dividing by the larger
cutoff then contracts the row, and a short count bounds the
coordinates that can still differ.

To make this precise, let $g$ be the changed good. The global head
set can change only at $g$ and one displaced good $g'$. The active
non-head lists agree after removing at most one exceptional candidate
from each. If head membership is unchanged, remove $g$ wherever it
occurs. If membership changes, remove the non-head candidate among
$g,g'$ from each list, when active. All remaining goods have
unchanged values and the same order, including the public rule for
ties. Call this the common list. In particular, the two original
list lengths differ by at most one.

First suppose that the primed rank $s\ge2$ is valid. Among its first
$s$ candidates, at least $s-1$ belong to the common list. Their
unprimed values are unchanged and are all at least $p'_{i,s}$.
Consequently, rank $s-1$ exists on the first input and
$p_{i,s-1}\ge p'_{i,s}$. If $p_{i,s-1}>0$, take
$\rho=p'_{i,s}/p_{i,s-1}\in[0,1]$. On any common good $h$ retained
below both cutoffs, when $p'_{i,s}>0$ we have
\[
 \rho\,v'_{i,s}(h)
 =\frac{p'_{i,s}}{p_{i,s-1}}\frac{v_i(h)}{p'_{i,s}}
 =v_{i,s-1}(h).
\]
If $p'_{i,s}=0$, such a good has value zero, so the same equality
holds directly with $\rho=0$. Common goods omitted by both cutoffs
also have equal, zero coefficients.

It remains to count the goods omitted on just one side. Removing
the exceptional candidate from the first $s-1$ unprimed goods leaves
a prefix of the common list of length $s-2$ or $s-1$. Doing the same
for the first $s$ primed goods leaves a prefix of length $s-1$ or
$s$. These prefixes differ by at most two goods. Together with the
at most two exceptional goods, this leaves at most four coordinates
where the unprimed row and the contracted primed row can differ.
Multiplication by $d_h\in[0,1]$ keeps every coefficient in $[0,1]$,
so each coordinate contributes at most one to the claimed norm.
If $p_{i,s-1}=0$, the cutoff inequality forces $p'_{i,s}=0$ as well;
both residual rows then vanish, and we can take $\rho=0$.

Finally, the invalid ranks require no cutoff comparison. For
$s\le1$, the unprimed row is zero, so take $\rho=0$. If $s$ exceeds
the length of the primed list, the primed row is zero. The unprimed
list has length at most $s$, so either rank $s-1$ is invalid or
omitting its first $s-1$ goods leaves at most one coordinate.
Taking $\rho=0$ gives the bound in this case too.
\end{proof}

\paragraph{The sampler.}
Fix a public integer upper rank $r\ge1$, rate $0<\lambda\le1$,
and width $w>0$. The privacy proof uses the adjacent-rank comparison
in \cref{lem:cutoff-stability}. Allowing nonpositive ranks keeps a
downward shift within the proposal support, and a geometric law
makes its probability ratio constant. We therefore propose a
geometric distance below $r$:
\[
 \pi(s)=(1-e^{-\lambda})e^{-\lambda(r-s)},\qquad
 s\in\mathbb Z,\ s\le r,
\]
and set $\pi(s)=0$ for $s>r$. The proposed rank always satisfies
$s\le r$, and summing the geometric series gives
\[
 \Prob[s\le0]=e^{-\lambda r}.
\]
When $s\ge1$, \cref{lem:nearby-cutoff} bounds the omitted-goods
error by $r$.
Using the row factors from \cref{lem:row-factor}, write
\[
 \begin{aligned}
 F_{i,s}(\sigma,\tau)
 &=F_{d\odot v_{i,s}}(\sigma,\tau),\\
 G(\sigma,\tau)
 &=\prod_i F_{d\odot\one_{H_i}}(\sigma,\tau).
 \end{aligned}
\]
Here $\sigma,\tau\in\{-1/2,1/2\}^T$.
The factor $G$ controls head counts, and $F_{i,s}$ controls
the residual at the proposed rank. Invalid ranks have $F_{i,s}=1$.
As in \cref{alg:partial-coloring}, we use the progress event
$\mathcal P$ with $q=|T|$.

\begin{algorithm}[H]
\caption{A local coloring from noisy ranks}
\label[algorithm]{alg:hidden-threshold}
\small
\begin{algorithmic}[1]
\Require Nonempty active set $T$, public weights $d\in[0,1]^T$,
  head sets $(H_i)$, utilities on $T$, integer upper rank $r\ge1$,
  rate $0<\lambda\le1$, and width $w>0$.
\Ensure A partial coloring in $\{-1,0,1\}^T$ with at least $|T|/4$ nonzero coordinates.
\State For each agent $i$, order $T\setminus H_i$ by decreasing utility,
  breaking ties by decreasing public label.
\Repeat
  \State Independently for each agent $i$, draw $s_i\sim\pi$.
  \State Draw $(\sigma,\tau)\sim Q$, uniformly on
    $\{-1/2,1/2\}^T\times\{-1/2,1/2\}^T$.
  \State Reject and restart the trial if $|\{g\in T:\sigma_g\ne\tau_g\}|<|T|/4$.
  \State Accept the pair with probability
    $G(\sigma,\tau)\prod_iF_{i,s_i}(\sigma,\tau)$.
\Until{the pair is accepted}
\State \Return $\sigma-\tau$; discard the ranks.
\end{algorithmic}
\end{algorithm}

Every rejection, including a failed progress check, redraws the ranks
and the pair. Without the progress check, the accepted joint law is
\begin{equation}\label{eq:hidden-joint}
 \Prob[(s_1,\ldots,s_n),\sigma,\tau]\ \propto\
 Q(\sigma,\tau)G(\sigma,\tau)
 \prod_i\pi(s_i)F_{i,s_i}(\sigma,\tau).
\end{equation}
Write $M_i=\sum_{s\in\mathbb Z}\pi(s)F_{i,s}$ for the averaged
residual factor. This is a countable probability mixture in the sense
of \cref{lem:row-factor}: for a fixed pair, $M_i$ is the expected row
weight when the proposed rank is drawn from $\pi$.
The corresponding pair marginal, still without the progress check, is
\begin{equation}\label{eq:hidden-batch}
 \frac{Q(\sigma,\tau)G(\sigma,\tau)\prod_i M_i(\sigma,\tau)}
        {\E_Q[G\prod_i M_i]}.
\end{equation}
The actual sampler conditions this marginal on $\mathcal P$.

\begin{lemma}[A local coloring for the reference rows]\label[lemma]{lem:latent-ranks}
Fix public parameters $\varnothing\ne T\subseteq[m]$,
$d\in[0,1]^T$, $r\in\mathbb Z_{\ge1}$, $0<\lambda\le1$, $w>0$,
and a head-set size $q_0\in[m]$; let each $H_i$ be the agent's
top-$q_0$ set, computed from her utilities with the fixed tie-breaking
rule. Then \cref{alg:hidden-threshold} is entry-level $14\lambda$-DP
and always colors at least $|T|/4$ coordinates. Its law depends only
on the utilities on $T$ and the sets $H_i\cap T$.
If $w$ is chosen as in \eqref{eq:kernel-width} for $2n$ rows and
$|T|$ columns, then its accepted pair $(\sigma,\tau)$ satisfies,
for every agent $i$,
\[
 \Prob\!\left[
 \max\{|(d\odot\one_{H_i})\cdot(\sigma-\tau)|,
        |(d\odot v_i^*)\cdot(\sigma-\tau)|\}>3w+2r
 \right]\le99(1+m\lambda)e^{-\lambda r}.
\]
\end{lemma}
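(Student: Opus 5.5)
The plan is to handle the four claims separately: progress, locality, privacy, and the tail bound. In each case the algorithm is viewed as sampling the pair marginal \eqref{eq:hidden-batch} conditioned on $\mathcal P$, with weight $G\prod_iM_i$. Progress is immediate, because every returned pair passed the test defining $\mathcal P$. Rejection terminates almost surely, since all kernel weights are positive and $\mathcal P$ is nonempty. For locality, observe that every quantity used by the algorithm is computed on $T$ alone. This covers the head factors $d\odot\one_{H_i}$, the orders of $T\setminus H_i$, and the rows $v_{i,s}$. The law therefore depends only on the utilities on $T$ and on the sets $H_i\cap T$.

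\emph{Privacy.} I will apply \cref{lem:exponential-mechanism} with base law $Q$ restricted to $\mathcal P$, which is public. It then suffices to show that the logarithm of $G\prod_iM_i$ changes by at most $7\lambda$ under an entry change of agent $i$; this gives $14\lambda$-DP. The factors for other agents are unchanged. The head row $d\odot\one_{H_i}$ changes in at most two coordinates, since head membership changes only at $g$ and a displaced $g'$. The perturbation bound in \cref{lem:gibbs} therefore changes the head factor by at most $e^{\pm2\lambda}$.

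For $M_i$ I will use \cref{lem:cutoff-stability} together with the contraction bound \eqref{eq:contraction}. Take $c=d\odot v_{i,s-1}$ and $b=d\odot v'_{i,s}$. This gives $F_{i,s-1}\ge e^{-4\lambda}F'_{i,s}$ for every integer $s$. The geometric proposal satisfies $\pi(s-1)=e^{-\lambda}\pi(s)$ for all $s\le r$. Reindexing the countable sum, we get
\[
 M_i=\sum_{u}\pi(u)F_{i,u}\ge\sum_{s\le r}\pi(s-1)F_{i,s-1}
 \ge e^{-5\lambda}\sum_{s\le r}\pi(s)F'_{i,s}=e^{-5\lambda}M'_i.
\]
Lemma~\ref{lem:cutoff-stability} is symmetric in the two profiles, so the reverse inequality also holds. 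The total log-weight change is thus at most $2\lambda+5\lambda=7\lambda$. The only delicate point is that the downward shift $s\mapsto s-1$ must stay inside the support of $\pi$; this is exactly why nonpositive ranks are allowed.

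\emph{Accuracy.} I will first work under \eqref{eq:hidden-batch} and then pay the factor $33$ for conditioning on $\mathcal P$, as in \cref{lem:gibbs}. This requires $\Prob[\mathcal P]\ge1/33$, which in turn requires $\E_Q[G\prod_iM_i]\ge e^{-q/32}$. For each fixed rank vector $(s_i)$, the weight $G\prod_iF_{i,s_i}$ is a product of $2n$ row factors, each bounded below by $T_w$. \Cref{lem:nonprivate-partial} with $2n$ rows then bounds the expectation of this product by $e^{-q/32}$. Averaging over the ranks gives the needed bound on $\E_Q[G\prod_iM_i]$.

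For the head row of agent $i$: on the event where the error exceeds $3w+2r$, the head factor is at most $K_{\lambda,w}(w+2r)\le e^{-2\lambda r}$. Part~(ii) of \cref{lem:row-factor}, applied with $W$ equal to all the other factors (a product of mixtures), then bounds the probability by $(1+q\lambda)e^{-2\lambda r}$.

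For the residual row, let $\mathcal E=\{|(d\odot v_i^*)\cdot z|>3w+2r\}$. For $s\ge1$, \cref{lem:nearby-cutoff} gives $|(d\odot v_{i,s})\cdot z|>3w+2r-s\ge 3w+r$ on $\mathcal E$, because $s\le r$. Hence $F_{i,s}\le e^{-\lambda(2r-s)}\le e^{-\lambda r}$ there. For $s\le0$, use the trivial bound $F_{i,s}\le1$ and the fact that $\Prob[s\le0]=e^{-\lambda r}$. Together these give $M_i\le2e^{-\lambda r}$ on $\mathcal E$. Part~(ii) of \cref{lem:row-factor}, applied with the mixture $F=M_i$, then yields probability at most $2(1+q\lambda)e^{-\lambda r}$.

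Multiplying both bounds by $33$ and using $q\le m$, the total failure probability is at most $(33+66)(1+m\lambda)e^{-\lambda r}=99(1+m\lambda)e^{-\lambda r}$. I expect the main obstacle to be this last step. The rank marginal is tilted by acceptance, so one cannot argue directly that $s_i$ is small with probability $\pi$. Bounding the averaged factor $M_i$ pointwise on $\mathcal E$ avoids this difficulty, and this is precisely the situation the mixture form of \cref{lem:row-factor} was designed for.
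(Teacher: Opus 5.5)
Your proposal is correct and follows the paper's proof essentially step for step. Privacy uses the same adjacent-rank shift with the geometric proposal: $e^{5\lambda}$ from $M_i$, $e^{2\lambda}$ from the head factor, doubled by normalization to $14\lambda$. Accuracy uses the same pointwise bound $M_i\le2e^{-\lambda r}$ on the reference-error event, fed into part~(ii) of \cref{lem:row-factor}, followed by the same factor $33$ for conditioning on $\mathcal P$. The only difference is cosmetic: you get $\Prob[\mathcal P]\ge1/33$ by averaging the acceptance bound of \cref{lem:nonprivate-partial} over rank tuples, while the paper conditions on each rank tuple and applies the progress estimate of \cref{lem:gibbs}, which rests on the same fact.
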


\begin{proof}
All weights are positive and pairs in $\mathcal P$ exist, so the
sampler terminates almost surely. Its progress check guarantees at least
$|T|/4$ nonzero coordinates. We first analyze the unfiltered law
\eqref{eq:hidden-batch}; until conditioning below, probabilities
refer to this law.

\emph{Accuracy.}
Fix agent $i$. A large error in her reference row forces a large
error at every positive rank that can be proposed: the omitted goods
can account for at most $r$ of the error. More precisely, if
$|(d\odot v_i^*)\cdot(\sigma-\tau)|>3w+2r$, then \cref{lem:nearby-cutoff} gives
\[
 |(d\odot v_{i,s})\cdot(\sigma-\tau)|>3w+r
 \qquad\text{for every }1\le s\le r.
\]
The left side is the absolute difference between the two
untransformed row projections. The map $\phi_w$ moves each
projection by at most $w$, so the transformed projections still
differ by more than $w+r$. The exponential tail of $K_{\lambda,w}$
therefore gives $F_{i,s}\le e^{-\lambda r}$ at every such rank.
The only remaining ranks are nonpositive; they have total proposal
mass $e^{-\lambda r}$ and factor one. Averaging these two parts shows
that $M_i\le2e^{-\lambda r}$ on the reference-error event.

Since $M_i$ is a probability mixture of row factors, the event bound
in part~(ii) of \cref{lem:row-factor}, with
$1+|T|\lambda\le1+m\lambda$, gives
\[
 \Prob[|(d\odot v_i^*)\cdot(\sigma-\tau)|>3w+2r]
 \le2(1+m\lambda)e^{-\lambda r}.
\]
For the head row there is no rank averaging or omitted-goods error.
If its error exceeds $3w+2r$, its transformed projections differ by
more than $w+2r$, so its factor in $G$ is at most
$e^{-2\lambda r}$. Removing and restoring that single factor gives
head failure probability at most $(1+m\lambda)e^{-2\lambda r}$.
A union bound gives failure probability at most
$3(1+m\lambda)e^{-\lambda r}$ under the unfiltered law.

\emph{Conditioning on progress.}
With the prescribed width, condition on any rank tuple under
\eqref{eq:hidden-joint}. The pair then has the unfiltered law
\eqref{eq:gibbs} for the $2n$ bounded rows
$(d\odot\one_{H_i},d\odot v_{i,s_i})_{i=1}^n$.
The progress estimate in \cref{lem:gibbs} applies to every such
tuple, including invalid ranks, whose rows are zero. Averaging
over the ranks therefore gives
\[
 \Prob[\mathcal P]\ge1-e^{-|T|/32}\ge\frac1{33}.
\]
Conditioning on $\mathcal P$ increases any event probability by at
most $33$. Applying this once to the preceding accuracy bound gives
the claimed $99(1+m\lambda)e^{-\lambda r}$ bound.

\emph{Privacy.}
Fix neighboring profiles differing in one entry of agent $i$,
and fix $(\sigma,\tau)$. Write primes for the second profile.
The cutoff values can change substantially, so we compare the
unprimed rank $s-1$ with the primed rank $s$.
\Cref{lem:cutoff-stability} expresses the first row as a contraction
of the second, up to an $\ell_1$ perturbation of at most four.
Contraction can only increase a row's weight, and the perturbation
costs at most a factor $e^{4\lambda}$. Thus
\eqref{eq:contraction} gives, for every integer $s$,
\[
 F_{i,s-1}\ge e^{-4\lambda}F'_{i,s}.
\]
This is a comparison at adjacent ranks, whereas $M_i$ and $M'_i$
average over the same proposal law. The geometric probabilities
bridge this difference: $\pi(s-1)=e^{-\lambda}\pi(s)$ for $s\le r$.
Moreover, shifting down by one keeps every rank within the proposal
support. We can therefore drop the nonnegative top term
$\pi(r)F_{i,r}$ from $M_i$ and reindex the rest to obtain
\[
 M_i\ge\sum_{s\le r}\pi(s-1)F_{i,s-1}
     \ge e^{-5\lambda}\sum_{s\le r}\pi(s)F'_{i,s}
     =e^{-5\lambda}M'_i.
\]
The factor $e^{5\lambda}$ accounts for the row perturbation and
the one-step shift of the geometric law. Interchanging the inputs
and applying the same downward shift gives the reverse inequality, so
$M_i'/M_i\in[e^{-5\lambda},e^{5\lambda}]$.

At most two head memberships change, so the perturbation bound in
\cref{lem:gibbs} gives
$G'/G\in[e^{-2\lambda},e^{2\lambda}]$.
All other agents' factors agree. Thus the unnormalized pair weight
changes by at most $e^{7\lambda}$ in either direction. The progress
event $\mathcal P$ is the same on both inputs. Summing this pointwise
comparison over the retained pairs gives the same bound for the
ratio of their normalizing constants. Dividing by these normalizers
costs at most one further factor $e^{7\lambda}$, giving
$14\lambda$-DP for the actual sampler, as in
\cref{lem:exponential-mechanism}. Returning $\sigma-\tau$ is
postprocessing.

Finally, $\mathcal P$ is public and all factors in
\eqref{eq:hidden-batch} are constructed from utilities on $T$ and
the sets $H_i\cap T$. The global reference row $v_i^*$ is used only
to state and analyze the error, so its cutoff does not introduce
any additional dependence in the output law.
\end{proof}

\subsection{The allocation algorithm}\label{sec:real-algorithm}
We supply the hidden-rank sampler to \textsc{Round}, then use the
common partition routine of \cref{prop:private-partition}.
Only the fractional rounder and the deletion guarantee need to
be specified here.

Set
\[
 E=C\left(\sqrt n+\frac{\log^2m}{\varepsilon}
                    \log\frac{64nkm}{\beta}\right),\qquad
 q_0=\lceil3kE\rceil.
\]
Choose the constant in $E$ sufficiently large, as justified below.
If $\lceil m/k\rceil\le\lceil4E+1\rceil$, return a fixed partition
with bundle sizes at most $\lceil m/k\rceil$. Otherwise
$m/k>\lceil4E+1\rceil$, so $m>q_0=\lceil3kE\rceil$.
We can therefore fix each agent's global head set $H_i$ of size $q_0$.

Let $B(V)\in[0,1]^{2n\times m}$ have the head and reference rows
$(\one_{H_i},v_i^*)_{i=1}^n$ from \cref{sec:nonprivate-fairness}.
For public $U\subseteq[m]$, $x\in[0,1]^U$, $0<\varepsilon_0\le1$, and
$0<\beta_0<1/2$, define a rounder
$\mathcal R_V(U,x,\varepsilon_0,\beta_0)$ as follows. Set
\[
 J=\lceil\log_{8/7}m\rceil+1,\qquad
 \lambda=\frac{\varepsilon_0}{16J},\qquad
 r=\left\lceil\frac1\lambda
       \log\frac{99nJ(1+m\lambda)}{\beta_0}\right\rceil.
\]
As in \textsc{Round}, $J$ bounds the number of sampler calls.
For the current fractional coordinates $T\subseteq U$, let
$\mathcal S(T,d)$ run \cref{alg:hidden-threshold} with
upper rank $r$, rate $\lambda$, and the width
\eqref{eq:kernel-width} for $2n$ rows and $|T|$ columns.
The rounder returns $\textsc{Round}(x,\mathcal S)$.
Each local call orders only the active non-head goods, and its
error bound includes the contribution of the omitted goods.
The allocation mechanism returns
$\textsc{PrivatePartition}(\mathcal R_V,k,\varepsilon,\beta,2;m)$.

\begin{proof}[Proof of \cref{thm:real-fairness}]
The fixed-partition branch is $0$-DP and consensus
EF$\lceil m/k\rceil$, with $\lceil m/k\rceil\le\lceil4E+1\rceil=O(E)$.
For the sampling branch, we combine the local guarantee of
\cref{lem:latent-ranks} with \cref{cor:private-rounding-oracle}
to verify the hypotheses of \cref{prop:private-partition}.

\paragraph{Privacy and locality of the rounder.}
A utility change affects local utilities or head membership only
at the changed good and at most one displaced head good.
By \cref{lem:latent-ranks}, each local call is $14\lambda$-DP
with respect to the original profile, conditional on preceding
outputs, and its law agrees on neighboring profiles when the active
set avoids both goods. \Cref{cor:private-rounding-oracle} gives
privacy cost $14J\lambda\le\varepsilon_0$ and preserves this locality.
Thus the rounder has at most $\ell=2$ affected goods. Its outputs
contain no ranks or cutoff values.

\paragraph{Error of the rounder.}
For each agent and each call, \cref{lem:latent-ranks} and the choice
of $r$ bound the probability of error exceeding $3w+2r$ in either
her head or reference row by
\[
 99(1+m\lambda)e^{-\lambda r}\le\frac{\beta_0}{nJ}.
\]
These bounds hold conditional on preceding outputs. Every returned
coloring meets the progress requirement, so a union bound over the
$n$ agents and at most $J$ calls bounds the probability of any error
violation by $\beta_0$. Applying \cref{cor:private-rounding-oracle}
to $B_U(V)$ with additional error $E_0=2r$ gives
\[
 \norm{B_U(V)(y-x)}_\infty
 \le C\sqrt{2n}+2Jr
\]
with probability at least $1-\beta_0$.
Since $\lambda=\varepsilon_0/(16J)\le1$ and $J\le8m$,
\[
 r\le\frac{16J}{\varepsilon_0}
        \log\frac{99nJ(1+m\lambda)}{\beta_0}+1
 \le\frac{CJ}{\varepsilon_0}\log\frac{nm}{\beta_0}.
\]
Using $J=O(\log m)$ therefore gives
\[
 C\sqrt{2n}+2Jr
 \le C\sqrt{2n}
      +\frac{C\log^2m}{\varepsilon_0}\log\frac{nm}{\beta_0}.
\]
The two factors of $J$ come from the rank cutoff in each call and
the number of calls. These estimates are uniform over all head-set
sizes and the resulting reference rows.

\paragraph{From the common partition to fairness.}
\Cref{prop:private-partition} now applies with
$D_0=C\sqrt{2n}$, $D_1(\beta_0)=C\log^2m\log(nm/\beta_0)$, and $\ell=2$.
It gives $\varepsilon$-DP and, with probability at least $1-\beta$,
equal-share error for every head and reference row at most
\[
 \begin{aligned}
 C\left(D_0+\frac{2}{\varepsilon}
             D_1\!\left(\frac{\beta}{k-1}\right)\right)
 &\le C\left(\sqrt{2n}+\frac{2\log^2m}{\varepsilon}
                          \log\frac{nm(k-1)}{\beta}\right)\\
 &\le C\left(\sqrt n+\frac{\log^2m}{\varepsilon}
                          \log\frac{64nkm}{\beta}\right)
 \le E,
 \end{aligned}
\]
where the last inequality is ensured by choosing the constant in $E$
large enough.
Uniformity in the head sets justifies choosing the constant in $E$
before fixing $q_0=\lceil3kE\rceil$.
\Cref{lem:head-deletion} therefore gives consensus
EF$\lceil4E+1\rceil$. The fixed-partition branch gives the cap
$\lceil m/k\rceil$, and \cref{sec:sampling} gives the expected running time.
\end{proof}

\section{Envy-freeness over a public alphabet}\label{sec:alphabet}

For ordinary envy-freeness, we can improve an allocation by changing
who owns each bundle. Combining these ownership changes with private
counts gives a polynomial-time mechanism for a public finite utility
alphabet. The deletion bound depends on the number of allowed values, independently
of their magnitudes or spacing.

Throughout this section, $k=n$ and
\[
 \mathcal V=\{0=\xi_0<\xi_1<\cdots<\xi_{D-1}\}
 \subseteq\mathbb R_{\ge0}
\]
is public and fixed independently of the input. The private utility matrix
is $V\in\mathcal V^{n\times m}$; adjacent inputs differ by an arbitrary
replacement of one entry within $\mathcal V$.
An alphabet that does not contain zero can be enlarged by adding it;
this changes its size by only one and preserves the asymptotic bound.

\begin{theorem}[Entry-private envy-freeness over a public alphabet]
\label{thm:alphabet-ef}
Let $n,m,D\ge2$, $\varepsilon>0$, and $0<\beta<1$ be public.
There is a pure entry-level $\varepsilon$-DP mechanism that always returns
a complete allocation and, with probability at least $1-\beta$, achieves
EF$c$ with $c\le\lceil m/n\rceil$ and
\[
 c=O\!\left(1+\frac{\log m\,(\log D)^{3/2}}{\varepsilon}
                  \log\frac{nmD}{\beta}\right).
\]
It runs in time polynomial in $n,m,D$ in the computation model
of \cref{sec:sampling}.
\end{theorem}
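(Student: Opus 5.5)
We follow the plan from the technical overview: run envy-cycle elimination of \citet{LMMS04} on noisy, monotone lower estimates of bundle values, maintained by continual-observation counters \citep{DNPR10,CSS11}. The goods are processed in a fixed public order $1,\ldots,m$. For each agent $i$, each bundle index $j$, and each threshold $a\in\{1,\ldots,D-1\}$, we keep the stream
\[
 N_{i,j,a}(t)=|\{g\in A_j(t): V_{ig}\ge\xi_a\}|.
\]
Bundles are indexed by creation label, not by owner, so each stream only grows as goods are added. The true value is then $v_i(A_j)=\sum_{a\ge1}(\xi_a-\xi_{a-1})N_{i,j,a}$.

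The privacy accounting goes as follows. An entry change at $(i,g)$ affects only agent $i$'s streams. Fix the transcript of released estimates, and hence the allocation history. Then $g$ enters exactly one bundle $j$ at one time, and its indicator changes in at most $D-1$ threshold streams $N_{i,j,\cdot}$. Each of these has a single increment changed. The binary-tree mechanism gives each increment sensitivity $O(\log m)$ across its tree nodes. Over $D-1$ thresholds the total $\ell_1$ sensitivity is $O(D\log m)$, which is too large.

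To fix this, we instead encode the per-threshold indicators of one good via a binary tree over thresholds as well: release counts of $\{g: V_{ig}\in$ dyadic range of alphabet indices$\}$. A single entry change moves $g$ between two value classes. It therefore affects $O(\log D)$ dyadic classes, each counted in $O(\log m)$ time nodes. Laplace noise of scale $O(\log m\log D/\varepsilon)$ per node then gives $\varepsilon$-DP by the Laplace mechanism and composition over disjoint entries. Each threshold count is a sum of $O(\log D)$ dyadic class counts, and each class count a sum of $O(\log m)$ nodes. A union bound over $nmD\cdot n$ streams and times then gives additive error
\[
 \Delta=O\!\left(\frac{\log m(\log D)^{3/2}}{\varepsilon}\log\frac{nmD}{\beta}\right)
\]
for every threshold count simultaneously, using Laplace-sum concentration with $\sqrt{\log m\log D}$ terms times the $\log m\log D$ scale. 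This is where the $3/2$ exponent comes from.

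Next we make the estimates usable. Define the lower count estimate $\hat N_{i,j,a}(t)=\max_{t'\le t}(\tilde N_{i,j,a}(t')-\Delta)_+$. It is monotone in $t$ and satisfies $N-2\Delta\le\hat N\le N$ on the good event. The value estimate is $\hat v_i(A_j)=\sum_a(\xi_a-\xi_{a-1})\hat N_{i,j,a}$, which is monotone and a lower bound. The key combinatorial step is the following: deleting the $\lceil2\Delta\rceil$ highest-valued goods (for $i$) from $A_j$ reduces every threshold count by $\min\{2\Delta,N_{i,j,a}\}$. Hence $v_i(A_j\setminus R)\le\hat v_i(A_j)$, regardless of the alphabet magnitudes.

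We then run envy-cycle elimination on the estimates. At each step we give the next good to an owner whose bundle nobody envies according to $\hat v$. If none exists, we rotate along an envy cycle in the estimate envy graph. Ownership changes only permute labels, and estimate values never decrease. Consider the moment the last good was added to $A_j$. Its receiver at that time was unenvied, so every $i$ had $\hat v_i(\text{own then})\ge\hat v_i(A_j\setminus\{\text{last}\})$. Afterwards $i$'s own estimate only rises, because cycles increase estimated owner value and additions are monotone, while $A_j$'s content is fixed apart from that last good.

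Combining these gives
\[
 v_i(A_i)\ge\hat v_i(A_i)\ge\hat v_i(A_j\setminus\{g_{\text{last}}\})\ge v_i(A_j\setminus(R\cup\{g_{\text{last}}\})),
\]
so $c=\lceil2\Delta\rceil+1$.

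The main obstacle is the second inequality. The estimate of $A_j\setminus\{g_{\text{last}}\}$ at that time is the estimate stream value just before the insertion, and it must be compared with the final stream. We handle this by freezing: we define $\hat v_i$ of a completed bundle via its stream at its last insertion. The lower-bound property at that time still holds, with the deletion set taken from $A_j$ minus its last good.

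The cap $\lceil m/n\rceil$ comes from taking a fixed balanced partition when $\lceil m/n\rceil\le c$. The running time is polynomial, since there are $O(n^2mD)$ counters and each cycle elimination reduces the number of estimate envy edges, in the standard way.
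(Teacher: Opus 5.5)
Your proposal is correct and takes essentially the paper's route. It runs envy-cycle elimination on monotone lower estimates built from noisy tail counts over a two-dimensional dyadic time--alphabet decomposition with Laplace scale $O(\log m\log D/\varepsilon)$, gets privacy by fixing a common transcript, covers every threshold with one deletion set of size $\lceil 2\Delta\rceil$, and pays one extra good through the last-insertion argument. The differences are minor: you take running maxima of counts rather than of value estimates, which works equally well, and your ``freezing'' step is unnecessary, because the lower-bound guarantee for $A_j\setminus\{g_{\text{last}}\}$ holds directly just before the last insertion and never needs comparison with the final stream.
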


The idea is to adapt the classical envy-cycle elimination algorithm of \citet{LMMS04} so that it works with approximate bundle values. Instead of using the true value of each bundle, the algorithm uses a lower estimate that never decreases over time. As long as the gap between this estimate and the true value can always be covered by deleting at most $c_0$ goods, the usual envy-cycle argument goes through with only one extra deletion: the last good inserted into each bundle, yielding EF$(c_0+1)$.

We use classical tools for differential privacy under continual
observation \citep{DNPR10,CSS11} to maintain these estimates
throughout envy-cycle elimination. After each insertion, we update the
receiving bundle's noisy counts: for each agent, these count the goods
she values at or above each public threshold. Subtracting a uniform error
bound gives valid lower estimates, and taking running maxima makes
them monotone.

\subsection{Envy cycles with monotone lower estimates}\label{sec:alphabet-estimates}

For an integer $c_0\ge0$, let
\[
 v_i^{-c_0}(S)=
 \min_{\substack{R\subseteq S\\|R|\le c_0}}v_i(S\setminus R)
\]
be the value left after deleting the $\min\{c_0,|S|\}$ most valuable goods.
Thus EF$c_0$ means $v_i(A_i)\ge v_i^{-c_0}(A_j)$ for all $i,j$.

Keep $n$ named bundles $P_1,\ldots,P_n$ and record ownership by a
permutation $\sigma$: agent $i$ owns $P_{\sigma(i)}$. Goods are added
to bundles but never move between them. Cycle rotations change only
$\sigma$, so the estimates can follow the same bundles throughout.

For every agent $i$ and bundle $b$, we need an estimate $w_{ib}$ satisfying
\begin{equation}\label{eq:alphabet-monotone-sandwich}
 v_i^{-c_0}(P_b)\le w_{ib}\le v_i(P_b).
\end{equation}
The estimates start at zero and never decrease. They are updated
after insertions and remain fixed between insertions.

Run the envy-cycle procedure of \citet{LMMS04} using these estimates.
Draw a strict envy edge $i\to i'$ when
\[
 w_{i,\sigma(i)}<w_{i,\sigma(i')}.
\]
Before adding a good, eliminate directed cycles by giving each participant
the bundle she points to. Then choose a source, an agent with no incoming
edge, and add the good to her bundle. Use fixed public rules to choose
cycles and sources.

Cycle elimination terminates: while it runs, the estimates of each named
bundle are fixed, and each participant strictly increases her estimate
of her own bundle.
An agent has only $n$ possible bundle estimates, so there are at most
$n(n-1)$ cycle participations between insertions.

\begin{lemma}[Last insertion]\label[lemma]{lem:alphabet-last-insertion}
If the estimates satisfy \eqref{eq:alphabet-monotone-sandwich} and the
update rules above, the procedure returns an EF$(c_0+1)$ allocation.
\end{lemma}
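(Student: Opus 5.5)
We prove EF$(c_0+1)$ by an invariant maintained throughout the procedure, in the style of \citet{LMMS04}. Call a state \emph{good} if for every agent $i$ and every bundle $b$ with $P_b\neq\varnothing$,
\[
 w_{i,\sigma(i)}\ \ge\ v_i^{-c_0}\bigl(P_b\setminus\{g_b\}\bigr),
\]
where $g_b$ is the last good inserted into $P_b$. It is stated with the lower estimate $w_{i,\sigma(i)}$ instead of the true value $v_i(P_{\sigma(i)})$ because the estimate is what drives the edges. At termination, the upper half of \eqref{eq:alphabet-monotone-sandwich} gives $v_i(A_i)\ge w_{i,\sigma(i)}$. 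Since $v_i^{-c_0}(S\setminus\{g\})\ge v_i^{-(c_0+1)}(S)$, deleting $g_b$ together with the $c_0$ most valuable remaining goods then shows EF$(c_0+1)$.

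\textbf{The initial state.} All bundles are empty, so the invariant holds vacuously.

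\textbf{Cycle elimination.} Each participant strictly increases her estimate of her own bundle. Bundle contents, last-inserted goods and all $w$-values are unchanged, since estimates change only at insertions. Every agent's own-bundle estimate therefore weakly increases, and the invariant is preserved.

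\textbf{An insertion.} A good $g$ is added to the bundle $P_{b^*}$ of a source $s$, after which the estimates are updated. All estimates never decrease, so for bundles $b\neq b^*$ the left side weakly increases and the right side is unchanged; the invariant persists. For $b^*$, the new last good is $g$, so the right side becomes $v_i^{-c_0}(P_{b^*}^{\mathrm{old}})$, where $P_{b^*}^{\mathrm{old}}$ is the content before insertion. The owner $s$ gets this from the lower half of \eqref{eq:alphabet-monotone-sandwich} applied before insertion, which gives $w^{\mathrm{old}}_{s,b^*}\ge v_s^{-c_0}(P_{b^*}^{\mathrm{old}})$, together with monotonicity of the estimates. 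For $i\neq s$, the source property says there was no edge $i\to s$ before insertion, i.e. $w^{\mathrm{old}}_{i,\sigma(i)}\ge w^{\mathrm{old}}_{i,b^*}$. This is at least $v_i^{-c_0}(P_{b^*}^{\mathrm{old}})$ by the lower half of the sandwich, and monotonicity carries the inequality to the new estimates. A source exists because cycles were eliminated first.

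\textbf{The main obstacle.} The delicate point is the bookkeeping of when the sandwich and the source property are evaluated: both must be read at the pre-insertion time and then transported forward by monotonicity. The one subtlety is that $w_{i,\sigma(i)}$ refers to the bundle currently owned, and ownership may change later through rotations. This is handled by the observation above that rotations only increase each agent's own-bundle estimate. Termination is as argued before the lemma.
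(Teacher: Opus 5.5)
Your proof is correct and uses the same argument as the paper. At the last insertion into each bundle its owner is a source, so the lower half of \eqref{eq:alphabet-monotone-sandwich} gives every agent an own estimate of at least $v_i^{-c_0}(P_b\setminus\{g\})$; this bound persists because own-bundle estimates never decrease, and the upper half turns it into a bound on true values. The paper fixes each final bundle and argues directly from the time of its last insertion rather than maintaining an invariant over all steps, but the ingredients are identical.
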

\begin{proof}
Each agent's estimate of her own bundle never decreases: a cycle rotation
strictly increases it, and estimate updates can only increase it.

Fix a nonempty final bundle $P_b$, and let $g$ be its last inserted good.
Just before that insertion, its owner was a source. Every agent's own
estimate was therefore at least her estimate of $P_b\setminus\{g\}$.
The latter estimate was at least $v_i^{-c_0}(P_b\setminus\{g\})$ by
\eqref{eq:alphabet-monotone-sandwich}. Since own estimates never decrease
and never exceed true values, the final allocation satisfies
\[
 v_i(A_i)\ge v_i^{-c_0}(P_b\setminus\{g\})
 \qquad\text{for every }i.
\]
Deleting $g$ and at most $c_0$ further goods removes envy of $P_b$.
Empty bundles require no deletions.
\end{proof}

The last-insertion argument, and the running-maximum update below,
use only monotonicity once estimates satisfying
\eqref{eq:alphabet-monotone-sandwich} are available. Additivity enters
in constructing those estimates from tail counts: it gives the
weighted-sum identity and makes deleting the
$\min\{c_0,|S|\}$ most valuable goods truncate every tail count
simultaneously.

\paragraph{From count error to deletions.}
For $1\le\ell<D$, define the tail count by
\[
 C_i(S,\ell)=|\{g\in S:v_i(g)\ge\xi_\ell\}|.
\]
Writing each utility as a sum of consecutive alphabet gaps gives
\begin{equation}\label{eq:alphabet-tail-identity}
 \begin{split}
 v_i(S)&=\sum_{\ell=1}^{D-1}(\xi_\ell-\xi_{\ell-1})C_i(S,\ell),\\
 v_i^{-c_0}(S)&=\sum_{\ell=1}^{D-1}
                    (\xi_\ell-\xi_{\ell-1})(C_i(S,\ell)-c_0)_+,
 \end{split}
\end{equation}
The second identity holds because deleting
the $\min\{c_0,|S|\}$ most valuable goods removes
$\min\{c_0,C_i(S,\ell)\}$ goods at or above every threshold simultaneously.

\begin{lemma}[Tail-count estimates]\label[lemma]{lem:alphabet-tails}
Suppose $E\ge0$ and
$|\widehat C_i(S,\ell)-C_i(S,\ell)|\le E$ for every threshold.
Then, for $c_0=\lceil2E\rceil$, the lower estimate
\[
 \underline v_i(S)=\sum_{\ell=1}^{D-1}
                    (\xi_\ell-\xi_{\ell-1})(\widehat C_i(S,\ell)-E)_+
\]
satisfies
\[
 v_i^{-c_0}(S)\le\underline v_i(S)\le v_i(S).
\]
\end{lemma}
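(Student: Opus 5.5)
The plan is to compare the two sides threshold by threshold, using the weighted-sum identities \eqref{eq:alphabet-tail-identity}. All three quantities, $v_i(S)$, $\underline v_i(S)$ and $v_i^{-c_0}(S)$, are sums over $\ell$ with the same nonnegative coefficients $\xi_\ell-\xi_{\ell-1}$. It therefore suffices to prove, for each threshold $\ell$ separately,
\[
 (C_i(S,\ell)-c_0)_+\le(\widehat C_i(S,\ell)-E)_+\le C_i(S,\ell).
\]
Multiplying by the gaps and summing over $\ell$ then yields both inequalities in the statement.

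For the upper bound, write $C=C_i(S,\ell)$ and $\widehat C=\widehat C_i(S,\ell)$. The hypothesis gives $\widehat C-E\le C$. Since $C\ge0$, this also gives $(\widehat C-E)_+\le\max\{C,0\}=C$.

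For the lower bound, the hypothesis gives $\widehat C-E\ge C-2E\ge C-c_0$, because $c_0=\lceil 2E\rceil\ge2E$. Since $x\mapsto(x)_+$ is nondecreasing, it follows that $(\widehat C-E)_+\ge(C-c_0)_+$.

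There is no real obstacle here. The only point to check is that the second identity in \eqref{eq:alphabet-tail-identity} legitimately expresses $v_i^{-c_0}(S)$ as a sum of truncated counts. The paper has already justified this: deleting the $\min\{c_0,|S|\}$ most valuable goods removes $\min\{c_0,C_i(S,\ell)\}$ goods at every threshold simultaneously, and this is the minimizing deletion set in the definition of $v_i^{-c_0}$. With that identity taken as given, the lemma reduces to the scalar inequalities above.
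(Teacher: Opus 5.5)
Your proof is correct and takes the paper's route: the paper also derives the per-threshold sandwich $(C_i(S,\ell)-c_0)_+\le(\widehat C_i(S,\ell)-E)_+\le C_i(S,\ell)$ from the count error bound, then multiplies by the nonnegative gaps $\xi_\ell-\xi_{\ell-1}$ and sums using \eqref{eq:alphabet-tail-identity}. You additionally spell out the two scalar inequalities, which the paper leaves implicit.
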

\begin{proof}
The count error bound gives
\[
 (C_i(S,\ell)-c_0)_+
 \le(\widehat C_i(S,\ell)-E)_+
 \le C_i(S,\ell).
\]
Multiply by $\xi_\ell-\xi_{\ell-1}$ and sum, using
\eqref{eq:alphabet-tail-identity}.
\end{proof}

One set of at most $c_0$ deleted goods covers every threshold. Neither the number
of thresholds nor the sizes of their gaps multiply this deletion budget.

\paragraph{Making the estimates monotone.}
Suppose the tail counts of each receiving bundle are estimated within
$E$ after every insertion. When bundle $b$ receives a good, compute
its lower estimates and set
\[
 w_{ib}\gets\max\{w_{ib},\underline v_i(P_b)\}
 \qquad\text{for every }i,
\]
starting from $w_{ib}=0$ and leaving all other estimates unchanged.
The new estimate supplies the lower bound in
\eqref{eq:alphabet-monotone-sandwich}. Earlier estimates remain below
the current true value because goods only accumulate. Their maximum
therefore preserves the upper bound too. All the hypotheses of
\cref{lem:alphabet-last-insertion} hold with $c_0=\lceil2E\rceil$.

Thus any counter with uniform error $E$ supplies monotone lower
estimates and, through the envy-cycle procedure, an
EF$(1+\lceil2E\rceil)$ allocation. This deterministic implication
is the only accuracy property needed from the private counter.

\subsection{Adaptive private tail counting}

We now construct the counter that supplies this uniform error bound.
We specialize the multidimensional range construction of
\citet[Section~7.2]{CSS11} to a rectangular time--alphabet grid,
and prove privacy even when earlier counter outputs determine where each good is sent.

\begin{lemma}[Adaptive tail counting]\label[lemma]{lem:alphabet-ranges}
Let $n,m,D\ge2$, $\varepsilon>0$, and $0<\beta<1$.
Fix a public order $g_1,\ldots,g_m$. At time $t$, a fixed public rule
chooses a destination $b_t\in[n]$ from earlier counter outputs,
before reading the utilities of $g_t$. Write
$P_b(t)=\{g_s:s\le t,\ b_s=b\}$.

There is a polynomial-time mechanism \textnormal{\textsc{TailCounter}} that,
after each insertion, provides estimates $\widehat C_{ib\ell}(t)$
of $C_i(P_b(t),\ell)$ for all $i,b\in[n]$ and $1\le\ell<D$.
Its joint output transcript is pure $\varepsilon$-DP under replacement
of one entry of $V\in\mathcal V^{n\times m}$. With probability at least
$1-\beta$, every estimate at every time has error at most
\begin{equation}\label{eq:alphabet-count-accuracy}
 E=\frac{C}{\varepsilon}\left[
   (\log m\,\log D)^{3/2}\sqrt{\log\frac{nmD}{\beta}}
   +\log m\,\log D\,\log\frac{nmD}{\beta}
 \right],
\end{equation}
for a sufficiently large universal constant $C$ fixed in advance.
All noises can be sampled before processing the input. Each count
error is a fixed linear combination of these noises, with coefficients
depending only on $i,b,t,\ell$. Queries may reuse noises, so their
errors can be correlated. In particular, the simultaneous
accuracy event depends only on the noises and holds uniformly over
the adaptive choices of destinations.
\end{lemma}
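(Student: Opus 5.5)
We will build \textsc{TailCounter} from a two-dimensional dyadic (tree) decomposition over the time--alphabet grid, in the spirit of the binary-tree counter of \citet{DNPR10,CSS11}.

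\textbf{The grid and its dyadic decomposition.} For each agent $i$, view her utilities as a $0/1$ array indexed by time $s\in[m]$ and threshold level $\ell\in[D-1]$. The entry $x_i(s,\ell)=\one\{v_i(g_s)\ge\xi_\ell\}$ records whether good $g_s$ reaches threshold $\ell$. Cover $[m]$ by the dyadic intervals of a binary tree of depth $O(\log m)$, and cover $[D]$ likewise with depth $O(\log D)$. For every agent $i$, bundle $b$, dyadic time interval $I$, and dyadic level interval $L$, the mechanism releases the noisy quantity
\[
 N_{ibIL}=\sum_{s\in I,\ b_s=b}\ \sum_{\ell\in L}\Delta_i(s,\ell)+Z_{ibIL},
\]
where $\Delta_i(s,\ell)$ is the indicator that $v_i(g_s)=\xi_\ell$ exactly (the difference form), or equivalently we decompose the suffix $[\ell,D-1]$. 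Any tail count $C_i(P_b(t),\ell)$ is a prefix in time and a suffix in level. It is therefore a sum of at most $O(\log m\log D)$ released rectangle nodes, and the estimate is defined as that sum. Its error is then a fixed $\pm1$ combination of at most $O(\log m\log D)$ noises determined by $(i,b,t,\ell)$, as the lemma requires.

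\textbf{Privacy.} Changing one entry $V_{ig}$ moves one unit of mass between two level indices at the single time of $g$, within agent $i$ and within the single bundle $b_s$. Each level index lies in $O(\log D)$ dyadic level intervals and each time in $O(\log m)$ dyadic time intervals, so for a fixed destination sequence the $\ell_1$ sensitivity is $s=O(\log m\log D)$. Adding $\operatorname{Lap}(s/\varepsilon)$ noise per node then gives $\varepsilon$-DP by the Laplace mechanism. The delicate point is adaptivity: $b_t$ depends on earlier outputs. Following the paper's composition convention, we compare the same output transcript on both inputs, so all destinations are fixed. Conditional on the history, each node value is a deterministic function of data and $b$'s, and the release of node $(I,L)$ at the time its interval completes is a Laplace release. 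Multiplying conditional densities along the transcript, only nodes containing the changed time and agent $i$ differ. Their total shift is at most $s$, so the likelihood ratio is at most $e^{\varepsilon}$. Noises for nodes with $b\ne b_s$ are sampled but never affected. Since each noise is drawn up front, independent of the data, this argument is clean.

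\textbf{Accuracy.} This is the main step to get the stated form. Each estimate error is a sum of $k\le C\log m\log D$ independent $\operatorname{Lap}(s/\varepsilon)$ variables. A Bernstein bound for sums of Laplace variables gives tail $\frac{s}{\varepsilon}\bigl(\sqrt{k\log(1/\delta)}+\log(1/\delta)\bigr)$. Union-bounding over the $n^2mD$ queries with $\delta=\beta/(n^2mD)$ yields exactly \eqref{eq:alphabet-count-accuracy}, since $s\sqrt k=(\log m\log D)^{3/2}$. The event depends only on noises, so it holds uniformly over destinations.

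Running time is polynomial, with $O(n^2 mD\cdot$polylog$)$ nodes. I expect the adaptive privacy bookkeeping, rather than the accuracy calculation, to be the part needing the most care.
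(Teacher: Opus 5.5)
Your proposal is correct and follows essentially the same route as the paper. The paper likewise builds per-(agent, bundle) time--alphabet arrays of exact-value indicators and releases dyadic rectangles with $\operatorname{Lap}(2L/\varepsilon)$ noise once their time interval ends. It proves privacy by evaluating both densities at a common transcript, which fixes all destinations so that at most two cells of one array change, and it gets accuracy from a Laplace-sum concentration bound with a union bound over the $n^2m(D-1)$ queries. The only thing to tidy is your opening definition via the cumulative indicators $x_i(s,\ell)$: the released rectangles must use the exact-value indicators $\Delta_i$, as you then do, because with cumulative indicators one entry change could alter up to $D-1$ cells and destroy the $O(\log m\log D)$ sensitivity bound.
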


\begin{proof}
We encode each tail count as a rectangle sum, and answer these sums
by combining noisy counts from a fixed family of smaller rectangles.
For every pair $(i,b)$, represent the routed input by an
$m\times D$ binary array, with time along the rows and alphabet
positions $0,\ldots,D-1$ along the columns. If $b_t=b$, row $t$
has a single $1$ in the column indexed by the alphabet position of
$v_i(g_t)$; otherwise, the row is zero. Thus the array records the
utilities that agent $i$ assigns to goods sent to bundle $b$.
The count $C_i(P_b(t),\ell)$ is exactly the sum over
$[1,t]\times[\ell,D-1]$: the time prefix selects goods already
inserted, and the alphabet suffix selects utilities at least $\xi_\ell$.

Pad both coordinate ranges to powers of two with public zeros.
In each coordinate, form the binary interval tree by repeatedly
bisecting the full range. Products of a time interval and an alphabet
interval are dyadic rectangles. Set
\[
 L=(1+\lceil\log_2m\rceil)(1+\lceil\log_2D\rceil)
   =O(\log m\,\log D).
\]
Each cell belongs to at most $L$ rectangles, one for each pair of
tree levels. This bounds how many noisy counts a changed cell can
affect. The same quantity also bounds how many noisy counts we need
to answer a query: a prefix or suffix has a disjoint decomposition
into at most one interval per tree level, and taking products of
the two decompositions gives at most $L$ rectangles.

In every array, release each rectangle's exact count plus independent
$\operatorname{Lap}(2L/\varepsilon)$ noise when its time interval ends;
at that point all of its input entries have arrived. Intervals ending
after $m$ are never released. Decompose the query
$[1,t]\times[\ell,D-1]$, extending its alphabet suffix through the
padded zeros, into at most $L$ disjoint dyadic rectangles.
Every time interval in this decomposition ends by $t$, so all of
these rectangle counts are available when the query is made.
Their noisy counts sum to $\widehat C_{ib\ell}(t)$.
Each rectangle receives noise only once: later queries reuse its
released noisy count. This construction uses
$O(n^2mD)$ rectangle counts and polynomial total work.

\emph{Privacy.}
Consider the full transcript of noisy rectangle counts from all
$n^2$ arrays, in a fixed public release order. All tail-count answers
are postprocessing of this transcript, so it suffices to prove privacy
for the rectangle counts. The point to check is that routing depends
on previous answers: changing an entry could change later destinations
in two actual executions.

To compare transcript densities, however, fix neighboring utility
profiles and evaluate both densities at the same full transcript.
At every time, the preceding counter outputs are then the same, so
the public routing rule chooses the same destination on both profiles.
With these destinations fixed, replacing one utility moves a single
$1$ from its old alphabet column to its new column, within the array
for the affected agent and destination. Thus at most two cells of
one array change; every other cell in every array agrees.

For a released rectangle $R$, its exact count is determined by the
input and the preceding transcript: earlier outputs determine the
destinations of all goods in its time interval. If $\Delta_R$ is the
difference between these counts on the two profiles, fresh independent
Laplace noise bounds the conditional density ratio by
$\exp(\varepsilon|\Delta_R|/(2L))$.
Each of the two changed cells belongs to at most $L$ rectangles,
so $\sum_R|\Delta_R|\le2L$ across the entire transcript.
In particular, the bound does not grow with the number of arrays.
Factor each full transcript density into its conditional densities
in release order. Multiplying the bounds on these conditional ratios gives
\[
 \exp\!\left(\frac{\varepsilon}{2L}\sum_R|\Delta_R|\right)
 \le e^\varepsilon.
\]
This holds at every common transcript. Integrating over any measurable
set of transcripts proves pure $\varepsilon$-DP for the full transcript
and hence for the counter's answers, including under adaptive destination
choices.

\emph{Accuracy.}
Let $\mathcal R(t,\ell)$ be the fixed dyadic decomposition of
$[1,t]\times[\ell,D-1]$, extended through the padded alphabet zeros,
and let $Z_{ib,R}$ be the noise on rectangle $R$ in array $(i,b)$.
The exact rectangle counts sum to the true tail count, so
\[
 \widehat C_{ib\ell}(t)-C_i(P_b(t),\ell)
 =\sum_{R\in\mathcal R(t,\ell)}Z_{ib,R}.
\]
This identity holds for every routing history. The rectangle family
depends only on $t,\ell$, so the event that all these indexed noise
sums have absolute value at most $E$ depends only on the noises.
Sampling the noises in advance therefore gives one accuracy event
that covers every adaptive choice of destinations.

Set $\beta_0=\beta/[n^2m(D-1)]$, the failure budget for each query.
Each fixed sum contains at most $L$ distinct, independent Laplace
noises of scale $2L/\varepsilon$. Thus the sum of squared scales is
at most $L(2L/\varepsilon)^2=4L^3/\varepsilon^2$, and the largest
scale is $2L/\varepsilon$. Substituting these two quantities into
the concentration bound of \citet[Corollary~2.9]{CSS11} gives,
with probability at least $1-\beta_0$, error at most
\[
 C\left[
 \sqrt{\frac{4L^3}{\varepsilon^2}\log\frac2{\beta_0}}
 +\frac{2L}{\varepsilon}\log\frac2{\beta_0}\right]
 =\frac{2C}{\varepsilon}\left[
 L^{3/2}\sqrt{\log\frac2{\beta_0}}
 +L\log\frac2{\beta_0}\right].
\]
Union-bound over all agents,
bundles, times, and suffixes. Different queries may share rectangle
noises and have correlated errors; the union bound requires no
independence between queries. Since
$\log(n^2mD/\beta)\le2\log(nmD/\beta)$, substituting
$L=O(\log m\,\log D)$ gives \eqref{eq:alphabet-count-accuracy}.
\end{proof}

\subsection{The allocation mechanism}

\Cref{alg:alphabet-allocation} combines the counter with envy-cycle
elimination and running maxima. The balanced-allocation branch gives
the cap $\lceil m/n\rceil$ by allowing deletion of an entire envied bundle.

\begin{algorithm}[H]
\caption{Private envy-cycle allocation over a public alphabet}
\label[algorithm]{alg:alphabet-allocation}
\small
\begin{algorithmic}[1]
\Require Valuation matrix $V\in\mathcal V^{n\times m}$; public alphabet $\mathcal V$,
  order $g_1,\ldots,g_m$, privacy $\varepsilon$, and failure probability $\beta$.
\Ensure An $\varepsilon$-DP allocation with the guarantee of \cref{thm:alphabet-ef}.
\State Set $E$ by \eqref{eq:alphabet-count-accuracy}.
\If{$\lceil m/n\rceil\le1+\lceil2E\rceil$}
  \State \Return a fixed allocation with bundle sizes at most $\lceil m/n\rceil$.
\EndIf
\State Initialize \textsc{TailCounter} with parameters $\varepsilon,\beta$.
\State $P_1,\ldots,P_n\gets\varnothing$; $\sigma(i)\gets i$ for all $i$;
  $w_{ib}\gets0$ for all $i,b$.
\For{$t=1,\ldots,m$}
  \While{the graph $i\to i'$ iff $w_{i,\sigma(i)}<w_{i,\sigma(i')}$ has a directed cycle}
    \State Choose a cycle $i_1\to\cdots\to i_r\to i_1$ by the public rule.
    \State Simultaneously set $\sigma(i_j)\gets\sigma(i_{j+1})$ for all $j$, with $i_{r+1}=i_1$.
  \EndWhile
  \State Choose a source by the public rule; let $b_t$ be the index of its bundle.
    \label{line:alphabet-destination}
  \State $P_{b_t}\gets P_{b_t}\cup\{g_t\}$.
  \State Read $(v_i(g_t))_{i=1}^n$ and insert it with destination $b_t$ into \textsc{TailCounter}.
  \For{$i=1,\ldots,n$}
    \State Obtain $\widehat C_{i b_t\ell}(t)$ for all $1\le\ell<D$ from \textsc{TailCounter}.
    \State $w_{i b_t}\gets\max\bigl\{w_{i b_t},\,
      \sum_{\ell=1}^{D-1}(\xi_\ell-\xi_{\ell-1})(\widehat C_{i b_t\ell}(t)-E)_+\bigr\}$.
      \label{line:alphabet-update}
  \EndFor
  \State Leave all other estimates unchanged.
\EndFor
\State \Return $A_i=P_{\sigma(i)}$ for each agent $i$.
\end{algorithmic}
\end{algorithm}

The counter's outputs are internal; only the final allocation is published.
Updating only the receiving bundle's estimates in
line~\ref{line:alphabet-update} requires $nm(D-1)$ tail-count queries in total.

\begin{proof}[Proof of \cref{thm:alphabet-ef}]
The fixed-allocation branch is $0$-DP and meets the stated cap.
We analyze the other branch.

\emph{Privacy.}
Every destination is chosen before reading the next utility column,
using only previous counter outputs. Ownership changes and the final
allocation also depend only on this transcript. \Cref{lem:alphabet-ranges}
therefore gives $\varepsilon$-DP by postprocessing.

\emph{Fairness.}
With probability at least $1-\beta$, \cref{lem:alphabet-ranges} bounds
all count errors by $E$. The deterministic reduction in
\cref{sec:alphabet-estimates} therefore gives EF$(1+\lceil2E\rceil)$.

Combining the two branches gives
\[
 c=\min\!\left\{\lceil m/n\rceil,\,1+\lceil2E\rceil\right\}.
\]
To simplify \eqref{eq:alphabet-count-accuracy}, use
$\log m\le\log(nmD/\beta)$ for the first term and
$\log D\ge\log2$ for the second:
\[
 \begin{aligned}
 (\log m\,\log D)^{3/2}\sqrt{\log\frac{nmD}{\beta}}
 &\le\log m\,(\log D)^{3/2}\log\frac{nmD}{\beta},\\
 \log m\,\log D\,\log\frac{nmD}{\beta}
 &\le\frac{\log m\,(\log D)^{3/2}}{\sqrt{\log2}}
       \log\frac{nmD}{\beta}.
 \end{aligned}
\]
Both terms in the error bound are therefore at most a constant times
\[
 \frac{\log m\,(\log D)^{3/2}}{\varepsilon}\log\frac{nmD}{\beta}.
\]
Retaining the additive constant gives the stated bound for every
$\varepsilon>0$.

\emph{Computation.}
The counter runs in polynomial time. There are $nm(D-1)$ tail-count
queries and at most $n(n-1)$ cycle participations between consecutive
insertions. Finding a cycle or source, updating the estimates, and
all remaining operations take polynomial time in $n,m,D$ in the
model of \cref{sec:sampling}.
\end{proof}

\section{Lower bound}\label{sec:lower-bounds}

The following theorem rules out EF$c$ guarantees with constant $c$
with success probability $0.99$, answering a question of
\citet[Section~5]{MS25}.

\begin{theorem}[Entry-private envy-freeness lower bound]\label{thm:ef-lower}
Let $n\ge2$, $0<\varepsilon\le1$, $0<\beta<1/50$, and
$m\ge n\log(n/\beta)/\varepsilon^2$.
If an entry-level $\varepsilon$-DP mechanism returns an EF$c$ allocation
with probability at least $1-\beta$ on every nonnegative additive profile, then
\[
 c=\Omega\!\left(\frac{\log(n/\beta)}{\varepsilon}\right).
\]
The hard profiles have binary utilities, and allocations may be arbitrary.
\end{theorem}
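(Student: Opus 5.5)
Following the overview, I would build a random binary profile and reduce the privacy constraint to an auxiliary product experiment with favorably biased coins. Fix a set $S$ of $N$ goods, with $N$ chosen later of order $\log(n/\beta)/\varepsilon^2$ up to constants, which is allowed since $m\ge n\log(n/\beta)/\varepsilon^2$ leaves room. Every agent values goods outside $S$ at zero. Each entry $V_{ig}$, $g\in S$, is an independent fair coin. Under binary utilities, the deletions agent $i$ needs to remove envy of $A_j$ are exactly $(|A_j\cap S_i|-|A_i\cap S_i|)_+$, where $S_i=\{g:V_{ig}=1\}$. So it suffices to show that, with probability greater than $\beta$, some pair $i,j$ has count gap at least $c_*=\kappa\log(n/\beta)/\varepsilon$ for a small constant $\kappa$.

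First, I would bound the posterior bias of each coin given the allocation. Fix an allocation $A$ and let $p(V)=\Prob[\mathcal M(V)=A]$. Entry privacy gives $p(V)\le e^\varepsilon p(V^{(ig)})$, where $V^{(ig)}$ flips entry $(i,g)$. Hence, conditional on $A$ and all other coins, each coin has posterior probability in $[1/(1+e^\varepsilon),e^\varepsilon/(1+e^\varepsilon)]$.

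Next comes the comparison step. The event "all gaps $<c_*$" is monotone: it is increasing in the coins $(i,g)$ with $g\in A_i$ and decreasing in the coins with $g\in A_j$, $j\ne i$. Flip the coordinates of the second kind so the event becomes increasing. I would then use a coupling: sample the coins sequentially from their conditional laws, each of which is stochastically dominated by an independent Bernoulli with the extreme parameter $\alpha=e^\varepsilon/(1+e^\varepsilon)$. This shows the posterior probability of the event given $A$ is at most its probability $\Prob_\alpha$ under independent coins. In those coins, agent $i$ values her own goods with probability $\alpha$ and other goods with probability $1-\alpha$. This is the most delicate point. I would phrase it via the sequential (chain-rule) dominance, since the conditional law given the past is again a mixture of laws satisfying the same one-coin bound.

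Then I would do the binomial estimate under $\Prob_\alpha$, uniformly over fixed $A$. Let $a_i=|A_i\cap S|$. Let $j^*$ maximize $a_j$, so $a_{j^*}\ge N/n$. Consider the half of the agents $i\ne j^*$ with the smallest $a_i$, each with $a_i\le 2N/n$. For these $\lfloor n/2\rfloor$ agents, the variables $X_i=|A_{j^*}\cap S_i|-|A_i\cap S_i|$ are independent. Each is a difference of binomials with means $(1-\alpha)a_{j^*}$ and $\alpha a_i$. Because $\alpha-1/2=O(\varepsilon)$, the mean gap is at least $a_{j^*}-a_i-O(\varepsilon(a_i+a_{j^*}))$, and the variance is $\Theta(a_i+a_{j^*})$. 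I will choose $N$ so that the relevant variance $\sigma^2$ is of order $\log(n/\beta)/\varepsilon^2$. A reverse (anti-concentration) binomial tail bound then gives $\Prob_\alpha[X_i\ge c_*]\ge \exp(-C(c_*+O(\varepsilon\sigma^2))^2/\sigma^2)$. With $c_*=\kappa\log(n/\beta)/\varepsilon$ this is at least $(\beta/n)^{C\kappa'}$, which exceeds $4\beta/n$ for small constants. Independence across the $n/2$ agents gives failure of all of them with probability at most $(1-4\beta/n)^{n/2}\le e^{-2\beta}<1-\beta$.

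I expect the main obstacle to be balancing the drift against the fluctuation. The $O(\varepsilon\sigma^2)$ bias must be comparable to $c_*$ rather than dominating it. This is exactly why $N$ is tuned so that $\sigma^2\asymp\log(n/\beta)/\varepsilon^2$, and why the constant $\beta<1/50$ is needed to absorb constants. When $a_{j^*}$ is very large, I would restrict attention to a sub-bundle of $A_{j^*}$, or choose $N$ small relative to that. Averaging over $A$ finishes the argument: the probability that every gap is below $c_*$ is at most $1-\beta$ unconditionally, contradicting success probability $1-\beta$ unless $c\ge c_*$.
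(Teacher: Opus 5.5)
Your proposal is correct and follows essentially the paper's own proof. Privacy bounds each coin's posterior given the allocation and any revealed coins (via Bayes and averaging over completions), and a monotone sequential coupling replaces the conditional law by independent coins of extreme bias. The half of the agents with the fewest random goods are then compared with a sub-bundle of the largest bundle of size $\max\{N/n,a_i\}$ (the paper's $r_i=\max\{q,|A_i|\}$). The argument finishes with binomial anti-concentration, which the paper proves as \cref{lem:binomial-envy-tail} by Cantelli plus a Cauchy--Schwarz change of measure, and with independence across agents.

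One slip to fix: your own requirements $\sigma^2\asymp\log(n/\beta)/\varepsilon^2$ and $a_{j^*}\ge N/n$ force $N\asymp n\log(n/\beta)/\varepsilon^2$ random goods in total. That is order $\log(n/\beta)/\varepsilon^2$ per bundle, as in the paper's $m=nq$, not $N\asymp\log(n/\beta)/\varepsilon^2$ overall as you first stated. With the smaller $N$, a balanced allocation would give each comparison variance only about $\log(n/\beta)/(n\varepsilon^2)$.
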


The proof will compare the mechanism's envy gaps with those from
independent binary values: an agent values each own good with
probability $1-p$ and each foreign good with probability $p$, where
$p=(1+e^\varepsilon)^{-1}$. Between two bundles of $r$ goods, the
resulting gap has law $\operatorname{Bin}(2r,p)-r$. The following
lemma bounds the probability of a large positive gap despite this
bias in favor of the own bundle.

\begin{lemma}[A binomial envy tail]\label[lemma]{lem:binomial-envy-tail}
Let $0<\varepsilon\le1$, let $r\ge1$ be an integer, and put
$p=(1+e^\varepsilon)^{-1}$. Then
\[
 \Prob[\operatorname{Bin}(2r,p)-r>\varepsilon r/4]
 \ge\frac1{40}e^{-3\varepsilon^2r}.
\]
\end{lemma}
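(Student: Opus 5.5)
The plan is a change of measure: compare $X\sim\operatorname{Bin}(2r,p)$ with a tilted binomial $X'\sim\operatorname{Bin}(2r,p')$ whose mean sits just above the threshold $r+\varepsilon r/4$. Then bound the likelihood ratio from below on a window of width $O(\sqrt r)$ above that mean, where $X'$ puts constant probability.

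First collect the elementary facts about $p$. Since $1/2-p=\tfrac12\tanh(\varepsilon/2)\in[0,\varepsilon/4]$ and $\varepsilon\le1$, we have $p\ge(1+e)^{-1}>0.26$, so $p(1-p)\ge0.19$. Set
\[
 p'=\frac12+\frac{\varepsilon}{8}+\frac{1}{2r},
\]
so that $\E X'=r+\varepsilon r/4+1$. This requires $p'\le1$, and in fact $p'$ bounded away from $1$, which holds once $r\ge r_0$ for a universal constant $r_0$. For $r<r_0$ I would argue directly from $\Prob[X=2r]=p^{2r}$, with the threshold $r+\varepsilon r/4<2r$. Alternatively I would use a capped tilt. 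The constant $1/40$ may need checking in that range, and I would adjust the constant $r_0$ so the small cases fit.

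Next, the window. Let $I=\{j\in\mathbb Z:\ \E X'-1\le j\le \E X'+2\sqrt{r}\}$. Every $j\in I$ exceeds $r+\varepsilon r/4$ up to the integer rounding, which the extra $+1$ in the mean absorbs. The binomial median lies within one of the mean, so $\Prob[X'\ge\lfloor\E X'\rfloor]\ge1/2$. Since $\operatorname{Var}X'\le r/2$, Chebyshev gives $\Prob[X'>\E X'+2\sqrt r]\le1/8$. Hence $\Prob[X'\in I]\ge3/8$.

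Then the likelihood ratio. For each $j$,
\[
 \log\frac{\Prob[X=j]}{\Prob[X'=j]}
 =-2r\,\mathrm{KL}(p'\Vert p)-(j-2rp')\log\frac{p'(1-p)}{p(1-p')}.
\]
Two estimates control this. First, $\mathrm{KL}(p'\Vert p)\le(p'-p)^2/(p(1-p))$ (the $\chi^2$ bound). Second, the log-odds difference is $O(|p'-p|)$, using that $p'$ is bounded away from $0$ and $1$. With $p'-p\le3\varepsilon/8+1/(2r)$, the first term is about $-1.5\varepsilon^2r-O(1)$. On $I$ we have $|j-2rp'|\le2\sqrt r+1$, so the second term is $O(\varepsilon\sqrt r+1)$, and AM--GM turns it into at most $c\,\varepsilon^2r+O(1)$. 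Summing over $I$ gives
\[
 \Prob\bigl[X>r+\varepsilon r/4\bigr]\ge\tfrac38\,e^{-3\varepsilon^2 r-O(1)}.
\]

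The main obstacle is bookkeeping the constants so that the total exponent is at most $3\varepsilon^2r$ and the prefactor is at least $1/40$, uniformly in $r$ and $\varepsilon$. The $1/(2r)$ shift and the AM--GM slack both feed the additive constant. If they are too large, I would widen the window less, replacing $2\sqrt r$ by $\sqrt r$ and using Cantelli instead of Chebyshev. I would also enlarge $r_0$, since the small-$r$ cases can absorb the loss.
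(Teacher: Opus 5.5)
\textbf{There is a genuine gap: the proposal does not establish the constant $1/40$, and the remedy you name for this does not exist.}

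Your strategy is sound up to constants: tilt to $\operatorname{Bin}(2r,p')$ and bound the likelihood ratio pointwise on a window. The trouble is at small $r$.

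\emph{The small-$r$ fallback fails beyond $r=2$.} The bound $\Prob[X=2r]=p^{2r}$ tends to $4^{-r}$ as $\varepsilon\to0$, and this is below $1/40$ for every $r\ge3$. For example, $r=3$, $\varepsilon=0.1$ gives $p^6\approx0.0115<e^{-0.09}/40\approx0.0228$. So enlarging $r_0$ only makes things worse. The window argument must itself deliver $1/40$ for every $r\ge3$. That is exactly where the additive $O(1)$ terms are largest: the $1/(2r)$ shift, the $2/\sqrt r$ part of the log-odds term, and the AM--GM slack.

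\emph{The window argument, as you specify it, does not deliver $1/40$.} Take $r=6$, $\varepsilon=0.2$. Use the exact values $2r\,\mathrm{KL}(p'\Vert p)\approx0.604$ and $\operatorname{logit}p'-\operatorname{logit}p\approx0.640$, and the integer window $\{7,\dots,12\}$. Its worst point $j=12$ gives $\tfrac38e^{-0.604-4.7\cdot0.640}\approx0.0101$, below the target $e^{-0.72}/40\approx0.0122$.

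Your generic estimates (the $\chi^2$ bound for KL, a Lipschitz logit, AM--GM) leave an additive constant well above $\log 15$. So they prove the inequality only with a much smaller universal constant than $1/40$.

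Your alternative, a Cantelli window of width $\sqrt r$, comes closer but still fails with continuous estimates at $r=3$, $\varepsilon=0.2$: $\tfrac16e^{-2.46}\approx0.0143<e^{-0.36}/40\approx0.0174$. Any completion along these lines needs an integer-level analysis at small $r$, which the proposal does not contain. Separately, your window should start at $\lfloor\E X'\rfloor$, not at $\E X'-1$, since the latter can equal the threshold.

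\textbf{How the paper avoids both problems.}
\begin{itemize}
\item It tilts by $1/(4\sqrt r)$ instead of $1/(2r)$. With $p'=\tfrac12+\tfrac\varepsilon8+\tfrac1{4\sqrt r}\le\tfrac78$ for every $r\ge1$, the mean of $\operatorname{Bin}(2r,p')$ exceeds the threshold by $\sqrt r/2$. Cantelli alone then gives probability at least $1/3$ above $r+\varepsilon r/4$. No median fact and no small-$r$ case are needed.
\item It replaces the pointwise window bound by Cauchy--Schwarz, $\Prob_{p'}[\mathcal E]^2\le\E_p[R^2]\,\Prob_p[\mathcal E]$. The second moment of the likelihood ratio has the closed form $(1+(p'-p)^2/(p(1-p)))^{2r}\le e^{3\varepsilon^2r+4/3}$.
\end{itemize}
Together these give $\tfrac19e^{-4/3}e^{-3\varepsilon^2r}\ge\tfrac1{40}e^{-3\varepsilon^2r}$, uniformly in $r$ and $\varepsilon$.

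The explicit constant matters downstream. Step~3 of the lower-bound proof uses the resulting quantity $\frac n{80}e^{-6\varepsilon^2q}$ to fix its numerical choices.
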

\begin{proof}
We first increase the Bernoulli parameter
so that the desired event has constant probability, then use
Cauchy--Schwarz to bound the cost of changing back. Set
\[
 p'=\frac12+\frac{\varepsilon}{8}+\frac1{4\sqrt r}\le\frac78.
\]
If $X\sim\operatorname{Bin}(2r,p')$, then
$\mathbb E X=r+\varepsilon r/4+\sqrt r/2$ and
$\operatorname{Var}X\le r/2$.
Cantelli's one-sided variance inequality therefore gives
\[
 \Prob[X-r\le\varepsilon r/4]
 =\Prob[X-\mathbb E X\le-\sqrt r/2]
 \le\frac{\operatorname{Var}X}{\operatorname{Var}X+r/4}
 \le\frac{r/2}{r/2+r/4}=\frac23.
\]
Taking the complement yields $\Prob[X-r>\varepsilon r/4]\ge1/3$.

Write $\mathbb E_p$ and $\Prob_p$ for expectation and probability
when $X$ has law $\operatorname{Bin}(2r,p)$.
Let $R$ be the likelihood ratio of the binomial law with parameter
$p'$ to the one with parameter $p$. Independence of the $2r$
Bernoulli draws gives
\[
 \mathbb E_p R^2
 =\left(\frac{(p')^2}{p}+\frac{(1-p')^2}{1-p}\right)^{2r}
 =\left(1+\frac{(p'-p)^2}{p(1-p)}\right)^{2r}.
\]
Since $\varepsilon\le1$, we have
$p\in[1/(1+e),1/2]\subseteq[1/4,1/2]$, so $p(1-p)\ge3/16$.
Also,
\[
 \frac12-p=\int_0^\varepsilon\frac{e^u}{(1+e^u)^2}\,du
 \le\int_0^\varepsilon\frac14\,du=\frac{\varepsilon}{4},
\]
where $(1+e^u)^2\ge4e^u$ bounds the integrand. Consequently,
$p'-p\le3\varepsilon/8+1/(4\sqrt r)$.
Using $1+x\le e^x$ and $(a+b)^2\le2a^2+2b^2$, we obtain
\[
 \log\mathbb E_p R^2
 \le\frac{32r}{3}
       \left(\frac{3\varepsilon}{8}+\frac1{4\sqrt r}\right)^2
 \le3\varepsilon^2r+\frac43.
\]

The event probability under parameter $p'$ is
$\mathbb E_p[R\mathbf1_{\{X-r>\varepsilon r/4\}}]$.
Cauchy--Schwarz bounds its square by
$\mathbb E_p R^2\,\Prob_p[X-r>\varepsilon r/4]$. Hence
\[
 \Prob[\operatorname{Bin}(2r,p)-r>\varepsilon r/4]
 \ge\frac19\exp\!\left(-3\varepsilon^2r-\frac43\right)
 \ge\frac1{40}e^{-3\varepsilon^2r},
\]
where the last step uses $9e^{4/3}<40$.
\end{proof}

\begin{proof}[Proof of \cref{thm:ef-lower}]
For binary utilities, an envy gap greater than $c$ violates EF$c$,
since deleting at most $c$ goods removes at most $c$ units of value.
We will draw a random binary profile and show that, when $c$ is
too small, the probability of such a gap exceeds $\beta$, forcing
some fixed profile to violate the mechanism's guarantee.

Conditioning on the allocation can make the utility entries
dependent. Privacy will let us compare this conditional law with
independent utilities biased in favor of each agent's own bundle.
Even in this experiment, many agents have independent chances
of a large envy gap.

Fix an integer $q\ge1$, to be chosen in Step~3, and set $m=nq$.
The argument extends directly to any $m\ge nq$ by padding with
goods valued at zero by every agent and removing them from the
returned bundles. This induces a mechanism on $nq$ goods with
the same privacy and EF$c$ guarantees: appending zero entries
preserves entry-level adjacency, and discarding zero-valued goods
preserves bundle values and deletion guarantees.

Let $\mathcal M$ be an entry-level $\varepsilon$-DP mechanism on
these $m=nq$ goods with EF$c$ success probability at least $1-\beta$
on every profile. Draw each entry $v_i(g)$, $(i,g)\in[n]\times[m]$,
independently and uniformly from $\{0,1\}$. Write $V$ for this
random profile and $\mathcal A=\mathcal M(V)$ for the random output.
This experiment includes both the randomness of $V$ and the
mechanism's internal randomness.
Our target is an envy gap greater than $\varepsilon q/4$.

\emph{Step 1: use privacy to obtain an independent comparison.}
Fix an allocation $A=(A_1,\ldots,A_n)$ with
$\Prob[\mathcal A=A]>0$, and write $\Prob_A$ for probabilities
conditional on $\mathcal A=A$. Thus the bundles are fixed while
the profile $V$ remains random.
Revealing utility entries below imposes additional conditions on
this same random profile.

Fix one entry $X=v_i(g)$. Partition all other utility entries
$v_j(h)$, with $(j,h)\ne(i,g)$, into two vectors: $Y$ contains
the revealed entries, and $Z$ contains the unrevealed entries.
This partition ranges across all agents and excludes $X$.
The values $Y=y$ specify the utilities at the revealed positions;
$Z=z$ completes the other entries, still leaving $X$ unspecified.

Fixing both $Y=y$ and $Z=z$ specifies every entry except $X$.
Let $V^{(0)}$ and $V^{(1)}$ be the profiles obtained by setting
$X$ to zero and one. Before observing the output, these profiles
are equally likely. They differ in one entry, so privacy gives
\[
 e^{-\varepsilon}
 \le\frac{\Prob[\mathcal M(V^{(1)})=A]}
          {\Prob[\mathcal M(V^{(0)})=A]}
 \le e^\varepsilon.
\]
We consider conditioning events of positive probability; privacy
then makes both likelihoods positive. By Bayes' rule, the ratio
equals the conditional odds that $X=1$. Hence
\[
 p\le\Prob_A[X=1\mid Y=y,Z=z]\le1-p,
 \qquad p=\frac1{1+e^\varepsilon}.
\]

When only $Y=y$ is revealed, the law of total probability gives
\[
 \Prob_A[X=1\mid Y=y]
 =\sum_z \Prob_A[Z=z\mid Y=y]\,
          \Prob_A[X=1\mid Y=y,Z=z].
\]
The sum includes only completions of positive conditional
probability. Each conditional probability
$\Prob_A[X=1\mid Y=y,Z=z]$ lies in $[p,1-p]$, and the weights
$\Prob_A[Z=z\mid Y=y]$ sum to one. Thus the probability after a
partial revelation is a weighted average of probabilities in
$[p,1-p]$, and also lies in that interval.

To make every envy gap nondecreasing in each bit, define
\[
 T_{ig}=
 \begin{cases}
  1-v_i(g),&g\in A_i,\\
  v_i(g),&g\in [m]\setminus A_i.
 \end{cases}
\]
On an agent's own bundle, we swap zero and one: a transformed
one means that her own good is worthless. On a foreign bundle,
a one still means that the good is valuable to her. For $i\ne j$,
\[
 \begin{aligned}
 v_i(A_j)-v_i(A_i)
 &=\sum_{g\in A_j}T_{ig}-\sum_{g\in A_i}(1-T_{ig})\\
 &=\sum_{g\in A_j}T_{ig}+\sum_{g\in A_i}T_{ig}-|A_i|.
 \end{aligned}
\]
Since $|A_i|$ is fixed, changing a transformed bit from zero to
one increases this gap by one if it occurs in either sum, and
leaves it unchanged otherwise. Hence decreasing any transformed
bit can only decrease or leave unchanged every envy gap.

Each transformed bit has conditional probability at least $p$
of being one after other transformed bits are revealed. For an
unchanged bit this uses the original lower bound; for a
complemented bit it uses the upper bound:
\[
 \Prob_A[1-v_i(g)=1\mid\text{revealed entries}]
 =1-\Prob_A[v_i(g)=1\mid\text{revealed entries}]
 \ge p.
\]
Because $A$ specifies which entries were complemented, revealing
a transformed value determines its original value, so the
conditional bounds apply to these revelations too.

Order the agent--good pairs by a fixed rule and write
$T_1,\ldots,T_{nm}$ for the transformed bits in this order.
Generate their conditional law one bit at a time. After generating
$t_1,\ldots,t_{\ell-1}$, let
\[
 \theta_\ell=
 \Prob_A[T_\ell=1\mid T_1=t_1,\ldots,T_{\ell-1}=t_{\ell-1}]
 \ge p.
\]
Draw independent uniform variables $U_1,\ldots,U_{nm}$ in $[0,1]$,
and at each position set
\[
 T_\ell=\one_{\{U_\ell\le\theta_\ell\}},
 \qquad
 \widetilde T_\ell=\one_{\{U_\ell\le p\}}.
\]
The first rule uses the conditional probability given the values
already generated, so it reproduces the conditional law,
including its dependencies. The second uses only the independent
draws and the fixed threshold $p$, giving independent
Bernoulli$(p)$ bits. Whenever $\widetilde T_\ell=1$, we have
$U_\ell\le p\le\theta_\ell$, so $T_\ell=1$ as well. Thus
$\widetilde T_\ell\le T_\ell$ at every position.

Return to agent--good indices and undo the complements to define
the comparison utilities:
\[
 \widetilde v_i(g)=
 \begin{cases}
  1-\widetilde T_{ig},&g\in A_i,\\
  \widetilde T_{ig},&g\in[m]\setminus A_i.
 \end{cases}
\]
These entries are independent, with own values Bernoulli$(1-p)$
and foreign values Bernoulli$(p)$. We evaluate them on the fixed
allocation $A$, without running the mechanism again. The envy
identity and $\widetilde T_{ig}\le T_{ig}$ give, for every $i\ne j$,
\[
 \widetilde v_i(A_j)-\widetilde v_i(A_i)
 \le v_i(A_j)-v_i(A_i).
\]
Thus a large gap under $\widetilde v$ implies a large gap in the
coupled sample from the original conditional law.

\emph{Step 2: find many independent opportunities for envy.}
Keep $A$ fixed and order the agents by increasing $|A_i|$, with
fixed tie-breaking. Let $I$ be the first $\lceil n/2\rceil$ agents,
and let $i_\star$ be the last. Since $n\ge2$, $i_\star\notin I$.
We compare each $i\in I$ with the largest bundle $A_{i_\star}$.

The largest bundle has at least the average $q$ goods.
Each selected bundle has at most $2q$ goods: otherwise it and
all $\lfloor n/2\rfloor$ unselected bundles would each have more
than $2q$ goods, exceeding the total $nq$. For $i\in I$, set
$r_i=\max\{q,|A_i|\}$. Then
\[
 q\le r_i\le2q,\qquad |A_{i_\star}|\ge r_i.
\]
To apply the binomial lemma, reduce each comparison to two lists
of $r_i$ values. Choose $F_i\subseteq A_{i_\star}$ of size $r_i$
by a fixed rule depending only on $A$. On the own side, append
$r_i-|A_i|$ auxiliary Bernoulli$(1-p)$ values
$D_{i,1},\ldots,D_{i,r_i-|A_i|}$. These are independent of all
comparison utilities and of each other, including across agents.
If $r_i=|A_i|$, no auxiliary values are added.
The padding is only for this calculation; the allocation stays
$A$. Define the reduced gap
\[
 \begin{aligned}
 R_i
 &=\sum_{g\in F_i}\widetilde v_i(g)
   -\left(\sum_{g\in A_i}\widetilde v_i(g)
     +\sum_{t=1}^{r_i-|A_i|}D_{i,t}\right)\\
 &\le\widetilde v_i(A_{i_\star})-\widetilde v_i(A_i).
 \end{aligned}
\]
Discarding foreign values and adding own values can only reduce
envy, giving the inequality.

The retained foreign values are $r_i$ independent Bernoulli$(p)$
bits. The padded own values are $r_i$ independent
Bernoulli$(1-p)$ bits, so their zero indicators are independent
Bernoulli$(p)$ bits. The two lists use disjoint utility entries
and independent auxiliary values. The padded own value equals
$r_i$ minus its number of zeros. Therefore
\[
 R_i=(\text{number of foreign ones})
       +(\text{number of padded own zeros})-r_i
\]
has distribution $\operatorname{Bin}(2r_i,p)-r_i$.
The reduced gaps are also independent across agents: each uses
a different utility row and independent padding. Sharing a
foreign good causes no dependence, since $\widetilde v_i(g)$
and $\widetilde v_j(g)$ are independent for $i\ne j$.

In this comparison experiment with $A$ fixed,
\cref{lem:binomial-envy-tail} and $q\le r_i\le2q$ give
\[
 \begin{aligned}
 \Prob[R_i>\varepsilon q/4]
 &\ge\Prob[\operatorname{Bin}(2r_i,p)-r_i>\varepsilon r_i/4]\\
 &\ge\frac1{40}e^{-3\varepsilon^2r_i}
 \ge\frac1{40}e^{-6\varepsilon^2q}.
 \end{aligned}
\]
By independence, $1-x\le e^{-x}$, and $|I|\ge n/2$,
\[
 \begin{aligned}
 \Prob[R_i\le\varepsilon q/4\text{ for every }i\in I]
 &=\prod_{i\in I}\bigl(1-\Prob[R_i>\varepsilon q/4]\bigr)\\
 &\le\exp\!\left(-\sum_{i\in I}\Prob[R_i>\varepsilon q/4]\right)\\
 &\le\exp\!\left(-\frac n{80}e^{-6\varepsilon^2q}\right).
 \end{aligned}
\]
Taking the complement and using the two gap comparisons---from
$R_i$ to $\widetilde v$, then from $\widetilde v$ to the conditional
profile $V$---gives, for every released allocation $A$,
\[
 \Prob_A\!\left[\exists i\ne j:
       v_i(A_j)-v_i(A_i)>\varepsilon q/4\right]
 \ge1-\exp\!\left(-\frac n{80}e^{-6\varepsilon^2q}\right).
\]

\emph{Step 3: choose the scale and conclude.}
To obtain a gap of order $\log(n/\beta)/\varepsilon$ while keeping
the probability of at least one large gap above $\beta$, choose
\[
 q=\left\lfloor\frac{\log(n/\beta)}{1000\varepsilon^2}\right\rfloor.
\]
The theorem ensures that at least $nq$ goods are available.
Suppose first that $q\ge1$, so the construction and padding
argument apply. Since $\varepsilon^2q\le\log(n/\beta)/1000$
and $6/1000<1/64$,
\[
 \frac n{80}e^{-6\varepsilon^2q}
 \ge\frac n{80}(n/\beta)^{-1/64}
 =\frac\beta{80}(n/\beta)^{63/64}
 >\frac98\beta.
\]
The last inequality uses $n/\beta>100$ and $100^{63/64}>90$.
Thus the conditional probability of a gap greater than
$\varepsilon q/4$ exceeds $\beta$, because
\[
 1-e^{-9\beta/8}
 \ge\frac{9\beta}{8+9\beta}>\beta,
\]
using $e^x\ge1+x$ and $\beta<1/50<1/9$.

If $c\le\varepsilon q/4$, this gap violates EF$c$. Averaging over
the outputs $A$ gives joint failure probability greater than
$\beta$. This is also the average of the mechanism's failure
probabilities on the fixed binary profiles, so some profile has
failure probability greater than $\beta$, a contradiction.
Hence $c>\varepsilon q/4$, and $\lfloor x\rfloor\ge x/2$ for
$x\ge1$ gives
\[
 c>\frac{\varepsilon q}{4}
 \ge\frac{\log(n/\beta)}{8000\varepsilon}.
\]
If $q=0$, then
$\log(n/\beta)/\varepsilon<1000\varepsilon\le1000$.
On the original instance, give one good value one for every
agent and all other goods value zero. Every allocation leaves
an agent envying the owner of that good until it is deleted.
Thus $c\ge1\ge\log(n/\beta)/(8000\varepsilon)$, completing the proof.
\end{proof}

In the regime of \cref{tab:results}, the assumption $\beta\ge n^{-a}$
gives $\log(n/\beta)\le(a+1)\log n$. Thus, when
$n\ge(a+1)/\varepsilon^2$,
\[
 m\ge n^2\log n
 \ge\frac{n(a+1)\log n}{\varepsilon^2}
 \ge\frac{n\log(n/\beta)}{\varepsilon^2},
\]
so the theorem applies and gives the claimed $\Omega(\log n)$ bound.
For the remaining $2\le n<(a+1)/\varepsilon^2$, the quantity
$\log n$ is bounded by a constant depending only on the fixed
$a$ and $\varepsilon$. One good valued by every agent forces $c\ge1$,
which gives the same $\Omega(\log n)$ bound for these values of $n$.

\section{Row-level privacy}\label{sec:row-privacy}
An allocation that ignores utilities is automatically row-private.
We choose between a balanced partition and independent uniform
assignment, using only public parameters.

\begin{theorem}[Row-private consensus fairness]\label{thm:row-fairness}
Let $n\ge1$, $m\ge2$, $k\ge2$, and $0<\beta<1/2$ be public.
For nonnegative additive utilities, a row-level $0$-DP
mechanism achieves consensus EF$c$ with probability
at least $1-\beta$, where $c\le\lceil m/k\rceil$ and
\[
 c=O\!\left(\min\left\{
 \left\lceil\frac mk\right\rceil,
 1+\sqrt{\frac mk\log\frac{nk}{\beta}}
 \right\}\right).
\]
The mechanism uses $O(m+k)$ time in the computation model of
\cref{sec:sampling} and does not inspect any utility values.
\end{theorem}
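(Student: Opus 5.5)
The mechanism chooses between two utility-independent allocations using only the public parameters. If $\lceil m/k\rceil\le 1+\sqrt{(m/k)\log(nk/\beta)}$, it returns a fixed balanced partition with all bundle sizes at most $\lceil m/k\rceil$. Otherwise it assigns each good independently and uniformly to one of the $k$ bundles.

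Neither branch reads $V$, so the output law is the same on every pair of inputs. This makes the mechanism $0$-DP under row adjacency, and indeed under any adjacency. Both branches take $O(m+k)$ time.

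In the balanced branch, deleting the entire compared bundle $A_\ell$ removes at most $\lceil m/k\rceil$ goods and leaves value $0\le v_i(A_j)$. This gives consensus EF$\lceil m/k\rceil$ deterministically, and also supplies the cap in the statement.

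For the random branch I would reuse the nonprivate reduction of \cref{lem:head-deletion}. The point is that its hypotheses only concern the $2n$ fixed rows $(\one_{H_i},v_i^*)$, and the allocation is independent of them. Set
\[
 E=C\Bigl(1+\sqrt{\tfrac mk\log\tfrac{nk}{\beta}}\Bigr),\qquad q_0=\lceil 3kE\rceil .
\]
We may assume $q_0\le m$. If instead $q_0>m$, then $\sqrt{(m/k)\log(nk/\beta)}\lesssim m/k$ fails the branch condition in a way that the balanced branch already covers, and the bound only needs to hold up to constants.

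For a fixed row $b\in[0,1]^m$ and a fixed bundle $j$, the sum $b(A_j)=\sum_g b_g\one_{\{g\in A_j\}}$ is a sum of independent $[0,1]$-bounded terms with mean $b([m])/k$ and variance at most $\sum_g b_g^2/k\le m/k$. Bernstein's inequality gives
\[
 |b(A_j)-b([m])/k|\le C\Bigl(\sqrt{(m/k)\log(1/\beta')}+\log(1/\beta')\Bigr)
\]
with probability $1-\beta'$. Take $\beta'=\beta/(2nk)$ and union-bound over the $2n$ rows and $k$ bundles. The term $\log(nk/\beta)$ is the main obstacle, since it need not be dominated by $\sqrt{(m/k)\log(nk/\beta)}$ when $m/k$ is small. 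In that regime, however, $\lceil m/k\rceil\lesssim\log(nk/\beta)\lesssim\sqrt{(m/k)\log(nk/\beta)}$ up to the additive $1$. So I would set the branch threshold so that the random branch is used only when $m/k\ge\log(nk/\beta)$, which makes the Bernstein linear term at most the square-root term.

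On this event both row-balance conditions of \cref{lem:head-deletion} hold with $E$, and $q_0/k\ge 3E$ holds by the choice of $q_0$. The lemma then gives consensus EF$\lceil 4E+1\rceil=O(1+\sqrt{(m/k)\log(nk/\beta)})$.

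One point needs care: $H_i$ and $v_i^*$ depend on $V$. Since the allocation is drawn independently of $V$, we can condition on $V$, so the rows are fixed and the concentration argument applies. Combining the branches gives the stated minimum.
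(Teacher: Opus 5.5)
Your route is the paper's: a utility-independent choice between a balanced partition and independent uniform assignment, Bernstein's inequality on the $2n$ rows $(\one_{H_i},v_i^*)$ with a union bound over rows and bundles, and then \cref{lem:head-deletion}. The gap is the branch threshold, which you state once and then amend only informally.

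Under your stated rule, the random branch runs iff $\lceil m/k\rceil>1+\sqrt{(m/k)\log(nk/\beta)}$, and $E=C(1+\sqrt{(m/k)\log(nk/\beta)})$ with a large $C$. The random branch can then run with $3kE>m$ (for example, $m/k=2\log(nk/\beta)$). In that case head sets of size $q_0=\lceil3kE\rceil$ do not exist. Saying ``we may assume $q_0\le m$'' does not help once a fixed public rule has committed to that branch.

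The same miscalibration leaves the exact cap $c\le\lceil m/k\rceil$, which is part of the statement, unproved. The balanced branch supplies the cap only for itself. Your random-branch guarantee is $\lceil4E+1\rceil$, which can exceed $\lceil m/k\rceil$. This is not merely cosmetic. For identical unit utilities, consensus EF$c$ forces $c$ to be at least the spread of bundle sizes. For suitable parameters (e.g. $n=1$, fixed $\beta$, $k$ large, $m/k=2\log(nk/\beta)$), uniform assignment makes that spread exceed $m/k$ with high probability.

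The paper avoids this with a single threshold. It uses the balanced partition exactly when $\lceil m/k\rceil\le4T$, where $T=\lceil2\sqrt{(m/k)\log(4nk/\beta)}\rceil$, and applies \cref{lem:head-deletion} with $E=T$ and $q_0=3kT$. In your notation, take the balanced branch iff $\lceil m/k\rceil\le\lceil4E+1\rceil$. The random branch then has $m/k>4E+1$, which gives three things:
\begin{itemize}
\item $q_0<m$, so the head sets exist;
\item the guarantee $\lceil4E+1\rceil<\lceil m/k\rceil$, so the cap holds;
\item $m/k>\sqrt{(m/k)\log(nk/\beta)}$, i.e. $m/k>\log(nk/\beta)$, so Bernstein's linear term is absorbed without your separate condition $m/k\ge\log(nk/\beta)$.
\end{itemize}

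Relatedly, your chain $\lceil m/k\rceil\lesssim\log(nk/\beta)\lesssim\sqrt{(m/k)\log(nk/\beta)}$ is false in its second step precisely when $m/k<\log(nk/\beta)$. The correct observation is that $m/k<\log(nk/\beta)$ implies $m/k<\sqrt{(m/k)\log(nk/\beta)}$. With these repairs, the rest of your argument goes through.
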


Set
\[
 T=\left\lceil2\sqrt{\frac{m}{k}\log\frac{4nk}{\beta}}\right\rceil.
\]
The mechanism returns a fixed partition with bundle sizes at most
$\lceil m/k\rceil$ if $\lceil m/k\rceil\le4T$; otherwise it assigns
each good independently and uniformly to one of the $k$ bundles.

\begin{proof}
Both branches ignore utilities, so the mechanism is row-level $0$-DP,
and each takes $O(m+k)$ time. The fixed partition is consensus
EF$\lceil m/k\rceil$ by deleting an entire envied bundle.

In the random branch, $\lceil m/k\rceil>4T$. Since $4T$ is an
integer, this implies $m/k>4T$. This branch condition lets us
use a threshold with only a square-root term in Bernstein's bound.
For any fixed row $b\in[0,1]^m$ and bundle $j$, the sum
$b(A_j)=\sum_g b_g\one_{\{g\in A_j\}}$ has mean $b([m])/k$
and variance at most $\sum_g b_g^2/k\le m/k$.
Its summands are independent and lie in $[0,1]$. Since $T<m/k$,
Bernstein's inequality and the definition of $T$ give
\[
 \Prob\!\left[\left|b(A_j)-\frac{b([m])}{k}\right|>T\right]
 \le2\exp\!\left(-\frac{T^2}{2m/k+2T/3}\right)
 \le2\exp\!\left(-\frac{kT^2}{4m}\right)
 \le\frac{\beta}{2nk}.
\]

For each agent, take her top $3kT$ goods as head goods; this is
possible because $m>4kT$. The resulting $2n$ head and reference
rows $(\one_{H_i},v_i^*)_{i=1}^n$ are defined as in
\cref{sec:nonprivate-fairness} and used only in the analysis. A union
bound over these rows and the $k$ bundles shows that, with probability
at least $1-\beta$, each bundle's sum in each row lies within $T$ of
that row's average per bundle. Apply \cref{lem:head-deletion} with
$E=T$: each deletion set has size at most
$3kT/k+T=4T$, proving consensus EF$(4T)$.

Thus the mechanism achieves
\[
 c=\min\{\lceil m/k\rceil,4T\}
 \le\min\left\{\left\lceil\frac{m}{k}\right\rceil,
                4+8\sqrt{\frac{m}{k}\log\frac{4nk}{\beta}}\right\}.
\]
Since $\log(4nk/\beta)=\Theta(\log(nk/\beta))$, this proves the
stated bound, including the exact cap $c\le\lceil m/k\rceil$.
\end{proof}

\begin{corollary}[Optimal order under row-level privacy]\label[corollary]{cor:row-optimal}
Fix a constant $a>0$. For $m\ge n\ge2$, $0<\varepsilon\le1$, and
\[
 n^{-a}\le\beta<\frac{e^{-\varepsilon}}{200},
\]
the optimal worst-case deletion bound under row-level
$\varepsilon$-DP, with success probability at least $1-\beta$, is
\[
 \Theta\!\left(\min\left\{\frac mn,\sqrt{\frac mn\log n}\right\}\right).
\]
This holds for nonnegative additive utilities, both for ordinary
EF$c$ and for consensus EF$c$ with $n$ bundles.
The upper bound is achieved with privacy parameter zero;
its constant may depend on $a$.
\end{corollary}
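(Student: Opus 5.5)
The plan is to get the upper bound from \cref{thm:row-fairness} with $k=n$ and the lower bound from \citetalias{MS25}, Thm.~3.1, and to check that the two agree up to constants in the stated parameter range.

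\emph{Upper bound.} Apply \cref{thm:row-fairness} with $k=n$. Since $\beta\ge n^{-a}$ and $n\ge2$, we have
\[
 \log\frac{n^2}{\beta}\le(a+2)\log n,
\]
so the theorem gives
\[
 c=O_a\!\left(\min\left\{\left\lceil\frac mn\right\rceil,\,1+\sqrt{\frac mn\log n}\right\}\right).
\]
Because $m\ge n$, we have $\lceil m/n\rceil\le2m/n$. The additive $1$ is absorbed, since $\sqrt{(m/n)\log n}\ge\sqrt{\log2}$ and $m/n\ge1$. This yields $O_a(\min\{m/n,\sqrt{(m/n)\log n}\})$ for consensus EF$c$ with $n$ bundles, with privacy parameter zero. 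As recalled in the preliminaries, consensus EF$c$ with $k=n$ implies ordinary EF$c$, so the same bound holds for ordinary EF$c$.

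\emph{Lower bound.} Invoke \citetalias{MS25}, Thm.~3.1, which gives $c=\Omega(\min\{m/n,\sqrt{(m/n)\log n}\})$ for ordinary EF$c$ under row-level $\varepsilon$-DP. Its hypotheses presumably include a failure-probability condition like $\beta<e^{-\varepsilon}/200$ and $\varepsilon\le1$; that is why those constraints appear in the statement, and checking that they match is the main thing to verify. A consensus EF$c$ mechanism with $n$ bundles is in particular an EF$c$ mechanism, so the lower bound transfers to the consensus notion.

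\emph{Main obstacle.} The hardest step is bookkeeping: matching the exact regime and normalization of the cited lower bound (the range of $m$ relative to $n$, and whether it holds at every $m\ge n$ or only for large $m$) with ours. If it needs $m$ large, the small-$m$ regime must be handled separately. When $m/n$ is small, the minimum is $\Theta(m/n)$. At the smallest value this is $\Theta(1)$, and there the single-good instance gives $c\ge1$, as in the remark after \cref{thm:ef-lower}.
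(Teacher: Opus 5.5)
Your proposal is correct and matches the paper's proof. For the upper bound, both use \cref{thm:row-fairness} with $k=n$ and $\log(n^2/\beta)=O_a(\log n)$, absorbing the ceiling and the additive constant via $m/n\ge1$. For the lower bound, both cite \citetalias{MS25}, Thm.~3.1, and transfer it to the consensus notion because consensus EF$c$ implies ordinary EF$c$.

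The hypothesis you left as ``presumably'' is exactly the one the paper uses. That theorem applies to mechanisms that are EF$c$ with probability strictly above $1-e^{-\varepsilon}/200$ on every binary instance, and $\beta<e^{-\varepsilon}/200$ guarantees this. The paper invokes it directly for all $m\ge n$, so no separate small-$m$ argument is needed.
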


\begin{proof}
Apply \cref{thm:row-fairness} with $k=n$.
The assumption $\beta\ge n^{-a}$ gives
$\log(n^2/\beta)=O_a(\log n)$.
Since $m/n\ge1$ and $n\ge2$, the ceiling and additive constant are
absorbed, giving the stated upper bound.

For the matching lower bound, \citet[Theorem~3.1]{MS25} show that a
row-level $\varepsilon$-DP mechanism that is EF$c$ with probability
strictly above $1-e^{-\varepsilon}/200$
on every binary instance must have
\[
 c=\Omega\!\left(\min\left\{\frac mn,\sqrt{\frac mn\log n}\right\}\right).
\]
Our assumption on $\beta$ gives this probability requirement.
Consensus EF$c$ implies ordinary EF$c$, so the same lower bound
applies to both notions.
\end{proof}

\appendix

\section{Sampling and running time}\label[appendix]{sec:sampling}
We count exact arithmetic, comparisons, floor operations, logarithms,
square roots, exponentials, and uniform or Laplace draws as unit-cost operations.
The polynomial running time of the alphabet mechanism is proved in
\cref{sec:alphabet}. Here we bound the cost of the rejection samplers
in \cref{sec:model,sec:fairness}.

For a sampler call on $q$ goods, recall that $Q$ is the uniform law
on pairs $(\sigma,\tau)\in\{-1/2,1/2\}^q\times\{-1/2,1/2\}^q$.
\Cref{alg:partial-coloring,alg:hidden-threshold} accept a proposal
with probability equal to its weight. Write $W(\sigma,\tau)$ for the
acceptance probability of a proposed pair, averaged over the proposed
ranks in the fairness sampler, and set $W=0$ when the progress
check fails. Each independent trial succeeds with
probability $\E_QW$. Summing over the possible numbers of preceding
rejections gives the probability of returning a fixed pair:
\[
 \sum_{j\ge1}(1-\E_QW)^{j-1}Q(\sigma,\tau)W(\sigma,\tau)
 =\frac{Q(\sigma,\tau)W(\sigma,\tau)}{\E_QW}.
\]
Thus the accepted pair has the normalized weighted proposal law.
Similarly, the probability of needing more than $j$ trials is
$(1-\E_QW)^j$, so
\[
 \E[\text{number of trials}]
 =\sum_{j\ge0}(1-\E_QW)^j=\frac1{\E_QW}.
\]

\paragraph{Evaluating a proposal.}
Checking progress costs $O(q)$ operations.
The explicit kernel in \cref{sec:partial-colorings} takes a constant
number of operations to evaluate. A trial forms the row sums and
evaluates one factor per row, using $O(nm)$ operations.
For the fairness sampler, a trial evaluates the head factors and
the residual factors at the proposed ranks; their averages $M_i$
over the cutoff ranks are used only in the analysis.

\paragraph{Acceptance probability.}
First omit the progress check. Starting from the uniform pair law,
add the rows one at a time.
By part~(i) of \cref{lem:row-factor}, each addition multiplies the normalizer
by at least $(1+q\lambda)^{-1}$. Thus for
$B\in[0,1]^{n\times q}$ and $\lambda,w>0$,
\begin{equation}\label{eq:normalizer}
 Z_B(\lambda,w)\ge(1+q\lambda)^{-n}.
\end{equation}
Part~(ii) gives the same cost for a probability mixture of row
factors. For the fairness sampler, applying this bound to the $n$ head
factors and the $n$ averaged residual factors $M_i$ in
\eqref{eq:hidden-batch} gives acceptance probability at least
$(1+q\lambda)^{-2n}$, including invalid cutoff ranks.
With the prescribed widths, the conditioning arguments in
\cref{lem:gibbs,lem:latent-ranks} show that the progress check retains
at least $1/33$ of this accepted mass. It therefore increases the
expected number of trials by at most a factor $33$.

\paragraph{Total running time.}
Every call in \cref{sec:model,sec:fairness} has
$\lambda\le1$ and $q\le m$. The preceding acceptance bounds give
at most $33(1+m)^n$ expected trials for discrepancy and
$33(1+m)^{2n}$ for consensus fairness.
The $O(k)$ recursive splits each use $O(\log m)$ partial-coloring calls.
Multiplying these call counts by the $O(nm)$ cost per trial gives
expected time $O(nmk\log m\,(1+m)^n)$ for discrepancy.

For consensus fairness, sort each agent's non-head goods once at
the start, then filter that order by the current active set in
$O(nm)$ operations per call. This gives exactly the ordering used
by the local sampler. Recall that $r$ is the public upper rank and
$\pi(s)=(1-e^{-\lambda})e^{-\lambda(r-s)}$ for integers $s\le r$.
To draw a rank from this law exactly, take a uniform
$R\in(0,1)$ and set
\[
 s=r-\left\lfloor-\frac{\log R}{\lambda}\right\rfloor.
\]
For every integer $j\ge0$,
\[
 \Prob\!\left[\left\lfloor-\frac{\log R}{\lambda}\right\rfloor\ge j\right]
 =\Prob[R\le e^{-\lambda j}]=e^{-\lambda j}.
\]
Subtracting successive tails gives
\[
 \Prob[s=r-j]
 =e^{-\lambda j}-e^{-\lambda(j+1)}
 =(1-e^{-\lambda})e^{-\lambda j}
 =\pi(r-j).
\]
Hence $s$ has the required geometric rank law.
Drawing the ranks and evaluating a proposal costs $O(nm)$ operations.
There are $O(k\log m)$ calls, so consensus fairness takes expected time
$O(nmk\log m\,(1+m)^{2n})=m^{O(n)}\operatorname{poly}(n,k)$.

\section*{Disclosure of AI use}
Most proofs in this paper were obtained using ChatGPT 5.6 Sol and
ChatGPT 6 Astra. We subsequently verified the proofs for correctness
and refined their exposition and arguments, also with the aid of
ChatGPT.

\end{document}